\newcommand{\opt}{\textsc{opt}}
\newcommand{\by}[1]{\text{by #1}}
\newcommand{\floor}[1]{\left\lfloor#1\right\rfloor}
\newcommand{\ceil}[1]{\left\lceil#1\right\rceil}
\newcommand{\polylog}{\mathrm{polylog}}
\newcommand{\group}[1]{\langle #1 \rangle}
\newcommand{\jobgroup}[1]{\llbracket #1 \rrbracket}
\newcommand{\delay}[1]{\rho_{#1}}
\newcommand{\delaymax}{\delay{\max}}
\newcommand{\groupdelay}[1]{\bar{\rho}_{#1}}
\newcommand{\groupspeed}[1]{\bar{s}_{#1}}
\newcommand{\groupsize}[1]{\bar{m}_{#1}}
\newcommand{\delayin}[1]{\rho^{\mathrm{in}}_{#1}}
\newcommand{\delayout}[1]{\rho^{\mathrm{out}}_{#1}}
\newcommand{\ddelay}[1]{d_{#1}}
\newcommand{\ddelayin}[1]{d^{\text{in}}_{#1}}
\newcommand{\ddelayout}[1]{d^{\text{out}}_{#1}}
\newcommand{\dup}{\mathrm{dup}}
\newcommand{\LP}[1]{%
    \ifthenelse{\equal{#1}{a}}{LP\textsubscript{$\alpha$}}{LP\textsubscript{$#1$}}}
\newcommand{\LPimp}[1]{%
    \ifthenelse{\equal{#1}{a}}{$\text{LP}^*_{\alpha}$}{$\text{LP}^*_{#1}$}}
\newcommand{\Calpha}{C^{*}_{\alpha}}
\newcommand{\Csigma}{C_{\sigma}}
\newcommand{\C}[1]{C^*_{#1}}
\newcommand{\makespan}[1]{\Delta_{#1}}
\newcommand{\I}{\mathcal{I}}
\newcommand{\Vlong}{V_{\mathrm{long}}}
\newcommand{\smd}{\textsc{smd}}
\newcommand{\smdjd}{\textsc{smdjd}}
\newcommand{\mdps}{\textsc{mdps}}
\newcommand{\udps}{\textsc{udps}}
\newcommand{\umps}{\textsc{umps}}
\newtheorem{theorem}{Theorem}
\newtheorem{lemma}{Lemma}[section]
\newtheorem{corollary}{Corollary}[theorem]
\newtheorem{claim}{Claim}[lemma]
\newtheorem{thmclaim}{Claim}[theorem]
\newtheorem{definition}[lemma]{Definition}
\theoremstyle{definition}
\newtheorem{alg}[algocf]{Algorithm}
\definecolor{light-gray}{gray}{0.925}
\newmdtheoremenv[
linewidth=0pt,
backgroundcolor=light-gray,
linecolor=light-gray,
innertopmargin=0
]
{result}{Theorem}
\newcommand{\linearprogram}[4]{
\noindent
\begin{minipage}[t]{#1 \linewidth}
    \begin{flalign}
        #3
    \end{flalign}
\end{minipage}\hfill
\begin{minipage}[t]{#2 \linewidth}
    \begin{flalign}
        #4
    \end{flalign}
\end{minipage}
\vspace{4mm}
}
\let\llncssubparagraph\subparagraph
\let\subparagraph\paragraph
\let\subparagraph\llncssubparagraph
\newcommand{\comment}[1]{\textcolor{blue}{#1}}
\newcommand{\junk}[1]{}
\newcommand{\BfPara}[1]{\noindent \textbf{#1.}}
\title{Scheduling under Non-Uniform  Job and Machine  Delays}
\author{
Rajmohan Rajaraman\thanks{Northeastern University, Boston, MA, USA. Email: \texttt{r.rajaraman@northeastern.edu}} 
\hspace{10mm}
David Stalfa \thanks{Northeastern University, Boston, MA, USA. Email: \texttt{stalfa.d@northeastern.edu}} 
\hspace{10mm} 
Sheng Yang \thanks{Independent Researcher. Email: \texttt{styang@fastmail.com}}
}
\date{}
\begin{document}

\maketitle

\begin{abstract}
\noindent
We study the problem of scheduling precedence-constrained jobs on heterogenous machines in the presence of non-uniform job and machine communication delays. We are given as input a set of $n$ unit size precedence-ordered jobs, and a set of $m$ related machines each with size $m_i$ (machine $i$ can execute at most $m_i$ jobs at any time). Each machine $i$ has an associated in-delay $\rho^{\mathrm{in}}_i$ and out-delay $\rho^{\mathrm{out}}_i$. Each job $v$ also has an associated in-delay  $\rho^{\mathrm{in}}_v$ and out-delay $\rho^{\mathrm{out}}_v$. 
In a schedule, job $v$ may be executed on machine $i$ at time $t$ if each predecessor $u$ of $v$ is completed on $i$ before time $t$ or on any machine $j$ before time $t - (\rho^{\mathrm{in}}_i + \rho^{\mathrm{out}}_j + \rho^{\mathrm{out}}_u + \rho^{\mathrm{in}}_v)$. The objective is to construct a schedule that minimizes makespan, which is the maximum completion time over all jobs. 

We consider schedules which allow duplication of jobs as well as schedules which do not.  When duplication is allowed, we provide an asymptotic $\mathrm{polylog}(n)$-approximation algorithm. This approximation is further improved in the setting with uniform machine speeds and sizes. Our best approximation for non-uniform delays is provided for the setting with uniform speeds, uniform sizes, and no job delays. For schedules with no duplication, we obtain an asymptotic $\mathrm{polylog}(n)$-approximation for the above model, and a true $\mathrm{polylog}(n)$-approximation for symmetric machine and job delays. These results represent the first polylogarithmic approximation algorithms for scheduling with non-uniform communication delays. 

Finally, we consider a more general model, where the delay can be an arbitrary function of the job and the machine executing it: job $v$ can be executed on machine $i$ at time $t$ if all of $v$'s predecessors are executed on $i$ by time $t-1$ or on any machine by time $t - \rho_{v,i}$.  We present an approximation-preserving reduction from the Unique Machines Precedence-constrained Scheduling (\textsc{umps}) problem, first defined in [DKRSTZ22], to this job-machine delay model. The reduction entails logarithmic hardness for this delay setting, as well as polynomial hardness if the conjectured hardness of \textsc{umps} holds.

This set of results is among the first steps toward cataloging the rich landscape of problems in non-uniform delay scheduling. 
\end{abstract}

\thispagestyle{empty}
\newpage

% \paragraph{Outline}

% abstract -- include something about heterogeneity of machines as being a central contribution

% Sheng affiliation

% \begin{enumerate}
%     \item Introduction

%     techniques (overview): reduction to in-delays; define LP (via partition); LP rounding; no-duplication. 

%     Can we use FOCS method with just non-uniform delays?
    
%     Why can't we just change FOCS LP to include additive delays and apply that rounding?  No notion of machine ordering in general case

%     Why can't we change the FOCS LP to include additive delays and apply our rounding? Delay increases by too much

%     Why can't we change the FOCS LP to include additive delays and group variables and apply our rounding? 

%     revisit problem names, make sure that terms are used consistently

%     \item Job, Machine Delays

%     finish theorems
    
%     specify what OPT represents
    
%     \begin{enumerate}
%         \item Combinatorial Job Delays (done)
%     \end{enumerate}
%     \item Extend to related, multiprocessor machines

%     finish theorems
    
%     \item No-duplication (done)
%     \item reduction from umps to job-machine pairwise delays (done)
%     \item reduction to in-delays (done)
%     \item uniform delay scheduling

%     rewrite to take delay as input
% \end{enumerate}
% \thispagestyle{empty}
% \newpage

\pagenumbering{arabic}

\section{Introduction}
With the increasing scale and complexity of scientific and data-intensive computations, it is often necessary to process workloads with many dependent jobs on a network of heterogeneous computing devices with varying computing capabilities and communication delays.  
%The heterogeneity among devices is reflected in their differing computing capabilities and networking constraints.  
For instance, the training and evaluation of neural network models, which are iterated directed acyclic graphs (DAGs), is often distributed over diverse devices such as CPUs, GPUs, or other specialized hardware. This process, commonly referred to as \emph{device placement}, has gained significant interest~\cite{mirhoseini2017device,mirhoseini:hierPlace,gao2018spotlight,hafeez2021towards}.  Similarly, many scientific workflows are best modeled as DAGs, and the underlying high-performance computing system as a heterogeneous networked distributed system with communication delays~\cite{andrio2019bioexcel,versluis2018analysis,rico-gallego+dml:hpc}. 

Optimization problems associated with scheduling under communication  delays  have  been  studied  extensively, but  provably  good approximation  bounds  are  few  and  several  challenging  open  problems remain~\cite{ahmad+k:schedule,bampis+gk:schedule,darbha+a:schedule,hoogeveen+lv:schedule,LR02,munier+h:schedule,munier1999approximation,palis1996task,papadimitriou1990towards,picouleau1991two,rayward1987uet}.  With a communication delay, scheduling a DAG of uniform size jobs on identical machines is already
NP-hard~\cite{rayward1987uet,picouleau1991two}, and several inapproximability
results are known~\cite{bampis+gk:schedule,hoogeveen+lv:schedule}.
However, the field is still underexplored and scheduling under communication delay was listed as one of the top ten open problems in scheduling surveys~\cite{bansal:survey,schuurman+w:survey}.  While there has been progress on polylogarthmic-approximation algorithms for the case of uniform communication delays~\cite{LR02,davies2020scheduling,maiti_etal.commdelaydup_FOCS.20,Liu_etal.linear.22}, little is known for more general delay models.

This paper considers the problem of scheduling precedence-constrained jobs on machines connected by a network with \emph{non-uniform} communication delays.  In general, the delay incurred in communication between two machines could vary with the machines as well as with the data being communicated, which in turn may depend on the jobs being excuted on the machines.  For many applications, however, simpler models suffice.  For instance, the machine delays model, where the communication between two machines incurs a delay given by the sum of latencies associated with the two machines, is suitable when the bottleneck is primarily at the machine interfaces.  On the other hand, job delays model scenarios where the delay incurred in the communication between two jobs running on two different machines is a function primarily of the two jobs.  This is suitable when the communication is data-intensive.  Recent work in~\cite{davies_etal.nonuniform_SODA.22} presents a hardness result for a model in which any edge of the DAG separating two jobs running on different machines causes a delay, providing preliminary evidence that obtaining sub-polynomial approximation factors for this model may be intractable.  Given polylogarithmic approximations for uniform delays, a natural question is which, if any, non-uniform delay models are tractable.  

\subsection{Overview of our results}
A central contribution of this paper is to explore and catalog a rich landscape of problems in non-uniform delay scheduling.  We present polylogarithmic approximation algorithms for several models with non-uniform delays, and a hardness result in the mold of~\cite{davies_etal.nonuniform_SODA.22} for a different non-uniform delay model.  Figure~\ref{fig:problem-tree}
organizes various models in this space, with pointers to results in this paper and relevant previous work.
\smallskip

\begin{wrapfigure}{r}{.4\textwidth}
    \centering \vspace{-.1in}
    \includegraphics[width=.35\textwidth, page=1]{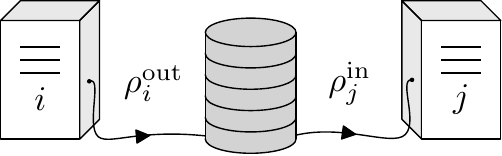}
    \caption{{\small Communicating a result from $i$ to $j$ takes $\delayout i + \delayin  j$ time.}}
\label{fig:machine-comm}
\vspace{-0.1in}
\end{wrapfigure}
\BfPara{Machine delays and job delays (Section~\ref{sec:job_machine_delays})} We begin with a natural model where the delay incurred in communication from one machine to another is the sum of delays at the two endpoints.  Under machine delays, each machine $i$ has an in-delay $\delayin i$ and out-delay $\delayout i$, and the time taken to communicate a result from $i$ to $j$ is $\delayout i + \delayin j$.  This model, illustrated in Figure~\ref{fig:machine-comm}, is especially suitable for environments where data exchange between jobs occurs via the cloud, an increasingly common mode of operation in modern distributed systems~\cite{liang+k:cloud,wu+lthzlzj:cloud,mahgoub+ysmebc:cloud}; $\delayin i$ and $\delayout i$ represent the cloud download and upload latencies, respectively, for machine $i$.

\junk{
\begin{definition}{{\bf (Scheduling under Machine Delays)}}
%We are given as input a directed acyclic graph (DAG), which models a set of $n$ unit size jobs with precedence relations among them,
We are given as input a set of $n$ precedence ordered jobs
and a set of $m$ machines.  
For any jobs $u$ and $v$ with $u \prec v$, machine $i$, and time $t$, we say that $u$ is \textit{available} to $v$ on machine $i$ at time $t$ if $u$ is completed on $i$ before time $t$ or on any machine $j$ before time $t - (\delayout j + \delayin i)$. If job $v$ is scheduled at time $t$ on machine $i$, then all of its predecessors must be available to $v$ on $i$ at time $t$.
%For any schedule $\sigma$, let $\makespan{\sigma}$ denote the makespan of $\sigma$.  
The objective of the problem is to construct a schedule that minimizes makespan. \comment{do we still want this definition here if we aren't emphasizing the separate theorem?}
\end{definition}
}

\begin{figure}%[ht]
    \centering
    \includegraphics[width=\textwidth]{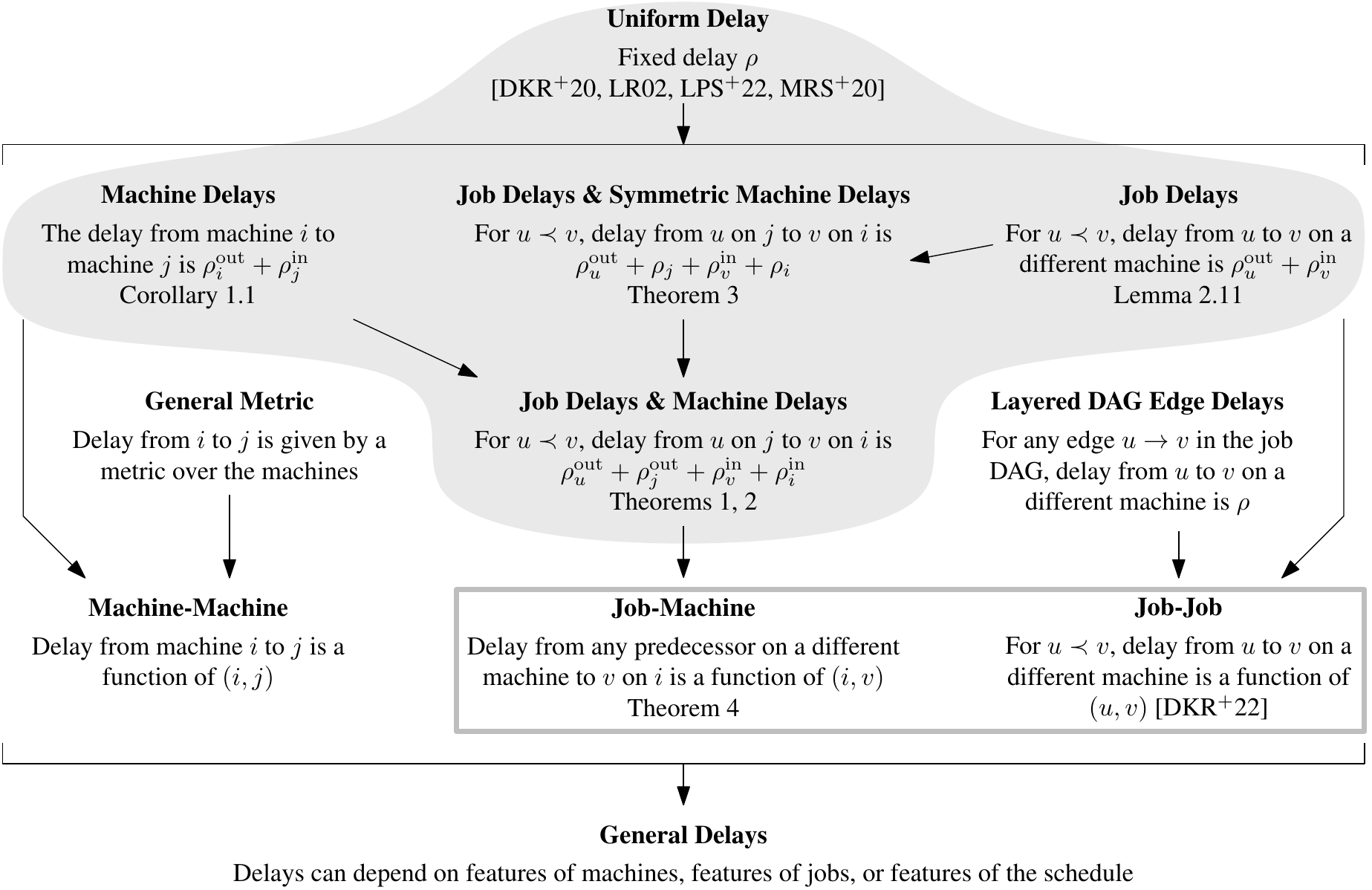}
    \caption{{\small Selection of scheduling models with communication delays. $a \relbar\joinrel\mathrel\RHD b$ indicates that $a$ is a special case of $b$. We present approximation algorithms for models with machine delays and job delays, and a hardness of approximation result for the job-machine delays model.  Theorems and citations point to results in this paper and in previous work, respectively. Those problems backed in gray are ones for which approximation algorithms are known. Those in the gray box are ones for which hardness results have been proven.}}
    \label{fig:problem-tree}
\end{figure}

\junk{
Figure~\ref{fig:problem-tree} situates the machine delay model and other variants we study in this paper within the landscape of communication delay models.
}

\begin{figure}
\includegraphics[width=\textwidth,page=2]{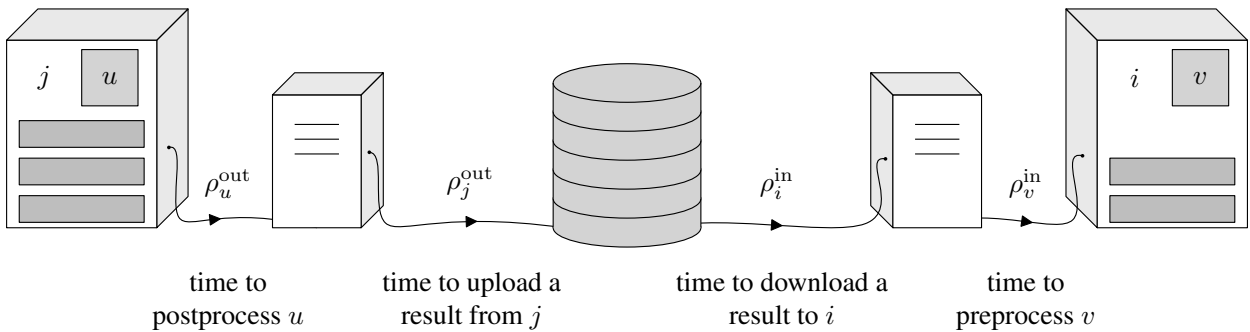}
\caption{{\small Communicating the result of job $u$ on machine $j$ to execute job $v$ on machine $i$.}}
\label{fig:jobmachine-comm}
\end{figure}

\junk{ A natural model is that of a metric.  Special cases of interest are hierarchical metrics or star metrics.  We consider an asymmetric generalization of star metrics, which we refer to as machine-dependent delays.   Motivate this class of problems.
Point to Figure~\ref{fig:problem_tree}.  List theorems. Give high level approach, emphasizing simplicity. Maybe introduce via Chudak-Shmoys simplification? We can also talk about how this approach is also simpler than FOCS paper. Using those more complicated techniques, we can achieve better approximations in settings with uniform size clusters.}

\junk{
\begin{definition}
For any schedule $\sigma$, let $\makespan{\sigma}$ denote the makespan of $\sigma$, and let $\delay{\sigma}$ denote $\max\{ \delayin  i + \delayout i\}$ over all machines $i$ on which a job is executed in $\sigma$.
\end{definition}
}

The machine delays model does not account for heterogeneity among jobs, where different jobs may be producing or consuming different amounts of data, which may impact the delay between the processing of one job and that of another dependent job on a different machine.  To model this, we allow each job $u$ to have an in-delay $\delayin u$ and an out-delay $\delayout u$.  
\begin{definition}{{\bf (Scheduling under Machine Delays and Job Delays)}}
We are given as input a set of $n$ precedence ordered jobs
and a set of $m$ machines.  For any jobs $u$ and $v$ with $u \prec v$, machine $i$, and time $t$, $u$ is \textit{available} to $v$ on machine $i$ at time $t$ if $u$ is completed on $i$ before time $t$ or on any machine $j$ before time $t - (\delayout j + \delayout u + \delayin i + \delayin v)$.  (This model is illustrated in Figure~\ref{fig:jobmachine-comm}.)  If job $v$ is scheduled at time $t$ on machine $i$, then all of its predecessors must be available to $v$ on $i$ at time $t$.  We define $\delay{\max} = \max_{x \in V \cup M}\{\delayin x + \delayout x\}$.  The objective is to construct a schedule that minimizes makespan.
\label{def:smdjd}
\end{definition}

We present the first approximation algorithms for scheduling under non-uniform communication delays.  In the presence of delays, a natural approach to hide latency and reduce makespan is to duplicate some jobs (for instance, a job that is a predecessor of many other jobs)~\cite{papadimitriou1990towards,ahmad+k:schedule}.  We consider both schedules that allow duplication (which we assume by default) and those that do not.  Our first result is a polylogarithmic asymptotic approximation for scheduling under machine and job delays when duplication is allowed.  
\begin{result}%[\textbf{Polylogarithmic asymptotic approximation for machine delays and job delays}]
There exists a polynomial time algorithm for scheduling under machine and job delays, that produces a schedule with makespan $O((\log^9 n)(\opt + \delay{\max}))$.%
\label{thm:smdjd}
\end{result}
We emphasize that if the makespan of any schedule includes the delays incurred in distributing the problem instance and collecting the output of the jobs, then the algorithm of Theorem~\ref{thm:smdjd} is, in fact, a \emph{true polylogarithmic approximation} for makespan.  (From a practical standpoint, in order to account for the time incurred to distribute the jobs and collect the results, it is natural to include in the makespan the in- and out-delays of every machine used in the schedule.)

\smallskip

\junk{
\begin{result}[\textbf{Polylogarithmic asymptotic approximation for \smd}]
There exists a polynomial time algorithm for \smd\ that produces a schedule with makespan $O((\log n)^5 (\opt + \delay{\max}))$.%
\label{thm:smd}
\end{result}

\BfPara{Machine delays and job delays} We next consider a model that incorporates job-based delays.  Under this model, in addition to machine delays, each job $u$ has an in-delay $\delayin u$ and out-delay $\delayout u$ that impacts when any of its successors can be scheduled on a machine different than the one where $u$ is scheduled.  

Our second result is to extend the algorithm of Theorem~\ref{thm:smd} to handle additional job delays, at the expense of a polylogarithmic factor in the asymptotic approximation ratio.

\begin{definition}{{\bf (Scheduling under Machine Delays and Job Delays: \smdjd)}}
This problem is identical to \smd\ except that the delay requirement for precedence constraints is the following: for any jobs $u$ and $v$ with $u \prec v$, machine $i$, and time $t$, $u$ is \textit{available} to $v$ on machine $i$ at time $t$ if $u$ is completed on $i$ before time $t$ or on any machine $j$ before time $t - \delayout j - \delayout u - \delayin i - \delayin v$.  As in \smd, the objective of \smdjd\ is to construct a schedule that minimizes makespan.
\label{def:smdjd}
\end{definition}
Our second result is to extend the algorithm of Theorem~\ref{thm:smd} to handle additional job delays, at the expense of a polylogarithmic factor in the asymptotic approximation ratio.

\begin{result}[\textbf{Polylogarithmic asymptotic approximation for \smdjd}]
There exists a polynomial time algorithm for \smdjd\ that produces a schedule with makespan at most $\polylog(n) \cdot (\makespan{\sigma} + \delay{\sigma})$, for any schedule $\sigma$.%
\label{thm:smdjd}
\end{result}
}

\BfPara{Related machines and multiprocessors (Section~\ref{sec:general})} 
Theorem~\ref{thm:smdjd} is based on a new linear programming framework for addressing non-uniform job and machine delays.  We demonstrate the power and flexibility of this approach by 
%The proof of Theorem~\ref{thm:jm_approximation} demonstrates the key strategy used to address non-uniform communication delays. 
incorporating two more aspects of heterogeneity: speed and number of processors.  Each machine $i$ has a number $m_i$ of processors and a speed $s_i$ at which each processor processes jobs.  We generalize Theorem~\ref{thm:smdjd} to obtain the following main result of the paper.
\begin{result}%[\textbf{Polylogarithmic asymptotic approximation for scheduling related machines and multiprocessors under machine delays and job delays}]
There exists a polynomial time algorithm for scheduling on related multiprocessor machines under machine and job delays, that yields a schedule with makespan $\polylog(n) (\opt + \delay{\max}))$.%
\label{thm:smdjd-general}
\end{result}
  The exact approximation factor obtained depends on the non-uniformity of the particular model.  For the most general model we consider in Theorem~\ref{thm:smdjd-general}, our proof achieves a $O(\log^{15} n)$ bound.  We obtain improved bounds when any of the three defining parameters---size, speed, and delay---are uniform.  
For instance, we obtain an approximation factor of $O(\log^5 n)$ for scheduling uniform speed and uniform size machines under machine delays alone, i.e., when there are no job delays (Corollary~\ref{cor:machine} of Section~\ref{sec:job_machine_delays}).  Further, with only job delays and uniform machine delays, we provide a combinatorial asymptotic $O(\log^6 n)$ approximation (Lemma~\ref{thm:job} of Section~\ref{sec:job_machine_delays}) which is improved to an asymptotic $O(\log n)$ approximation if the input contains no out-delays.  We note that despite some uniformity, special cases can model certain two-level non-uniform network hierarchies with processors at the leaves, low delays at the first level, and high delays at the second level. 

\junk{
The next theorem presents improved bounds for special cases.
\begin{result}[\textbf{Improved asymptotic approximation for \mdps\ when delays or machine sizes are uniform}]
Suppose that either the number of processors or the delays are uniform across all machines. There exists a polynomial time algorithm that produces a schedule with makespan at most $O((\log n)^{2} \cdot (\makespan{\sigma} + \delay{\sigma}))$ for any schedule $\sigma$. 
\end{result}
The analysis is more complicated, uses optimization from \cite{chudak1999approximation}.
}

\smallskip
\BfPara{No-duplication schedules (Section~\ref{sec:no-dup})}
We next consider the problem of designing schedules that do not allow duplication.  We obtain a polylogarithmic asymptotic approximation via a reduction to scheduling with duplication.  Furthermore, if the delays are symmetric (i.e., $\delayout{i} = \delayin{i}$ for all $i$, and $\delayout v = \delayin v$ for all $v$) we are able to find a \emph{true} polylogarithmic-approximate no-duplication schedule.  To achieve this result, we present an approximation algorithm to estimate if the makespan of an optimal no-duplication schedule is at least the delay of any given machine; this enables us to identify machines that cannot communicate in the desired schedule.\footnote{We note that the corresponding problem for duplication schedules is a min-max partitioning variant of the Minimum $k$-Union problem and related to the Min-Max Hypergraph $k$-Partitioning problem, both of which have been shown to be Densest-$k$-Subgraph-hard~\cite{chlamtavc2017minimizing,chandrasekaran+c:partition}; this might suggest a similar hardness result for deriving a \emph{true} approximation for \mdps\ with duplication.}
\begin{result}%[\textbf{Polylogarithmic approximations for no-duplication schedules}]
\label{thm:no-dup}
There exists a polynomial time algorithm for scheduling on related multiprocessor machines under machine delays and job delays, which produces a no-duplication schedule with makespan $\polylog(n) (\opt + \delay{\max})$.  If $\delayin  i = \delayout i$ for all $i$, then there exists a polynomial time $\polylog(n)$-approximation algorithm for no-duplication schedules.
\end{result}

\junk{
Note that the first three theorems imply approximation algorithms for the model where there is a central processor that commmunicates other machines. This model requires an initial communication \textit{to} all machines that execute jobs, and a final communication phase \textit{from} all machines that execute jobs. In this model, the makespan of any schedule will be at least the $\delayin  i + \delayout i$ if any job is executed on machine $i$.

[Talk about the challenges involved in removing the additive factor]
\begin{result}[\textbf{Polylogarithmic approximation for \smdjd\ with symmetric delays and no duplication}]
\label{thm:mdps-symmetric-no-dup}
If $\delayin  i = \delayout i$ for all $i$, then there exists a polynomial time $\polylog(n)$-approximation algorithm for \smdjd\ with no duplication.
\end{result}
}
\junk{For cases where the optimal makespan is at least the max communication, the previous theorem gives an asymptotic approximation. So, in order to prove this theorem, we present a way to check if opt is less than a given value. Note that, for this setting, the previous theorems give asymptotic approximations -- they minimize $\max_{i,t} \{t + \delayout i + \delayin  i:$ some job is executed on $i$ at time $t\}$.}
\smallskip

\BfPara{Pairwise delays (Section~\ref{sec:umps_reduction})}
All of the preceding results concern models where the communication associated with a precedence relation $u \prec v$ when $u$ and $v$ are executed on different machines $i$ and $j$ is an \emph{additive} combination of delays at $u$, $v$, $i$, and $j$.  Additive delays are suitable for capturing independent latencies incurred by various components of the system.  A more general class of models considers \emph{pairwise} delays where the delay is an \emph{arbitrary function} of $i$ and $j$ (machine-machine), $u$ and $v$ (job-job), or either job and the machine on which it executes (job-machine). %$v$ and $j$ (job-machine) \comment{this is a little weird because it's not exhaustive: $u$ and $j$, $v$ and $i$}. 
The machine-machine delay model captures classic networking scenarios, where the delay across machines is determined by the network links connecting them.  Job-job delays model applications where the data that needs to be communicated from one job to another descendant job depends arbitrarily on the two jobs.  The job-machine model is well-suited for applications where the delay incurred for communicating the data consumed or produced by a job executing on a machine is an arbitrary function of the size of the data and the bandwidth of the machine.  Recent work in~\cite{davies_etal.nonuniform_SODA.22} shows that scheduling under job-job delays is as hard as the Unique Machine Precedence Scheduling ({\textsc UMPS}) problem, providing preliminary evidence that obtaining sub-polynomial approximation factors may be intractable.  We show that {\textsc UMPS} also reduces to scheduling under job-machine delays, suggesting a similar inapproximability for this model. 
\begin{result}[\textbf{\umps\ reduces to scheduling under job-machine delays}]
\label{thm:mdps-symmetric-no-dup}
There is a polynomial-time approximation-preserving reduction from \umps\ to the scheduling under job-machine delays.
\end{result}

% \begin{figure}
%     \centering
%     \includegraphics{}
%     \caption{\comment{do we want this figure? or one like it?}}
%     \label{fig:framework}
% \end{figure}

\subsection{Overview of our techniques}
\label{sec:techniques}
Our approximation algorithms for scheduling under job delays and machine delays (Theorem~\ref{thm:smdjd} proved in Section~\ref{sec:job_machine_delays}) and the generalization to related machines and multiprocessors (Theorem~\ref{thm:smdjd-general} proved in Section~\ref{sec:general}) rely on a framework composed of a carefully crafted linear programming relaxation and a series of reductions that help successively reduce the level of heterogeneity in the problem. While each individual component of the framework refines established techniques or builds on prior work, taken together they offer a flexible recipe for designing approximation algorithms for scheduling precedence-ordered jobs on a distributed system of heterogeneous machines with non-uniform delays. Given the hardness conjectures of \cite{davies_etal.nonuniform_SODA.22} for the job-job delay setting (and for the job-machine setting via Theorem~\ref{thm:umps}), we find it surprising that a fairly general model incorporating both job delays and machine delays on related machines is tractable. 

Previous results on scheduling under (uniform) communication delays are based on three different approaches: (a) a purely combinatorial algorithm of~\cite{LR02} that works only for uniform delay machines; (b) an LP-based approach of~\cite{maiti_etal.commdelaydup_FOCS.20} that handles related machines and uniform delays, assuming jobs can be duplicated, and then extends to no-duplication via a reduction; and (c) an approach of~\cite{davies2020scheduling} based on a Sherali-Adams hierarchy relaxation followed by a semi-metric clustering, which directly tackles the no-duplication model. 
At a very high level, our main challenge, which is not addressed in any of the previous studies, is to tackle the \emph{multi-dimensional heterogeneity} of the problem space: in the nature of delays (non-uniform values, in- and out-delays, job delays, machine delays) as well as the machines (delay, speed, and size).  

We pursue an LP-based framework, which significantly refines the approach of~\cite{maiti_etal.commdelaydup_FOCS.20}.  Their algorithm  organizes the computation in phases, each phase corresponding to a (uniform) delay period, and develops a linear program that includes delay constraints capturing when jobs have to be phase-separated and phase constraints bounding the amount of computation within a phase.  In non-uniform delay models, the delay constraints for a job $v$ executing on a machine $i$ depend not only on the predecessors of $v$, but also on the machines on which they may be scheduled. While there is a natural way to account for non-uniform in-delays in the LP, incorporating out-delays or even symmetric delays poses technical difficulties.  We overcome this hurdle by first showing that out-delays can be eliminated by suitably adjusting in-delays, at the expense of a polylogarithmic factor in approximation, thus allowing us to focus on in-delays.  

Despite the reduction to in-delays, extending the LP of~\cite{maiti_etal.commdelaydup_FOCS.20} by replacing the uniform delay parameter by the non-uniform delay parameters of our models fails and yields a high integrality gap.  This is because their algorithm crucially relies on an ordering of the machines (on the basis of their speeds), which is exploited both in the LP (in the delay and phase constraints) as well as how jobs get assigned and moved in the computation of the final schedule.  Given the multi-dimensional heterogeneity of the problems we study, there is no such natural ordering of the machines.  To address the above hurdle, we organize the machines and jobs into groups based on their common characteristics (delay, speed, size), and introduce new variables for assigning jobs to groups without regard to any ordering among them.  This necessitates new load and delay constraints and a change in rounding and schedule construction.  We now elaborate on these ideas, as we discuss our new framework in more detail.  
\smallskip

\BfPara{Reduction to in-delays (Section~\ref{sec:out-in})}
The first ingredient of our recipe is an argument that any instance of the problem with machine delays and job delays can be reduced to an instance in which all out-delays are 0, meaning that in the new instance delays depend only on the machine and job receiving the data, at the expense of a polylogarithmic factor in approximation.  This reduction is given in Lemma~\ref{lem:reduction} and Algorithm~\ref{alg:reduction} in Section~\ref{sec:out-in}.  To convert from a given schedule with out-delays to one without, we subtract $\delayout i + \delayout v$ from the execution time of every job $v$ on machine $i$. However, in order to avoid collisions, we expand the given schedule into phases of different length, organized in particular sequence so that the execution times within each phase may be reduced without colliding with prior phases. This transforms the schedule into one where the in-delay of every machine $i$ is $\delayin i + \delayout i$ and every job $v$ is $\delayin v + \delayout v$.  This transformation comes at a constant factor cost for machine delays and an $O(\log^2 \delay{max})$ cost for job delays.  A similar procedure converts from an in-delay schedule to one with in- and out-delays, completing the desired reduction.
\smallskip

\BfPara{The linear program (Sections~\ref{sec:partition_jm}-\ref{sec:LP_jm} and~\ref{sec:partition}-\ref{sec:LP})}
Before setting up the linear program, we partition the machines and the jobs into groups of uniform machines and jobs, respectively; i.e. each machine in a group can be treated as having the same in-delay, speed, and size (to within a constant factor), and each job in a group can be treated as having the same in-delay.  The final approximation factor for the most general model grows as $K^3$ and $L$, where $L$ is the number of job groups and $K$ is the number of machine groups, which depends on the extent of heterogeneity among the machines.  We bound $K$ by $O(\log^3 n)$ in the case when the speeds, sizes, and delays of machines are non-uniform.  We emphasize that, even with the machines partitioned in this way, we must carefully design our LP to judiciously distribute jobs among the groups depending on the precedence structure of the jobs and the particular job and machine parameters. 

\junk{Our LPs seek an optimal (fractional) placement of the jobs (with possible duplication) on the machine groups.  They contain job placement, duplication, and completion time variables, and incorporate non-uniform in-delays for both machines and jobs.  In Section~\ref{sec:general}, we also include machine speed and size parameters. \comment{add more here}
One challenge is bounding the amount of duplication allowed within a communication phase of a particular group. To this end, we incorporate constraints that capture the optimal makespan for scheduling the duplicated jobs on uniform machines. This strategy points to the interesting possibility of capturing more complex processor structure, such as might be modelled with a multi-level tree hierarchy.}
Our LP is inspired by that of \cite{maiti_etal.commdelaydup_FOCS.20}, though significant changes are necessary to allow for non-uniform delays. The key constraints of each LP are presented below (with the constraints from \cite{maiti_etal.commdelaydup_FOCS.20} rewritten to include machine group variables). 
\begin{center}
\includegraphics[width=\textwidth]{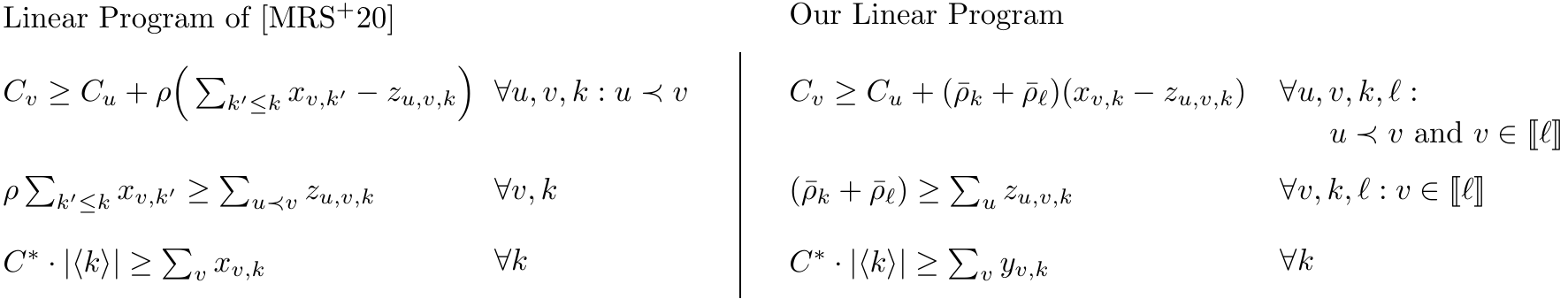}
\end{center}
Here, $C^*$ represents the makespan of the schedule and $C_v$ represents the earliest execution time of job $v$. $x_{v,k}$ indicates if $v$ is placed on a machine in group $\group k$ ($=1$) or not ($ = 0$). $z_{u,v,k}$ indicates whether $x_{v,k} = 1$ and $C_v - C_u$ is less the time it takes to communicate the result of $u$ from a different machine. $y_{v,k}$  takes the maximum of $x_{v,k}$ and $\max_{u}\{z_{v,u,k}\}$ to indicate whether some copy of $v$ is executed on a machine in group $\group k$ ($=1$) or not ($=0$). Other notation used in the linear program is explained in Section~\ref{sec:job_machine_delays}.

We can see that the LP of \cite{maiti_etal.commdelaydup_FOCS.20} relies heavily on an ordering of the machine groups by speed. Given the multidimensional heterogeneity of machines in our most general setting, such an ordering is not available to us. We address this problem by eliminating the use of a machine order, and assign jobs to machine groups directly. Because we use machine group variables, and each problem has $\polylog(n)$ machine groups, the delays and total load are not increased by more than a factor of $\polylog(n)$. However, a problem arises in construction of the final schedule.

In \cite{maiti_etal.commdelaydup_FOCS.20}, the ordering of the groups was leveraged to construct the final schedule by always placing a job on higher capacity groups than the one to which it is assigned by the LP. Since the LP assigns all jobs to some group, we can infer that the total load over all groups does not increase by more than a constant factor. With multidimensional heterogenous machines, there is no clear ordering of machine groups to achieve a similar property (e.g.\ one set of jobs may be highly parallelizable, while another requires a single fast machine). Our solution is to place all jobs on those groups to which the LP assigns them, along with any predecessors indicated by the $z$-variables. However, such a construction could vastly exceed the value of the LP unless the load contributed by the $z$-variables is counted toward the LP makespan. To this end, we introduce the $y$-variables and associated constraints, which account for this additional, duplicated load.
In the most general setting, we also introduce constraints which govern the amount of duplication possible within a single communication phase. These additional constrains model an optimal schedule of the duplicated jobs on the uniform machines within a single group.
%\comment{this discussion doesn't cite the constraints directly, so the equation numbers can be removed. also, do we need to specify the second, phase constraint?}

\smallskip
\BfPara{Rounding the LP solution and determining final schedule (Sections~\ref{sec:round_jm}-\ref{sec:schedule_jm} and~\ref{sec:round}-\ref{sec:schedule})}
The next component rounds an optimal LP solution to an integer solution by placing each job on the group for which the job's LP mass is maximized. We also place duplicate predecessors of each job $v$ on its group according to the $z$-variables for $v$'s predecessors. This indicates a key difference with \cite{maiti_etal.commdelaydup_FOCS.20}, where the load contributed by duplicates was handled by the ordering of the machines. A benefit of our simple rounding is that it accommodates many different machine and job properties as long as the number of groups can be kept small. Finally, we construct a schedule using the integer LP solution. This subroutine divides the set of jobs assigned to each group into phases and constructs a schedule for each phase by invoking a schedule for the uniform machines case, appending each schedule to the existing schedule for the entire instance. 
\smallskip

\junk{
\BfPara{Generalization to related machines and multiprocessors}
 This allows us to capture a more general problem with variable delays, and to greatly simplify the proofs required to establish our bounds. It also means that duplication across groups is not required for our approximation guarantees. Our algorithm allows duplication across groups, but this is entirely to simplify the exposition. }

\BfPara{No-duplication schedules (Section~\ref{sec:no-dup})}
The proof of the first part of Theorem~\ref{thm:no-dup} extends an asympototic polylogarithmic approximation to no-duplication schedules for machine delays and job delays. The theorem follows from the structure of the schedule designed in Theorem~\ref{thm:smdjd-general} and a general reduction in~\cite{maiti_etal.commdelaydup_FOCS.20} from duplication to no-duplication schedules in the uniform delay case.  Avoiding the additive delay penalty of the first part of Theorem~\ref{thm:no-dup} to achieve a true approximation is much more difficult.  When delays are symmetric (i.e., in-delays equal out-delays), we can distinguish those machines whose delay is low enough to communicate with other machines from those machines with high delay. One of the central challenges is then to distribute jobs among the high-delay machines. We overcome this difficulty by revising the LP in the framework of Theorem~\ref{thm:smdjd-general} to partition the jobs among low- and high-delay machines, and rounding the corresponding solutions separately.

We then must distinguish between those jobs with delay low enough to communicate with other jobs from those with high delay. We note that any predecessor or successor of a high delay job must be executed on the same machine as that job. We leverage this fact to construct our schedule, first placing all high delay jobs with their predecessors and successors on individual machines. We then run our machine and job delay algorithm with the remaining jobs on the low delay machines. This schedule is placed after the execution of the downward closed high-delay components, and before the upward closed high-delay components, ensuring that the schedule is valid.  

We note that  the design of no-duplication schedules via a reduction to duplication schedules incurs a loss in approximation factor of an additional polylogarithmic factor.  While this may not be desirable in a practical implementation, our results demonstrate the flexibility of the approach and highlight its potential for more general delay models.
\smallskip

\BfPara{Hardness for job-machine delay model (Section~\ref{sec:umps_reduction})}
The algorithmic framework outlined above incorporates non-uniform job and machine delays that combine additively.  It is natural to ask if the techniques extend to other delay combinations or more broadly to pairwise delay models.  In the job-machine delay model we study, when a job $u$ executed on machine $i$ precedes job $v$ executed on machine $j$, then a delay $\rho_{v,j}$ between the two executions is incurred.  Our reduction from \umps\ to the job-machine delay problem follows the approach of~\cite{davies_etal.nonuniform_SODA.22} by introducing new jobs with suitable job-machine delay parameters that essentially force each job to be executed on a particular machine.  This reduction does not require the flexibility of assigning different delays for different job-job pairs, but it is unclear if the same technique can be applied to machine-machine delay models.  Delineating the boundary between tractable models and those for which polylogarithmic approximations violate conjectured complexity lower bounds is a major problem of interest. 

\junk{
The main components of our \mdps\ algorithm of Theorem~\ref{thm:mdps} are outlined in Figure~\ref{fig:framework}.  First, we observe that any instance of \mdps\ can be reduced to an instance of \mdps\ in which all out-delays are 0, meaning that in the new instance delays depend only on the machine receiving the data.  This reduction is helpful in the second component, which uses a standard method to partition the machines into groups of uniform machines, i.e. each machine in a group can be treated as having the same in-delay, speed, and size (to within a constant factor).  The final approximation factor we establish grows as $K^3$, where $K$ is the number of groups and depends on the extent of non-uniformity among the machines.  We bound $K$ by $O(\log^3 n)$ 
in the general case when the speeds, sizes, and delays of machines are non-uniform.  
%We emphasize that while this partitioning of machines is elementary, it does not partition the scheduling problem among the groups: 
We emphasize that, even with the machines partitioned in this way, an optimal schedule must still judiciously distribute jobs among the groups depending on the structure of the DAG and the particular machine parameters.  

The third component of our algorithm solves an LP relaxation for \mdps\ aimed at finding a placement of the jobs (with possible duplication) on the groups.
%, and solves the LP to obtain fractional variable assignments.  
Building on the approach of~\cite{maiti_etal.commdelaydup_FOCS.20}, our LP uses job placement, duplication, and completion time variables, and incorporates non-uniform speeds, sizes, and communication delays.  %One challenge in incorporating the size is the need to upper bound the amount of duplication allowed for a given job.  When placing job $v$ on machine $i$ at time $t$, the number of duplicates necessary to make this placement is equal to the number of $v$'s predecessors executed on other machines after time $t - \delayin{i}$ (assuming out-delays are 0). Therefore, we must be able to schedule the entire set of duplicated predecessors on machine $i$ in at most $\delayin{i}$ time. Effectively, constraining the set of duplicate predecessors for a given job requires modeling a mini-schedule for that set on the machine where the job is placed. Fortunately, Graham's list scheduling theorem gives us simple parameters by which we can bound the length of any schedule for parallel processors. 

One challenge is bounding the amount of duplication allowed within a communication phase of a particular group. To this end, we incorporate constraints that capture the optimal makespan for scheduling the duplicated jobs on uniform machines. % the load and maximum chain length of duplicated jobs, which are sufficient by Graham's list scheduling theorem \cite{graham:schedule}.
This strategy points to the interesting possibility of capturing more complex processor structure, such as might be modelled with a multi-level tree hierarchy.}

\subsection{Related work}

\BfPara{Precedence constrained scheduling}
The problem of scheduling precedence-constrained jobs was initiated in the classic work of Graham who gave a constant approximation algorithm for uniform machines~\cite{graham:schedule}.  Jaffe presented an $O(\sqrt m)$ makespan approximation for the case with related machines~\cite{Jaffe_1980}.  This was improved upon by Chudak and Shmoys who gave an $O(\log m)$ approximation~\cite{chudak1999approximation}, then used the work of Hall, Schulz, Shmoys, and Wein~\cite{Hall_1997} and Queyranne and Sviridenko~\cite{Queyranne_2002} to generalize the result to an $O(\log m)$ approximation for weighted completion time. Chekuri and Bender~\cite{chekuri-bender01} proved the same bound as Chudak and Shmoys using a combinatorial algorithm.  In subsequent work, Li improved the approximation factor to $O(\log m/ \log \log m)$~\cite{Li17}.  The problem of scheduling precedence-constrained jobs is hard to approximate even for identical machines, where the constant depends on complexity assumptions~\cite{LenstraK78, Bansal_2009,Svensson_2010}.
Also, Bazzi and Norouzi-Fard \cite{Bazzi_2015} showed a close connection between structural hardness for \(k\)-partite graph and scheduling with precedence constraints.
% proves the problem on identical machines is hard to approximate to within a factor of \(2 - \epsilon\) assuming a slight variant of UGC conjecture, and the author's pose a hypothesis that, if true, implies super constant hardness for related machines.

\smallskip
\BfPara{Precedence constrained scheduling under communication delays}
Scheduling under communication delays has been studied extensively~\cite{rayward1987uet, papadimitriou1990towards, veltman+ll:schedule}. For unit size jobs, identical machines, and unit delay, a ($7/3$)-approximation is given in \cite{munier+h:schedule}, and \cite{hoogeveen+lv:schedule} proves the NP-hardness of achieving better than a $5/4$-approxmation. Other hardness results are given in \cite{bampis+gk:schedule, picouleau1991two, rayward1987uet}. More recently, Davies, Kulkarni, Rothvoss, Tarnawski, and Zhang \cite{davies2020scheduling} give an $O(\log \rho \log m)$ approximation in the identical machine setting using an LP approach based on Sherali-Adams hierarchy, which is extended to include related machines in~\cite{davies_etal.commdelayrelated_SODA.21}. Concurrently, Maiti, Rajaraman, Stalfa, Svitkina, and Vijayaraghavan \cite{maiti_etal.commdelaydup_FOCS.20} provide a polylogarithmic approximation for uniform communication delay with related machines as a reduction from scheduling with duplication. The algorithm of \cite{maiti_etal.commdelaydup_FOCS.20} is combinatorial in the case with identical machines.  

Davies, Kulkarni, Rothvoss, Sandeep, Tarnawski, and Zhang \cite{davies_etal.nonuniform_SODA.22} consider the problem of scheduling precedence-constrained jobs on uniform machine in the presence of non-uniform, job-pairwise communication delays. That is, if $u \prec v$ and $u$ and $v$ are scheduled on different machines, then the time between their executions is at least $\delay{u,v}$. The authors reduce to this problem from Unique-Machines Precedence-constrained Scheduling (\umps) in which there is no communication delay, but for each job there is some particular machine on which that job must be placed. The authors show that \umps\ is hard to approximate to within a logarithmic factor by a reduction from job-shop scheduling, and conjecture that \umps\ is hard to approximate within a polynomial factor.

\smallskip
\BfPara{Precedence constrained scheduling under communication delays with job duplication}
Using duplication with communication delay first studied by Papadimitriou and Yannakakis~\cite{papadimitriou1990towards}, who give a 2-approximation for DAG scheduling with unbounded processors and fixed delay. Improved bounds for infinite machines are given in \cite{ahmad+k:schedule, darbha+a:schedule, munier+k:schedule, palis1996task}. Approximation algorithms are given by Munier and Hanen \cite{munier1999approximation, munier+h:schedule} for special cases in which the fixed delay is very small or very large, or the DAG restricted to a tree. The first bounds for a bounded number of machines are given by Lepere and Rapine~\cite{LR02} who prove an asymptotic $O(\log \rho/\log \log \rho)$ approximation. Recent work has extended their framework to other settings:  \cite{maiti_etal.commdelaydup_FOCS.20} uses duplication to achieve an $O(\log \rho \log m/ \log \log \rho)$ approximation for a bounded number of related machines, and Liu, Purohit, Svitkina, Vee, and Wang~\cite{Liu_etal.linear.22} improve on the runtime of~\cite{LR02}  to a near linear time algorithm with uniform delay and identical machines.

%\comment{Empirical results. \cite{lokeshwara+etal.petsd.20}}

\subsection{Discussion and open problems}

Our results indicate several directions for further work. First, we conjecture that our results extend easily to the setting with non-uniform job sizes. We believe the only barriers to such a result are the techinical difficulties of tracking the completion times of very large jobs that continue executing long after they are placed on a machine. Also, while our approximation ratios are the first polylogarithmic guarantees for scheduling under non-uniform delays, we have not attempted to optimize logarithmic factors.  There are obvious avenues for small reductions in our ratio, e.g. the technique used in \cite{LR02} to reduce the ratio by a factor of $\log \log \rho$.  More substantial reduction, however, may require a novel approach. Additionally, in the setting without duplication, we incur even more logarithmic factors owing to our reduction to scheduling with duplication. These factors may be reduced by using a more direct method, possibly extending the LP-hierarchy style approach taken in \cite{davies2020scheduling, davies_etal.commdelayrelated_SODA.21}.

Aside from improvements to our current results, our techniques suggest possible avenues to solve related non-uniform delay scheduling problems. Our incorporation of parallel processors allows our results to apply to a two-level machine hierarchy, where machines are the leaves and the delay between machines is a function of their lowest common ancestor.  We would like to explore extensions of our framework to constant-depth hierarchies and tree metrics.  More generally, scheduling under metric and general machine-machine delays remains wide open (see Figure~\ref{fig:problem-tree}). 

Finally, we believe there are useful analogs to these machine delay models in the job-pairwise regime. A job $v$ with in-delay $\delayin{v}$ and out-delay $\delayout{v}$ has the natural interpretation of the data required to execute a job, and the data produced by a job. A job tree hierarchy could model the shared libraries required to execute certain jobs: jobs in different subtrees require different resources to execute, and downloading these additional resources incurs a delay. Given the hardness conjectures of~\cite{davies_etal.nonuniform_SODA.22} and our hardness result for the job-machine delay model, further refining Figure~\ref{fig:problem-tree} and exploring the tractability boundary would greatly enhance our understanding of scheduling under non-uniform delays. Additionally, our notion of job delays essentially depends on the precedence relation over the jobs. Another natural notion of job delay may be to consider a DAG defined over the jobs, with a delay incurred only if there is a directed edge $u \to v$ (rather than $u \prec v$).  For general DAGs, the result of \cite{davies_etal.nonuniform_SODA.22} shows this setting to be at least as hard as \umps, which the authors conjecture to be hard to approximate to within a polynomial factor, but good approximations may be achievable for special cases such as layered DAGs.

%$L$-level hierarchy delays. General metric delays. Generalizing no-duplication approximation to directed star metric. 

%Variable speed machines within clusters. Hardness of checking if OPT less than $\rho_{\max}$ with duplication. Reducing approximation ratio. \comment{We could also mention some similar problems under job-pairwise delays: in/out delays, undirected star, etc. Maybe in open problems?}

%\setcounter{theorem}{0}

%%% Local Variables:
%%% mode: latex
%%% TeX-master: "main"
%%% End:

%\input{STOC 23/model}

%\input{STOC 23/machine}
\section{Machine Delays and Job Delays}
\label{sec:job_machine_delays}

In this section, we present an asymptotic approximation algorithm for scheduling under machine delays and job delays for unit speed and size machines.  As discussed in Section~\ref{sec:techniques}, we can focus on the setting with no out-delays, at the expense of a polylogarithmic factor in approximation; Lemma~\ref{lem:reduction} of Section~\ref{sec:out-in} presents the reduction to in-delays.  Therefore, in this section, we assume that $\delayout i = 0$ for all machines $i$ and $\delayout v = 0$ for all jobs $v$.  For convenience, we use $\delay i$ to denote the in-delay $\delayin i$ of machine $i$ and $\delay v$ to denote the in-delay $\delayin v$ of machine $v$.  Let $\delaymax = \max \{ \max_v\{\delay v\}, \max_i \{ \delay i \} \}$. %In Section~\ref{sec:partition}, we first organize the machines according delay.  Section~\ref{sec:LP} presents a linear programming relaxation, assuming that the machines are organized into groups so that all machines in a group have the same delay.   Section~\ref{sec:round} presents a procedure for rounding any fractional solution to the linear program.  Finally, Section~\ref{sec:schedule} presents a combinatorial algorithm that converts an integer solution to the linear program of Section~\ref{sec:LP} to a schedule.  

\subsection{Partitioning machines and jobs into groups}
\label{sec:partition_jm}
In order to simplify our exposition and analysis, we introduce a new set of machines $M'$ with rounded delays. For each $i \in M$, if $2^{k-1} \le \delay i < 2^{k}$, we introduce $i' \in M'$ with $\delay{i'} = 2^k$. We then partition $M'$ according to machine delays: machine $i \in M'$ is in $\group k$ if $\delay i = 2^k$; we set $\groupdelay k = 2^k$. We also introduce a new set of jobs $V'$ with rounded delays. For each $v \in V$, if $2^{\ell-1} \le \delay v < 2^{\ell}$, we introduce $v' \in V'$ with $\delay{v'} = 2^{\ell}$.  We then partition $V'$ according to job delays: job $v \in V'$ is in $\jobgroup{\ell}$ if $\delay v = 2^{\ell} = \groupdelay{\ell}$.   For the remainder of the section, we work with the machine set $M'$ and the job set $V'$, ensuring that all machines or jobs within a group have identical delays.  As shown in the following lemma, this partitioning is at the expense of at most a constant factor in approximation.  

\begin{lemma}
    The optimal makespan over the machine set $V', M'$ is no more than a factor of 2 greater than the optimal solution over $V, M$. 
\label{lem:group_rounding_jm}
\end{lemma}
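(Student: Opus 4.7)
The plan is to take any feasible schedule $\sigma$ for the original instance $(V, M)$ and construct a schedule $\sigma'$ for the rounded instance $(V', M')$ whose makespan is at most twice that of $\sigma$; applying this to an optimal $\sigma$ then yields the lemma. Since rounding only ever \emph{increases} delays (by at most a factor of $2$), the direction $\opt(V,M) \le \opt(V',M')$ is immediate, so only the stated direction requires work.

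The construction is a straightforward time dilation. For every job $v$ scheduled on machine $i$ at integer time $t$ in $\sigma$, I would schedule the corresponding rounded copy $v' \in V'$ on the corresponding rounded copy $i' \in M'$ at time $2t$ in $\sigma'$. By construction the makespan of $\sigma'$ is at most $2\makespan{\sigma}$, and since machines each hold one job per time slot in $\sigma$, the scaled placements occupy disjoint even slots in $\sigma'$, so the machine capacity constraints are trivially preserved.

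The only thing to verify is the precedence/delay constraint. For $u \prec v$ with $u$ and $v$ on the same machine in $\sigma$, the relative ordering is preserved by scaling, so nothing needs to be checked. For $u$ on machine $j$ and $v$ on machine $i \ne j$ in $\sigma$, feasibility of $\sigma$ gives $t_u + 1 \le t_v - (\delay{i} + \delay{v})$. By the rounding rule we have $\delay{i'} \le 2\delay{i}$ and $\delay{v'} \le 2\delay{v}$, so
\begin{equation*}
2t_u + 1 < 2t_u + 2 \le 2t_v - 2(\delay{i} + \delay{v}) \le 2t_v - (\delay{i'} + \delay{v'}),
\end{equation*}
which says exactly that $u'$ completes on $j'$ strictly before time $2t_v - (\delay{i'} + \delay{v'})$, as required by Definition~\ref{def:smdjd} in the rounded instance.

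There is no real obstacle here: the lemma is essentially a setup step, and the only subtlety is ensuring that the factor-of-$2$ slack introduced by dilation simultaneously covers the rounding blow-up in both $\delay{i}$ and $\delay{v}$ (and leaves the machine-capacity structure intact). Applying the construction to an optimal schedule for $(V,M)$ yields a feasible schedule for $(V',M')$ of makespan at most $2\opt(V,M)$, completing the proof.
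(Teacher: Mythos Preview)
Your proposal is correct and uses essentially the same time-dilation argument as the paper's own proof: place each job at time $2t$ instead of $t$, so that all inter-job gaps double and absorb the at-most-factor-$2$ increase in rounded delays. Your write-up is in fact more explicit than the paper's in checking the delay inequality and the capacity constraint, but the underlying idea is identical.
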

\begin{proof}
    Consider any schedule $\sigma$ on the machine set $M$. We first show that increasing the delay of each machine by a factor of 2 increases the makespan of the schedule by at most a factor of 2. We define the schedule $\sigma'$ as follows. For every $i,t$, if $(i,t) \in \sigma(v)$, then $(i,2t) \in \sigma'(v)$. It is easy to see that $\sigma'$ maintains the precedence ordering of jobs, and that the time between the executions of any two jobs has been doubled. Therefore, $\sigma'$ is a valid schedule with all communication delays doubled, and with the makespan doubled.
\end{proof}
We can assume that $\max_k \{\groupdelay k\} \le n$ since if we ever needed to communicate to a machine with delay greater than $n$ we could schedule everything on a single machine in less time. Therefore, we have $K \le \log n$ machine groups. Similarly, $\max_{\ell} \{ \groupdelay{\ell}\} \le n$, implying that we have $L \le \log n$ job groups.

\subsection{The linear program}
\label{sec:LP_jm}
In this section, we design a linear program \LP{a}---Equations~(\ref{LPjm:makespan}-\ref{LPjm:z>0})---parametrized by  $\alpha \ge 1$, for machine delays.  Following Section~\ref{sec:partition_jm}, we assume that the machines and jobs are organized in groups, where each group $\group k$ (resp., $\jobgroup{\ell}$) is composed of machines (resp., jobs) that have identical delay.
 
\vspace{-1.5em}
\linearprogram{.6}{.3}{
    &\Calpha \ge C_v &&\forall v 
    \label{LPjm:makespan} 
    \\
    &\Calpha \cdot |\group k| \ge \sum_{v} y_{v,k} &&\forall k 
    \label{LPjm:load} 
    \\
    %&D \ge \bar{\rho}_k \cdot  x_{v,k} &&\forall v,k
    %\label{LPjm:1comm} 
    %\\
    &C_v \ge C_u + (\groupdelay k + \groupdelay{\ell}) (x_{v,k} - z_{u,v,k} ) &&\forall u,v,k,\ell: %u \prec v, v \in \jobgroup{\ell} 
    \label{LPjm:delay} 
    \\
    &&& \quad u \prec v, v \in \jobgroup{\ell}
    \notag
    \\
    &C_v \ge C_u + \sum_k x_{v,k} &&\forall u,v: u \prec v
    \label{LPjm:precedence} 
    \\
    &\alpha (\groupdelay k + \groupdelay{\ell}) \ge \sum_u z_{u,v,k} &&\forall v,k,\ell: v \in \jobgroup{\ell}
    \label{LPjm:duplicates}
}{
    &\sum_{k} x_{v,k} = 1 &&\forall v
    \label{LPjm:execution}
    \\[.35em]
    &C_v \ge 0 && \forall v
    \label{LPjm:C>0} 
    \\[.35em]
    &x_{v,k} \ge z_{u,v,k} &&\forall u,v, k
    \label{LPjm:x>z} 
    \\[.35em]
    &y_{v,k} \ge x_{v,k} &&\forall v,k
    \label{LPjm:y>x} 
    \\[.35em]
    &y_{u,k} \ge z_{u,v,k} &&\forall u,v,k
    \label{LPjm:y>z} 
    \\[.35em]
    &z_{u,v,k} \ge 0 &&\forall u,v,k
    \label{LPjm:z>0} 
}
%\vspace{-5mm}
%\begin{align}
%    &C_v \ge C_u + (\groupdelay k + \groupdelay{\ell}) (x_{v,k} - z_{u,v,k} ) &&\forall u,v,k,\ell: u \prec v, v \in \jobgroup{\ell} 
%    \label{LPjm:delay} 
%\end{align}

%\smallskip
\BfPara{Variables} 
$\Calpha$ represents the makespan of the schedule. For each job $v$, $C_v$ represents the earliest completion time of $v$. For each job $v$ and group $\group k$, $x_{v,k}$ indicates whether or not $v$ is first executed on a machine in group $\group k$. For each $\group k$ and pair of jobs $u,v$ such that $u \prec v$ and $v \in \jobgroup{\ell}$, $z_{u,v,k}$ indicates whether $v$ is first executed on a machine in group $\group k$ and the earliest execution of $u$ is less that $\groupdelay k + \groupdelay \ell$ time before the execution of $v$.  Intuitively, $z_{u,v,k}$ indicates whether there must be a copy of $u$ executed on the same machine that first executes $v$. For each job $v$ and group $\group k$, $y_{v,k}$ indicates whether $x_{v,k} = 1$ or $z_{u,v,k} = 1$ for some $u$; that is, whether or not some copy of $v$ is placed on group $\group k$. Constraints~(\ref{LPjm:C>0} - \ref{LPjm:z>0}) guarantee that all variables are non-negative.

\smallskip
\BfPara{Makespan (\ref{LPjm:load}, \ref{LPjm:makespan})}
Constraint~\ref{LPjm:makespan} states that the makespan is at least the maximum completion time of any job.
Constraint~\ref{LPjm:load} states that the makespan is at least the load on any single group. 
%Constraint~\ref{LPjm:1comm} states that we require at least one round of communication to any machine that executes even a single job. This allows us to minimize the additive term in our asymptotic bound.

\smallskip
\BfPara{Delays (\ref{LPjm:delay}, \ref{LPjm:duplicates})}
Constraint~\ref{LPjm:delay} states that the earliest completion time of $v \in \jobgroup{\ell}$ must be at least $\groupdelay k + \groupdelay \ell$ after the earliest completion time of any predecessor $u$ if $v$ is first executed on a machine in group $\group k$ and no copy of $u$ is duplicated on the same machine as $v$.  Constraint~\ref{LPjm:duplicates} limits the amount of duplication that can be done to improve the completion time of any job: if $v \in \jobgroup{\ell}$ first executes on a machine in group $\group k$ at time $t$, then the number of predecessors that may be executed in the $\groupdelay k + \groupdelay \ell$ steps preceding $t$ is at most $\groupdelay k$. 

The remaining constraints enforce standard scheduling conditions. Constraint~\ref{LPjm:precedence} states that the completion time of $v$ is at least the completion time of any of its predecessors, and constraint~\ref{LPjm:execution} ensures that every job is executed on some group.  Constraints~\ref{LPjm:execution} and \ref{LPjm:x>z} guarantee that $z_{u,v,k} \le 1$ for all $u,v,k$. This is an important feature of the LP, since a large $z$-value could be used to disproportionately reduce the delay between two jobs in constraint~\ref{LPjm:delay}.  %\comment{This is unclear.}

\begin{lemma}{\textbf{(\LP 1 is a valid relaxation)}}
    The minimum of $\C 1$ is at most $\opt$.
\label{lem:LPjm_valid}
\end{lemma}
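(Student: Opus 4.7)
\textbf{Proof plan for Lemma~\ref{lem:LPjm_valid}.} The plan is to exhibit a feasible assignment to the LP variables derived from an optimal schedule $\sigma^*$ (on the rounded instance $V',M'$ from Section~\ref{sec:partition_jm}), and check that each constraint is satisfied with objective $\Calpha$ equal to the makespan of $\sigma^*$. Concretely, for each job $v$ let $C_v$ be the earliest time at which any copy of $v$ completes in $\sigma^*$; set $x_{v,k}=1$ iff this earliest copy is on a machine in $\group{k}$; set $y_{v,k}=1$ iff at least one copy of $v$ is placed on some machine in $\group{k}$; and for $v\in\jobgroup{\ell}$ with $u\prec v$, set $z_{u,v,k}=1$ iff $x_{v,k}=1$ and $C_u > C_v - (\groupdelay{k}+\groupdelay{\ell})$. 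All variables are $0/1$, so the nonnegativity constraints \eqref{LPjm:C>0}--\eqref{LPjm:z>0} and the execution constraint \eqref{LPjm:execution} are immediate, as are $x_{v,k}\ge z_{u,v,k}$, $y_{v,k}\ge x_{v,k}$, and $y_{u,k}\ge z_{u,v,k}$ from the definitions.

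The next step is to handle the makespan/load and precedence constraints. Constraint~\eqref{LPjm:makespan} holds because $\sigma^*$ completes each $v$ by its makespan. For \eqref{LPjm:load}, observe that $\sum_v y_{v,k}$ counts distinct jobs that have some copy on a machine of $\group{k}$; since each such machine runs at most one job per unit time (unit size and speed), the total executions on $\group{k}$ are bounded by $\Calpha\cdot|\group{k}|$, and distinct jobs are at most total executions. Constraint~\eqref{LPjm:precedence} follows because $\sum_k x_{v,k}=1$ and every predecessor $u$ of $v$ must finish strictly before $v$'s earliest copy in $\sigma^*$.

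The delicate constraints are \eqref{LPjm:delay} and \eqref{LPjm:duplicates}. For \eqref{LPjm:delay}, if $x_{v,k}-z_{u,v,k}\le 0$ the right-hand side is at most $C_u$, and the inequality reduces to $C_v\ge C_u$, which is precedence. The remaining case is $x_{v,k}=1$ and $z_{u,v,k}=0$: by construction this forces $C_u\le C_v-(\groupdelay{k}+\groupdelay{\ell})$, exactly what the constraint demands. For \eqref{LPjm:duplicates} with $\alpha=1$, fix $v\in\jobgroup{\ell}$ with $x_{v,k}=1$, and let $i\in\group{k}$ be the machine where $v$'s earliest copy runs. Whenever $z_{u,v,k}=1$, the communication-delay rule of Definition~\ref{def:smdjd} (with out-delays zero) forces a copy of $u$ on machine $i$ at some time in $[C_u,C_v-1]\subseteq(C_v-\groupdelay{k}-\groupdelay{\ell},\,C_v-1]$, a window of fewer than $\groupdelay{k}+\groupdelay{\ell}$ unit slots. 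Since $i$ processes at most one job per slot, the number of such predecessors is at most $\groupdelay{k}+\groupdelay{\ell}$, establishing \eqref{LPjm:duplicates}.

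The main subtlety I anticipate is the duplicates constraint \eqref{LPjm:duplicates}: one must argue carefully that any predecessor forced into the window by $z_{u,v,k}=1$ really does occupy a distinct unit time slot on $v$'s machine, using both the definition of $C_u$ as the earliest completion (so no copy precedes $C_u$) and the unit-size/unit-speed assumption. Once this counting is in place, the resulting LP solution has objective equal to the makespan of $\sigma^*$, which after the factor-$2$ group-rounding loss of Lemma~\ref{lem:group_rounding_jm} is at most $\opt$ (up to the constant absorbed into the statement), completing the relaxation claim.
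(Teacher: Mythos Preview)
Your proposal is correct and follows essentially the same approach as the paper: construct a $0/1$ assignment from an optimal schedule by setting $C_v$ to earliest completion times, $x_{v,k}$ to indicate the group of that earliest copy, $z_{u,v,k}$ by the delay condition, and $y$ accordingly, then verify each constraint (the paper sets $y_{u,k}=\max\{x_{u,k},\max_v z_{u,v,k}\}$ rather than your ``some copy on $\group{k}$,'' but both choices satisfy \eqref{LPjm:load}, \eqref{LPjm:y>x}, \eqref{LPjm:y>z}). The only unnecessary step is your final remark about the factor-$2$ loss from Lemma~\ref{lem:group_rounding_jm}: since the section works entirely on the rounded instance $(V',M')$, the $\opt$ in the lemma statement already refers to that instance, so the feasible value $\Calpha$ equals that $\opt$ exactly and no constant needs to be absorbed.
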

\begin{proof}
    Consider an arbitrary schedule $\sigma$ with makespan $\Csigma$, i.e.\ $\Csigma = \max_{v,i,t}\{ t: (i,t) \in \sigma(v)\}$. %We assume the schedule is \textit{minimal} in the sense that a job is executed at most once on any machine and no job is executed on any machine after the result of that job is available to the machine. It is easy to see that if $\sigma$ is not minimal, there exists a minimal schedule $\sigma'$ such that $\makespan{\sigma'} \le \makespan{\sigma}$. We give a solution to the LP in which $D = \makespan{\sigma} + \delay{\sigma}$.
    
    \smallskip
    \BfPara{LP solution} Set $\C 1 = \Csigma$. For each job $v$, set $C_v$ to be the earliest completion time of $v$ in $\sigma$, i.e.\ $C_v = \min_{i,t}\{t : (i,t) \in \sigma(v)\}$. Set $x_{v,k} = 1$ if $\group k$ is the group that contains the machine on which $v$ first completes (choosing arbitrarily if there is more than one) and 0 otherwise. For $u,v,k$, set $z_{u,v,k} = 1$ if $u \prec v$, $x_{v,k} = 1$, $v \in \jobgroup{\ell}$, and $C_v - C_u < \groupdelay k + \groupdelay{\ell}$ (0 otherwise). Set $y_{u,k} = \max\{x_{u,k}, \max_v \{ z_{u,v,k} \}\}$.
    
    \smallskip
    \BfPara{Feasibility} We now establish that the solution defined is feasible. Constraints (\ref{LPjm:makespan}, \ref{LPjm:C>0}--\ref{LPjm:z>0}) are easy to verify. We now establish constraints~(\ref{LPjm:load}--\ref{LPjm:duplicates}). Consider constraint~\ref{LPjm:load} for fixed group $\group k$. $\sum_v y_{v,k}$ is upper bound by the total load $\Lambda$ on $\group k$. The constraint follows from $\Calpha \ge \Csigma \ge \Lambda / |\group k|$.
    
    Consider constraint~\ref{LPjm:delay} for fixed $u,v,k$ where $u \prec v$. Let $X = x_{v,k}$ and let $Z = z_{u,v,k}$. If $(X,Z) = (0,0), (0,1)$, or $(1,1)$ then the constraint follows from constraint~\ref{LPjm:precedence}. If $(X,Z) = (1,0)$, then by the assignment of $z_{u,v,k}$ we can infer that $C_v - C_u \ge \groupdelay k + \groupdelay{\ell}$, which shows the constraint is satisfied.
    
    Consider constraint~\ref{LPjm:duplicates} for fixed $v,k$. If $x_{v,k} = 0$ then the result follows from the fact that $z_{u,v,k} = 0$ for all $u$. If $x_{v,k} =  1$, then we can infer that $v \in \jobgroup{\ell}$. So, at most $\groupdelay k + \groupdelay{\ell}$ predecessors of $v$ that can be scheduled in the $\groupdelay k + \groupdelay{\ell}$ time before $C_v$, ensuring that the constraint is satisfied.  
\end{proof}

\subsection{Deriving a rounded solution to the linear program}
\label{sec:round_jm}

\begin{definition}
    $(C,x,y,z)$ is a \emph{rounded solution} to \LP a if all values of $x,y,z$ are either 0 or 1. 
\end{definition}

Let \LP 1 be defined over machine groups $\group 1, \group 2, \ldots, \group K$ and job groups $\jobgroup 1, \jobgroup 2, \ldots, \jobgroup L$. Given a solution $(\hat C, \hat x, \hat y, \hat z)$ to \LP 1, we construct an integer solution $(C,x,y,z)$ to \LP{2K} as follows. For each $v,k$, set $x_{v,k} = 1$ if $k = \max_{k'}\{\hat x_{v,k'}\}$ (if there is more than one maximizing $k$, arbitrarily select one); set to 0 otherwise. Set $z_{u,v,k} = 1$ if $x_{v,k} = 1$ and $\hat z_{u,v,k} \ge 1/(2K)$; set to 0 otherwise. For all $u,k$, $y_{u,k} = \max\{ x_{u,k}, \max_v \{z_{u,v,k}\}\}$. Set $C_v = 2K \cdot \hat C_v$. Set $\C{2K} = 2K \cdot \hat{\C 1}$. 

\begin{lemma}
    If $(\hat C, \hat x, \hat y, \hat z)$ is a valid solution to \LP 1, then  $(C,x,y,z)$ is a valid solution to \LP{2K}. 
\label{lem:LPKjm_solution}
\end{lemma}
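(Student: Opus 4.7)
The plan is a constraint-by-constraint verification of \LP{2K} for the constructed tuple. The key observation that drives every case is a uniform bound on how much each variable has been inflated by the rounding: whenever $x_{v,k}=1$, the maximizing rule forces $\hat x_{v,k}\ge 1/K$ (since $\sum_{k'}\hat x_{v,k'}=1$ over $K$ groups); whenever $z_{u,v,k}=1$, the threshold rounding gives $\hat z_{u,v,k}\ge 1/(2K)$; and whenever $y_{v,k}=1$, one of these two cases applies to $\hat y_{v,k}$ via constraints~(\ref{LPjm:y>x}) and~(\ref{LPjm:y>z}), so $\hat y_{v,k}\ge 1/(2K)$. Thus each integer variable is at most $2K$ times its fractional counterpart, while $C_v$ and $\Calpha$ are scaled by exactly $2K$, making most LP inequalities survive the rounding mechanically.

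First I will dispatch the easy constraints. The nonnegativity and structural constraints~(\ref{LPjm:C>0}--\ref{LPjm:z>0}), and the execution constraint~(\ref{LPjm:execution}), are immediate from the definitions. The makespan constraint~(\ref{LPjm:makespan}) follows by scaling $\hat{\C 1}\ge \hat C_v$ by $2K$. The precedence constraint~(\ref{LPjm:precedence}) follows because $\hat C_v\ge \hat C_u+1$ in \LP 1 scales to $C_v\ge C_u+2K\ge C_u+\sum_k x_{v,k}$. For the load constraint~(\ref{LPjm:load}), the variable-wise inequality $y_{v,k}\le 2K\,\hat y_{v,k}$ sums to $\sum_v y_{v,k}\le 2K\,\hat{\C 1}\,|\group k|=\C{2K}\,|\group k|$. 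For the duplication constraint~(\ref{LPjm:duplicates}), the same bound together with \LP 1's duplication bound (with $\alpha=1$) yields $\sum_u z_{u,v,k}\le 2K\sum_u\hat z_{u,v,k}\le 2K(\groupdelay k+\groupdelay\ell)$, matching the right-hand side with $\alpha=2K$.

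The substantive step is the delay constraint~(\ref{LPjm:delay}), and this is where the main obstacle lies. I plan to case-split on the integer values of $(x_{v,k},z_{u,v,k})$. If $x_{v,k}=0$, then by construction $z_{u,v,k}=0$ and both sides reduce to $C_v\ge C_u$, which follows from~(\ref{LPjm:precedence}) in \LP 1 scaled by $2K$. If $x_{v,k}=z_{u,v,k}=1$, the coefficient on $\groupdelay k+\groupdelay\ell$ vanishes and the same bound suffices. The hard case is $x_{v,k}=1$, $z_{u,v,k}=0$: here I will combine $\hat x_{v,k}\ge 1/K$ with $\hat z_{u,v,k}<1/(2K)$ to conclude $\hat x_{v,k}-\hat z_{u,v,k}\ge 1/(2K)$, apply \LP 1's own~(\ref{LPjm:delay}) inequality, and multiply through by $2K$ to obtain $C_v\ge C_u+(\groupdelay k+\groupdelay\ell)$.

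The delicate point, and the justification for the constant $2K$ rather than $K$, is that the maximizing rule for $x$ and the threshold rule for $z$ must be chosen in tandem so that the subtraction $\hat x_{v,k}-\hat z_{u,v,k}$ is bounded below by a quantity whose reciprocal matches the scaling of $C$. A weaker $z$-threshold of $1/K$ would let this difference collapse in the bad case, killing the delay constraint; a tighter threshold would waste slack. Once this single alignment is made, every other constraint comes along for the ride.
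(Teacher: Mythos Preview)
Your proposal is correct and follows essentially the same approach as the paper: both use the key bounds $x_{v,k}\le K\hat x_{v,k}$ and $z_{u,v,k}\le 2K\hat z_{u,v,k}$, dispatch the bulk of the constraints by the uniform $2K$ scaling, verify the load constraint~(\ref{LPjm:load}) via $y_{v,k}\le 2K\hat y_{v,k}$, and handle the delay constraint~(\ref{LPjm:delay}) by the same case split, with the critical case $x_{v,k}=1,\ z_{u,v,k}=0$ yielding $\hat x_{v,k}-\hat z_{u,v,k}\ge 1/(2K)$. Your write-up is slightly more explicit than the paper's (e.g., you spell out the duplicates and precedence constraints, and explain why the threshold must be $1/(2K)$ rather than $1/K$), but the argument is the same.
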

\begin{proof}
    By constraint~(\ref{LPjm:execution}), $\sum_{k} \hat x_{v,k}$ is at least 1, so $\max_{k} \{\hat x_{v,k}\}$ is at least $1/K$.  Therefore, $x_{v,k} \le K \hat{x}_{v,k}$ for all $v$ and $k$. Also, $z_{u,v,k} \le 2K \hat z_{u,v,k}$ for any $u,v,k$ by definition. By the setting of $C_v$ for all $v$, $y_{v,k}$ for all $v,k$, and $\C{2K}$, it follows that constraints~(\ref{LPjm:makespan}, \ref{LPjm:precedence}-\ref{LPjm:z>0}) of \LP{1} imply the respective constraints of \LP{2K}.
    We first establish constraint~(\ref{LPjm:load}). For any fixed group $\group k$,
    \begin{align*}
        2K \hat{C}_1 \cdot |\group k| &\ge 2K \sum_v \hat y_{v,k} = 2K \sum_v \max\{ \hat x_{v,k}, \max_u \{ \hat z_{v,u,k}\} \}  &&\by{constraints~\ref{LPjm:load}, \ref{LPjm:z>0} of \LP 1} \\
        %&\ge 2K \sum_v \frac{\hat x_{v,k} + \max_u \{ \hat z_{v,u,k} \}}{2} \\
        &\ge 2K \sum_v \frac{ x_{v,k} + \max_u \{ z_{v,u,k} \}}{2K} \ge \sum_v y_{v,k} &&\by{definition of $y_{v,k}$} 
    \end{align*}
    which entails constraint~(\ref{LPjm:load}) by $\C{2K} = 2K \hat{\C{1}}$.
   % Since $z_{u,v,k}$ is 1 only if $x_{v,k}$ is 1, constraint~(\ref{LPjm:minexecution}) of \LP{2K} holds.
    %By definition, $z_{u,v,k} \le 2K \hat{z}_{u,v,k}$, implying that constraints~(\ref{LP*:load}) and~(\ref{LP*:nonneg}) of \LP{2K} also hold.  
    It remains to establish constraint~(\ref{LPjm:delay}) for fixed $u,v,k$.  We consider two cases. If $x_{v,k} - z_{u,v,k} \le 0$, then the constraint is trivially satisfied % If $\hat{x}_{v,k} < 1/K$, then $x_{v,k} = 0$, so the constraint is trivially true
    in \LP{2K}. If $x_{v,k} - z_{u,v,k} = 1$, then, by definition of $x$ and $z$, % Otherwise, $x_{v,k} - z_{u,v,k}$ equals $1 - z_{u,v,k}$. If $z_{u,v,k} = 1$, the constraint is again trivially satisfied. Otherwise, we have that
    $\hat x_{v,k} - \hat z_{u,v,k}$ is at least $1/(2K)$. This entails that $\hat C_v \ge \hat C_u + ((\groupdelay k + \groupdelay{\ell})/{2K})$ which establishes constraint~(\ref{LPjm:delay}) of \LP{2K} by definition of $C_v$ and $C_u$.
\end{proof}

\begin{lemma}
    $C_{2K} \le 4  K \cdot \opt$.
\label{lem:LPKjm_makespan}
\end{lemma}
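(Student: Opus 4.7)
The plan is to chain together the two lemmas already established in this subsection, plus the group-rounding bound from Lemma~\ref{lem:group_rounding_jm}, and then multiply through by the scaling factor used in the construction of the rounded solution.

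First, I would let $(\hat C, \hat x, \hat y, \hat z)$ be an optimal solution to \LP{1} on the grouped instance $V', M'$, and write $\hat{\C{1}}$ for its objective value. By Lemma~\ref{lem:LPjm_valid}, $\hat{\C{1}}$ is at most the optimal makespan of the scheduling instance defined over $V', M'$. Next, by Lemma~\ref{lem:group_rounding_jm}, the optimal makespan over $V', M'$ is at most $2 \cdot \opt$, where $\opt$ denotes the optimal makespan on the original input $(V, M)$. Combining these two bounds gives $\hat{\C{1}} \le 2 \cdot \opt$.

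Finally, the construction in Lemma~\ref{lem:LPKjm_solution} sets $\C{2K} = 2K \cdot \hat{\C{1}}$, so
\begin{equation*}
    \C{2K} \;=\; 2K \cdot \hat{\C{1}} \;\le\; 2K \cdot (2 \cdot \opt) \;=\; 4K \cdot \opt,
\end{equation*}
which is exactly the desired bound. No new structural argument is needed; the only subtlety to flag is that $\opt$ inside Lemma~\ref{lem:LPjm_valid} is measured with respect to the rounded instance $V', M'$, which is why the extra factor of $2$ from Lemma~\ref{lem:group_rounding_jm} must be paid before we arrive at the original input's optimum. I do not anticipate any real obstacle here, since all the heavy lifting (feasibility of the rounded LP solution and its scaling relation to $\hat{\C{1}}$) has already been done in Lemma~\ref{lem:LPKjm_solution}.
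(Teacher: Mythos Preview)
Your proposal is correct and follows essentially the same argument as the paper: chain Lemma~\ref{lem:LPjm_valid} (the LP is a relaxation on the rounded instance), Lemma~\ref{lem:group_rounding_jm} (rounding costs a factor of $2$), and the definition $\C{2K} = 2K\cdot \hat{\C{1}}$ from the construction preceding Lemma~\ref{lem:LPKjm_solution}. Your write-up is in fact more explicit than the paper's about where the factor of $2$ enters, which is helpful.
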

\begin{proof}
    Lemma~\ref{lem:group_rounding_jm} shows that our grouping of machines does not increase the value of the LP by more than a factor of 2. Therefore, by Lemmas~\ref{lem:LPjm_valid} and \ref{lem:LPKjm_solution}, $C_{2K} = 2K \cdot \hat C_1 \le 4 K \cdot \opt$. 
\end{proof}

\subsection{Computing a schedule given an integer solution to the LP}
\label{sec:schedule_jm}

Suppose we are given a partition of $M$ into $K$ groups such that group $\group k$ is composed of identical machines (i.e. for all $i,j \in \group k$, $\delay i = \delay j$). Also, suppose we are given a partition of $V$ into $L$ groups such that group $\jobgroup{\ell}$ is composed of jobs with identical in-delay. Finally, we are given a rounded solution $(C,x,y,z)$ to \LP a defined over machine groups $\group 1, \ldots, \group K$ and job groups $\jobgroup 1, \ldots \jobgroup L$. In this section, we show that we can construct a schedule that achieves an approximation for machine delays in terms of $\alpha, K$, and $L$. The combinatorial subroutine that constructs the schedule is defined in Algorithm~\ref{alg:scheduler_jm}.  In the algorithm, we use a subroutine \udps-Solver for Uniform Delay Precedence-Constrained Scheduling.  An $O(\log \delay{}/\log\log \delay{})$-asympototic approximation is given in~\cite{LR02}.  For completeness, we use the \udps-Solver presented and analyzed in Section~\ref{sec:uniform}, which generalizes the algorithm of~\cite{LR02} to incorporate non-uniform machine sizes.

\junk{
\paragraph{Uniform Delay Precedence-constrained Scheduling (\udps).}
We are given a set of unit size partially ordered jobs, a set of identical machines, and a delay $\delta$. In a schedule, if job $v$ is executed on machine $i$ at time $t$, then the result of $v$ is available to $i$ at time $t+1$ and to all other machines at time $t + \delta$. In order to execute $v$ on $i$ at time $t$, all of $v$'s predecessors must be available to $i$ at time $t$. The goal is to construct a schedule to minimize makespan. Duplication is allowed.

Note that \udps\ is 

Note that \udps\ is identical to machine delays and job precedence delays if, for all $i,j$, $\delayin i = \delayout i = \delayin j  = \delayout j$ and, for all $u,v$, $\delayin u = \delayout u = \delayin v = \delayout v$. This is equivalent to DAG scheduling with a single, uniform delay. , we give an algorithm \udps-Solver for \udps, which is obtained by generalizing .  This algorithm is taken from \cite{LR02} and slightly generalized to fit our framework.
}

\begin{algorithm}
\setstretch{1.25}
\KwInit{$\forall v, \ \sigma(v) \gets \varnothing;\ T \gets 0;\ \theta \gets 0$}
\While{$T \le \Calpha$}{
    \ForAll{machine groups $\group k$}{
        \For{job group $\jobgroup{\ell} = \jobgroup L$ to $\jobgroup 1$: $\exists$ integer $d,\ T = d (\groupdelay k + \groupdelay{\ell})$}{
            %\If{for some integer $d$, $T = d (\groupdelay k + \groupdelay{\ell})$}{
                $V_{k,\ell,d} \gets \{v \in \jobgroup{\ell}: x_{v,k} = 1$ and $T \le C_v < T+ \groupdelay k + \groupdelay{\ell} \} $ \label{line:Vkld}\;
                $U_{k,\ell,d} \gets \{u : \exists v \in V_{k,d},\  u \prec v$ and $T \le C_u < T + \groupdelay k + \groupdelay{\ell} \}$ \label{line:Ukld}\;
                $\sigma' \gets $ \udps-Solver on $(V_{k,\ell,d} \cup U_{k,\ell,d}, \group k, \groupdelay k + \groupdelay{\ell} )$\;
                $\forall v,i,t,\ $ if $(i,t) \in \sigma'(v)$ then $\sigma(v) \gets \sigma(v) \cup \{ (i, \theta + \groupdelay k + \groupdelay{\ell} + t)\}$ \label{line:append_jm} \label{line:append}\;
                $\theta \gets \theta + 2(\groupdelay k + \groupdelay{\ell})$
            %}
        }
    }
    $T \gets T + 1$\;
}
\caption{Machine Delay Scheduling with Duplication}
\label{alg:scheduler_jm}
\label{alg:scheduler}
\end{algorithm}

We now describe Algorithm~\ref{alg:scheduler_jm} informally. The subroutine takes as input the rounded \LP a solution $(C,x,y,z)$ and initializes an empty schedule $\sigma$ and global parameters $T,\theta$ to 0. For a fixed value of $T$, we iterate through all machine groups $\group k$ and job groups $\jobgroup{\ell}$, with decreasing $\ell$. For a fixed value of $T, k, \ell$, we check if there is some integer $d$ such that $T = d (\groupdelay k + \groupdelay{\ell})$. If so, we define $V_{k,\ell,d}$ and $U_{k,\ell,d}$ as in lines \ref{line:Vkld} and \ref{line:Ukld}. % to be the set of jobs $v$ such that $x_{v,k} = 1$ and $T \le C_v < T + \groupdelay k$. We define $U_{v,k}$ to be the set of jobs $u$ such that, for some job $v \in V_{k,d}$, we have $u \prec v$ and $T \le C_u \le t + \groupdelay k$. 
We then call \udps-Solver to construct a \udps\ schedule $\sigma'$ on jobs $V_{k,\ell,d} \cup U_{k,\ell,d}$, machines in $\group k$, and delay $\groupdelay k + \groupdelay{\ell}$.  We then append $\sigma'$ to $\sigma$. Once all values of $k,\ell$ have been checked, we increment $T$ and repeat until all jobs are scheduled. The structure of the schedule produced by Algorithm~\ref{alg:scheduler_jm} is depicted in Figure~\ref{fig:schedule_structure}. Lemma~\ref{lem:uniform_jm} (entailed by Lemma~\ref{lem:udps}) provides guarantees for the \udps-Solver subroutine.
\begin{lemma}
    Let $U$ be a set of $\eta$ jobs such that for any $v \in U,\ |\{ u \in U: u \prec v\}| \le \alpha \delta$. Given input $U$, a set of $\mu$ identical machines, and delay $\delta$, \udps-Solver produces, in polynomial time, a valid \udps\ schedule with makespan at most $3 \alpha\delta\log(\alpha\delta) + (2\eta/\mu)$.
    \label{lem:uniform_jm}
\end{lemma}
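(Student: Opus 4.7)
The plan is to generalize the algorithm of Lepere and Rapine~\cite{LR02} for \udps, which is designed for uniform single-processor machines, by incorporating an additional load-balancing step that exploits the $\mu$ identical machines in parallel. The lemma is in fact a specialization of the more general Lemma~\ref{lem:udps} proved in Section~\ref{sec:uniform}, which allows non-uniform machine sizes; the proof I sketch here is the uniform-size restriction of that argument.

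The starting observation is that the assumption $|\{u \in U : u \prec v\}| \le \alpha\delta$ bounds the height of the precedence DAG restricted to $U$ by $\alpha\delta$. I would partition $U$ into \emph{levels} according to topological position, then aggregate consecutive levels into $O(\log(\alpha\delta))$ \emph{super-phases} in the style of~\cite{LR02}. Within a single super-phase, I would construct a schedule that executes all of its jobs in $3\alpha\delta$ time steps on an unbounded set of machines, using duplication across machines to eliminate the need for a full $\delta$-delay between jobs scheduled in the same super-phase. The bounded-predecessor assumption is essential here: it ensures that the total set of ancestors that may need to be duplicated onto a single machine to serve a target job $v$ has size at most $\alpha\delta$, so each machine's timeline inside a super-phase has length at most $O(\alpha\delta)$. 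To move from unbounded to $\mu$ machines, I would then apply a Graham-style list-scheduling step that lays the super-phase schedule greedily onto $\mu$ machines while respecting the precedence and duplicate placements; standard arguments show that this increases the makespan of the super-phase by at most $2 \cdot (\text{work in the super-phase})/\mu$.

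Summing the depth contributions over the $O(\log(\alpha\delta))$ super-phases gives the $3\alpha\delta \log(\alpha\delta)$ term, while summing the load contributions telescopes to $2\eta/\mu$, since every job (including duplicates generated by the algorithm) is accounted for only a constant number of times. The main obstacle will be showing that the duplication strategy does not cascade uncontrollably across super-phase boundaries: a predecessor $u$ of a job $v$ in super-phase $k$ may itself require duplication of \emph{its} predecessors to be executable on $v$'s machine, and so on recursively. Controlling this cascade is precisely where $|\{u \in U : u \prec v\}| \le \alpha\delta$ is indispensable, since it caps the total recursive duplication budget for each target job by $\alpha\delta$, which in turn caps each super-phase's depth by $O(\alpha\delta)$ and therefore its contribution to the makespan bound. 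Polynomial runtime follows because the super-phase decomposition, the duplication assignment, and the list scheduling are each straightforward polynomial-time procedures.
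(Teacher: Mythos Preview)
Your proposal correctly identifies the lemma as the uniform-machine specialization of Lemma~\ref{lem:udps}, but your sketch of the argument does not match what \udps-Solver actually does, and the mismatch hides a real gap.

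The algorithm in Section~\ref{sec:uniform} is \emph{not} a topological-level decomposition followed by aggregation into $O(\log(\alpha\delta))$ super-phases. Instead, each round processes every remaining job $v$ and places the full predecessor set $U_{v,t}$ onto the currently least-loaded machine \emph{only if} at least half of those predecessors are not yet placed in this round (the condition $|U_{v,t}| \ge 2\,|U^{\dup}_{v,t}|$ on line~\ref{line:overlap}). This ``half-new'' admission rule is the crux, and it does two things simultaneously: (i) any job that remains unscheduled after a round has had its count of unscheduled predecessors halved, which yields the $\log(\alpha\delta)$ bound on the number of rounds; and (ii) within each round the total placed work $\sum_i |V_{i,t}|$ is at most $2\,|\bigcup_i V_{i,t}|$ (Lemma~\ref{lem:dup_load}), and it is this that makes the per-round load contributions telescope to $2\eta/\mu$.

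Your sketch asserts that ``every job (including duplicates generated by the algorithm) is accounted for only a constant number of times,'' but the mechanism you describe does not deliver this. If within a super-phase you place each target job $v$ together with its $\le \alpha\delta$ predecessors on a dedicated machine, the cap $|\{u \prec v\}| \le \alpha\delta$ only bounds the work devoted to serving a single target; summed over all targets in the phase the total work can be as large as $(\text{number of targets})\cdot \alpha\delta$, and across phases this gives a load term of order $\eta\alpha\delta/\mu$ rather than $2\eta/\mu$. The predecessor bound controls the \emph{depth} contribution of each round, but not the aggregate duplication; the latter requires exactly the half-new admission rule you have omitted. Your discussion of ``cascading across super-phase boundaries'' is therefore aimed at the wrong obstacle: the danger is not recursive duplication across phases but over-counting within a single phase when many targets share predecessors.
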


\begin{lemma}
    Algorithm~\ref{alg:scheduler_jm} outputs a valid schedule in polynomial time.
\label{lem:alg_valid_polytime}
\end{lemma}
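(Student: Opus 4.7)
The plan is to verify polynomial runtime first and then schedule validity. For runtime, I would note that the outer while loop runs at most $\Calpha + 1$ times, and by Lemma~\ref{lem:LPKjm_makespan} this is polynomial in $n$; the inner iterations cover $K L \le \log^2 n$ pairs $(k, \ell)$, and for each $(k, \ell)$ only those $T$ that are integer multiples of $\groupdelay k + \groupdelay \ell$ trigger a call to \udps-Solver, which runs in polynomial time by Lemma~\ref{lem:uniform_jm}. Constructing $V_{k,\ell,d}$ and $U_{k,\ell,d}$ and appending to $\sigma$ is polynomial bookkeeping, giving polynomial total runtime.

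To show every job is placed in $\sigma$, I would invoke constraint~\ref{LPjm:execution} together with the rounding of Section~\ref{sec:round_jm} to conclude that each job $v$ has a unique $\group k$ with $x_{v,k} = 1$; letting $v \in \jobgroup \ell$ and $d = \floor{C_v/(\groupdelay k + \groupdelay \ell)}$, the job $v$ falls into $V_{k,\ell,d}$ when $T = d(\groupdelay k + \groupdelay \ell)$, so at least one copy of $v$ is placed on a machine of $\group k$ by \udps-Solver. For precedence-with-delay validity, I would fix any pair $u \prec v$ with $v \in V_{k,\ell,d}$ and split into two cases. If $C_u \in [T, T + \groupdelay k + \groupdelay \ell)$ then $u \in U_{k,\ell,d}$ and both jobs are handed to the same \udps-Solver invocation with delay parameter $\groupdelay k + \groupdelay \ell$; by Lemma~\ref{lem:uniform_jm} either a copy of $u$ is colocated with $v$ on the same machine of $\group k$, or $u$ completes at least $\groupdelay k + \groupdelay \ell$ before $v$ starts, which dominates the required $\delay i + \delay v$ for $i \in \group k$. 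If instead $C_u < T$, then $u$ was scheduled in a strictly earlier iteration, its executions in $\sigma$ lie at times at most $\theta$ at the moment iteration $(k,\ell,d)$ begins, and $v$ is placed no earlier than $\theta + \groupdelay k + \groupdelay \ell$ by line~\ref{line:append}, furnishing the required cross-segment gap.

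The main obstacle I anticipate is ruling out temporal collisions on the same machine when two consecutive segments for the same machine group $\group k$ overlap: if the \udps-Solver output exceeds the $2(\groupdelay k + \groupdelay \ell)$ increment by which $\theta$ advances, two placements could in principle claim the same machine-time slot. I would address this by combining the LP load constraint~\ref{LPjm:load} with the rounded-solution bound from Lemma~\ref{lem:LPKjm_makespan} to control $|V_{k,\ell,d}| + |U_{k,\ell,d}|$, so the $2\eta/\mu$ term in Lemma~\ref{lem:uniform_jm} stays within the buffer budget; any residual slack is absorbed by rescaling $\theta$ by a polylogarithmic factor, at a cost that is subsumed by the makespan analysis proved in the subsequent lemmas.
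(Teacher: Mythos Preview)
Your polynomial-time argument and the two-case validity split (within-segment handled by the \udps-Solver guarantee, cross-segment handled by the $\groupdelay k + \groupdelay{\ell}$ buffer inserted in line~\ref{line:append}) are essentially the paper's proof, just written out more carefully; the paper dispatches both points in two sentences.

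The genuine gap is in your proposed remedy for temporal collisions. Constraint~\ref{LPjm:load} bounds only the \emph{total} load on $\group k$ across all phases by $\Calpha \cdot |\group k|$; nothing prevents a single phase $(k,\ell,d)$ from absorbing all of it, so the $2\eta/\mu$ term in Lemma~\ref{lem:uniform_jm} can be as large as $2\Calpha$, which is not bounded by any polylogarithmic multiple of $\groupdelay k + \groupdelay{\ell}$. Hence ``rescaling $\theta$ by a polylogarithmic factor'' cannot work. The paper never tries to fit each segment into a fixed-width slot: the prose says ``we then append $\sigma'$ to $\sigma$'', and the makespan proof in Lemma~\ref{lem:jm_integer_approximation} is obtained by \emph{summing} the \udps-Solver makespans over all $(k,\ell,d)$, so segments are understood to be laid out end-to-end in time. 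Under that reading $\theta$ always sits at or past the end of the previously appended segment, so neither machine collisions nor a missing communication gap can arise, and the paper's validity argument goes through. The clean fix you should argue is simply that $\theta$ advances to the end of the just-placed segment (plus the buffer), not that segment lengths can be squeezed into the fixed increment.
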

\begin{proof}
    It is easy to see that the algorithm runs in polynomial time, and Lemma~\ref{lem:uniform_jm} entails that precedence constraints are obeyed on each machine. Consider a fixed $v,k,d$ such that $v \in V_{k,d}$. By line~\ref{line:append_jm}, we insert a communication phase of length $\groupdelay k + \groupdelay{\ell}$ before appending the schedule of any set of jobs $V_{k,\ell,d} \cup U_{k,\ell,d}$ on any machine group $\group k$. So, by the time Algorithm~\ref{alg:scheduler_jm} executes any job in $V_{k,d,\ell}$, every job $u$ such that $C_u < d (\groupdelay k + \groupdelay{\ell})$ is available to all machines, including those in group $\group k$. So the only predecessors of $v$ left to execute are those jobs in $U_{k,\ell,d}$. Therefore, all communication constraints are satisfied.
\end{proof}

\begin{lemma}
    If $(C,x,y,z)$ is a rounded solution to \LP a then Algorithm~\ref{alg:scheduler_jm} outputs a schedule with makespan at most $12 \alpha \log( \delay{\max})(K L \Calpha + \delay{\max}(K + L))$.
\label{lem:jm_integer_approximation}
\end{lemma}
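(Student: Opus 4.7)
The plan is to bound the makespan of $\sigma$ by summing, over every triple $(k,\ell,d)$ for which the algorithm invokes \udps-Solver, the length of the inserted communication gap (at most $2(\groupdelay{k}+\groupdelay{\ell})$) plus the length of the \udps\ schedule $\sigma'$ returned on input $V_{k,\ell,d}\cup U_{k,\ell,d}$. The analysis thus reduces to (i) applying Lemma~\ref{lem:uniform_jm} to each invocation and (ii) aggregating the resulting terms across all triples.

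For step (i), the hypothesis of Lemma~\ref{lem:uniform_jm} requires that every $w\in V_{k,\ell,d}\cup U_{k,\ell,d}$ has at most $\alpha(\groupdelay{k}+\groupdelay{\ell})$ predecessors inside the same set. I would verify this by splitting on which set $w$ belongs to. If $w\in V_{k,\ell,d}$, then $x_{w,k}=1$ and $w\in\jobgroup{\ell}$; any predecessor $w'$ in the set has $C_{w'}\in[T,T+\groupdelay{k}+\groupdelay{\ell})$ with $C_w$ in the same window, so constraint~(\ref{LPjm:delay}) forces $z_{w',w,k}=1$. If $w\in U_{k,\ell,d}$ with witnessing successor $v\in V_{k,\ell,d}$, any predecessor $w'$ of $w$ in the set also satisfies $w'\prec v$ and $C_{w'}$ in the window, so constraint~(\ref{LPjm:delay}) applied to the pair $(w',v)$ with $x_{v,k}=1$ yields $z_{w',v,k}=1$. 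In either case constraint~(\ref{LPjm:duplicates}) caps the predecessor count by $\alpha(\groupdelay{k}+\groupdelay{\ell})$, and Lemma~\ref{lem:uniform_jm} then gives $|\sigma'|\le 3\alpha(\groupdelay{k}+\groupdelay{\ell})\log(\alpha(\groupdelay{k}+\groupdelay{\ell}))+2|V_{k,\ell,d}\cup U_{k,\ell,d}|/|\group{k}|$.

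For step (ii), I would separately bound the three resulting sums over $(k,\ell,d)$. For fixed $(k,\ell)$, the valid $d$'s are integers in $[0,\Calpha/(\groupdelay{k}+\groupdelay{\ell})]$, so their count is at most $\Calpha/(\groupdelay{k}+\groupdelay{\ell})+1$; multiplying by $\groupdelay{k}+\groupdelay{\ell}$ telescopes both the communication-phase term and the logarithmic \udps\ term to $KL\Calpha+\sum_{k,\ell}(\groupdelay{k}+\groupdelay{\ell})$, and the geometric structure $\groupdelay{k},\groupdelay{\ell}\in\{2^1,2^2,\ldots\}$ gives $\sum_{k,\ell}(\groupdelay{k}+\groupdelay{\ell})\le 2(K+L)\delaymax$. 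For the load term, the sets $V_{k,\ell,d}$ are disjoint across $d$ for fixed $(k,\ell)$, and each $u$ with $y_{u,k}=1$ lies in at most one $U_{k,\ell,d}$ since its placement time $C_u$ falls in a unique window of length $\groupdelay{k}+\groupdelay{\ell}$; hence $\sum_d|V_{k,\ell,d}\cup U_{k,\ell,d}|\le 2\sum_v y_{v,k}\le 2\Calpha|\group{k}|$ by constraint~(\ref{LPjm:load}), and dividing by $|\group{k}|$ and summing over $(k,\ell)$ yields $4KL\Calpha$. Combining the three bounds and absorbing the factor of $\log\alpha\le\log\delaymax$ into the prefactor produces the claimed $12\alpha\log(\delaymax)(KL\Calpha+\delaymax(K+L))$.

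The main obstacle is the predecessor-count step: establishing the hypothesis of Lemma~\ref{lem:uniform_jm} requires routing each in-set dependency through an LP $z$-variable of some common target in $V_{k,\ell,d}$, which is what makes constraint~(\ref{LPjm:duplicates}) applicable. The case $w\in U_{k,\ell,d}$ is the subtle one, because the natural target for the delay constraint is not $w$ itself but its witness $v$; using the fact that $V_{k,\ell,d}\cup U_{k,\ell,d}$ lies entirely in a single time window of length $\groupdelay{k}+\groupdelay{\ell}$ is what makes every predecessor transit through such a witness. Once this is in place, the remaining arithmetic is routine bookkeeping of geometric sums.
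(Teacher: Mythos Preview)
Your proof is correct and follows essentially the same structure as the paper's: first verifying the hypothesis of Lemma~\ref{lem:uniform_jm} via constraints~(\ref{LPjm:delay}) and~(\ref{LPjm:duplicates}), then aggregating the per-phase bounds using constraint~(\ref{LPjm:load}) and the geometric sum $\sum_{k,\ell}(\groupdelay{k}+\groupdelay{\ell})\le 2(K+L)\delaymax$. Your treatment of the predecessor-count hypothesis is in fact slightly more thorough than the paper's, which explicitly checks the bound only for $w\in V_{k,\ell,d}$; your witness argument routing through $v\in V_{k,\ell,d}$ when $w\in U_{k,\ell,d}$ correctly fills this small gap.
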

\begin{proof}
    Fix any schedule $\sigma$. Note that the schedule produced by the algorithm executes a single job group on a single machine group at a time. Our proof establishes a bound for the total time spent executing a single job group on a single machine group, then sums this bound over all $K$ machine groups and $L$ job groups. % Since there are $L$ job groups and $K$ machine groups, we need only to show that the total amount of time spent executing any job group on any machine group is at most $7\alpha \beta \log (\alpha n) \cdot (\opt + \delay{})$.
    
    \begin{claim}
    For any $v,u,k,\ell,d$, if $v \in V_{k,\ell,d}$ and $C_v < C_u + (\groupdelay k + \groupdelay{\ell})$ then $z_{u,v,k,\ell} = 1$. 
    %For any $u,k,d$, if $u$ is in  $U_{k,d}$, then for some $v$ in $V_{k,d}$ we have $z_{u,v,k} = 1$.
    \label{clm:jm_z}
    \end{claim}
    \begin{proof}
    Fix $u,v,k,\ell,d$ such that $v \in V_{k,\ell,d}$ and $C_v < C_u + (\groupdelay k + \groupdelay{\ell})$. By the definition of $V_{k,d}$, $x_{v,k}$ is 1. By constraint~\ref{LPjm:delay}, $C_v \ge C_u + \groupdelay k (1 - z_{u,v,k})$, implying that $z_{u,v,k}$ cannot equal 0.  Since $z_{u,v,k}$ is either 0 or 1, we have $z_{u,v,k} = 1$.
    %Fix a job $u$.  If $u$ is in $U_{k,d}$, then by the definition of $U_{k,d}$, there exists a job $v$ in $V_{k,d}$ such that $u \prec v$ and $T \le C_u < T + \groupdelay k$, where $T = d \groupdelay k$.  By the definition of $V_{k,d}$, $x_{v,k}$ is 1 and $T \le C_v < T + \groupdelay k$. Therefore, $C_v - C_u < \groupdelay k$.  By constraint~\ref{LPjm:delay}, $C_v \ge C_u + \groupdelay k (1 - z_{u,v,k})$, implying that $z_{u,v,k}$ cannot equal 0.  Since $z_{u,v,k}$ is either 0 or 1, we have $z_{u,v,k} = 1$.
    \end{proof}
    
    \begin{claim}
    For any $k,\ell$, we show 
    \begin{enumerate*}[label=(\alph*)]
        \item $\sum_d |V_{k,\ell,d} \cup U_{k,\ell,d}| \le \Calpha \cdot |\group k|$ and \label{prop:load}
        \item for any $d$ and $v \in V_{k,\ell,d}$, the number of $v$'s predecessors in $V_{k,\ell,d} \cup U_{k,\ell,d}$ is at most $\alpha (\groupdelay k + \groupdelay{\ell})$. \label{prop:phase_preds}
    \end{enumerate*}
    \label{clm:properties}
    \end{claim}
    \begin{proof} 
    Fix $k,\ell$. We first prove \ref{prop:load}. For any $v$ in $V_{k,\ell,d}$ we have $x_{v,k}=1$ by the definition of $V_{k,\ell,d}$.  Consider any $u$ in $U_{k,\ell,d}$. By definition, there exists a $v' \in V_{k,\ell,d}$ such that $x_{v',k} = 1$ and $C_v < C_u + (\groupdelay k + \groupdelay{\ell})$; fix such a $v'$. By claim~\ref{clm:jm_z}, $z_{u,v',k} = 1$. So, by constraint~\ref{LPjm:y>z}, $y_{v,k} = 1$ for every job $v \in V_{k,\ell,d} \cup U_{k,\ell,d}$.  For any $d' \ne d$, $V_{k,\ell,d}$ and $V_{k,\ell,d'}$ are disjoint. So  $\sum_d |V_{k,d} \cup U_{k,d}|$ is at most the right-hand side of constraint~\ref{LPjm:load}, which is at most  $\Calpha \cdot |\group k|$.
    
    We now prove \ref{prop:phase_preds}. Fix $v,d$ such that $v \in V_{k,\ell,d}$. Consider any $u$ in $V_{k,d} \cup  U_{k,d}$ such that $u \prec v$. By definition of $V_{k,\ell,d}$ and $U_{k,\ell,d}$, $C_v < C_u + (\groupdelay k + \groupdelay{\ell})$. By Claim~\ref{clm:jm_z}, $z_{u,v,k} = 1$. The claim then follows from  constraint~(\ref{LPjm:duplicates}). 
    \end{proof}
        
    % \begin{claim}
    % \label{clm:D}
    % For any $k$, $|\bigcup_d V_{k,d} \cup U_{k,d}| \le 2\Calpha \cdot |\group k|$.
    % \end{claim} 
    % \begin{proof} 
    % Fix a $k$.  For any $v$ in $V_{k,d}$ we have $x_{v,k}=1$ by the definition of $V_{k,d}$.  Consider any $u$ in $U_{k,d}$.  By Claim~\ref{clm:z}, there exists a $v$ in $V_{k,d}$ such that $z_{u,v,k}$ equals 1; fix such a $v$.  By constraint~\ref{LPjm:dup-place}, $y_{u,k}$ equals 1.  Thus, $|\bigcup_d V_{k,d} \cup U_{k,d}|$ is at most twice the right-hand side of constraint~\ref{LPjm:load}, which is at most  $2\Calpha \cdot |\group k|$.
    % \end{proof}
    
    By Lemma~\ref{lem:uniform_jm} and Claim~\ref{clm:properties}\ref{prop:phase_preds}, the  time spent executing jobs in $\jobgroup{\ell}$ on machines in $\group k$ is at most
    \begin{align*}
        \sum_d \left( 3\alpha (\groupdelay k + \groupdelay{\ell}) \log(\alpha (\groupdelay k + \groupdelay{\ell}))   + \frac{2 \cdot |V_{k,\ell,d} \cup U_{k,\ell,d}|}{|\group k|} \right) %&= \ceil{\frac{\Calpha}{\groupdelay k}}  3\alpha \groupdelay k \log(\alpha \groupdelay k)  + \frac{2\sum_d |V_{k,d} \cup U_{k,d}|}{|\group k|}  \\
        %&\le \left( \frac{\Calpha}{\groupdelay k} + 1 \right) 3\alpha \groupdelay k \log(\alpha \groupdelay k)  + \frac{4\Calpha \cdot  |\group k|}{|\group k|}   \\
        %&\le 7\Calpha \alpha  \log(\alpha \groupdelay k) +  3\alpha \groupdelay k \log(\alpha \groupdelay k)  
    \end{align*}
    The summation over the first term is at most $\ceil{\Calpha/(\groupdelay k + \groupdelay{\ell})}  3\alpha (\groupdelay k + \groupdelay{\ell}) \log(\alpha (\groupdelay k + \groupdelay{\ell}))$ which is at most $3\Calpha \alpha  \log(\alpha (\groupdelay k + \groupdelay{\ell})) +  3\alpha (\groupdelay k + \groupdelay{\ell}) \log(\alpha (\groupdelay k + \groupdelay{\ell}))$. The summation over the second term is at most $2\Calpha$ by claim~\ref{clm:properties}\ref{prop:load}. Summing over all $K$ machine groups and $L$ job groups, and considering $K,L \le \log \delay{\max}$, the total length of the schedule is at most $12 \alpha \log( \delay{\max})(K L \Calpha + \delay{\max}(K + L))$. 
    %where the second line follows by Claim~\ref{clm:D}. Summing over all $k$, the total time in the schedule is at most $5KD\alpha \log(\alpha \rho m)$. The lemma follows from the fact that $\Calpha \le \beta \cdot (\makespan{\sigma} + \delay{\sigma})$ and $\groupdelay k \le n$.
\end{proof}

\begin{figure}
    \centering
    \includegraphics[width=\textwidth]{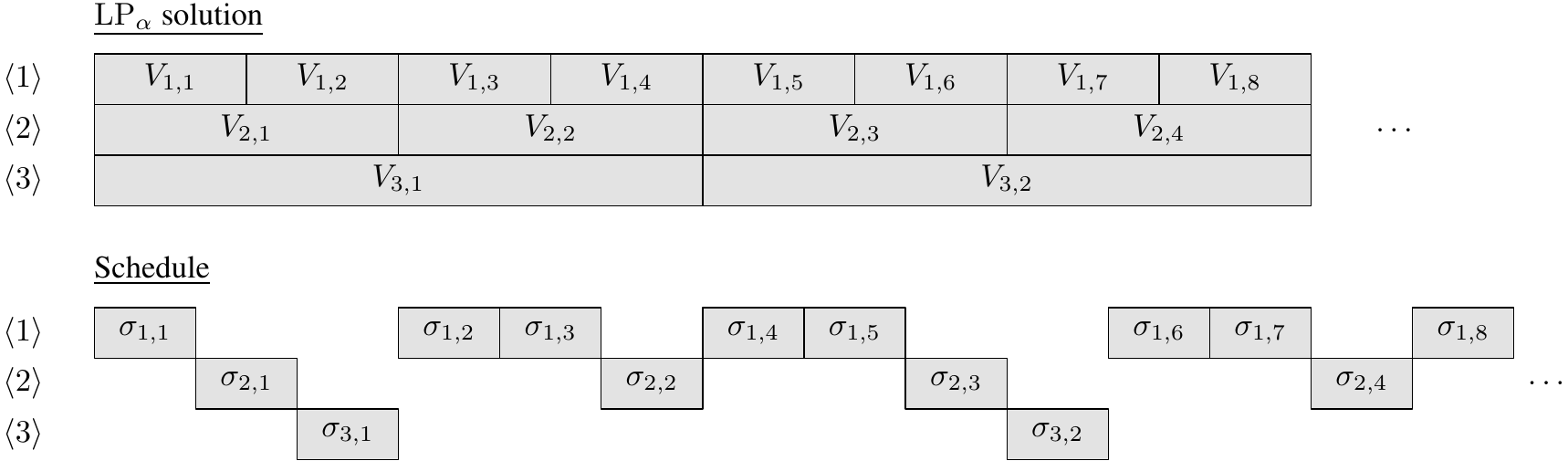}
    \caption{Structure of the schedule produced by Algorithm~\ref{alg:scheduler_jm}. $\sigma_{k,d}$ denotes a schedule of $V_{k,d}$ on the machines in group $\group k$. The algorithm scans the $\text{LP}_{\alpha}$ solution by increasing time (left to right). At the start of each $V_{k,d}$, the algorithm constructs a schedule of the set and appends it to the existing schedule.}
    \label{fig:schedule_structure}
\end{figure}

\begin{theorem}[Job Delays and Machine Delays]
    There exists a polynomial time algorithm to compute a valid machine delays and job precedence delays schedule with makespan $O((\log n)^9(\opt + \delay{\max}))$.
\label{thm:jm_approximation}
\end{theorem}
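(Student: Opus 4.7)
The plan is simply to chain together the four components built in Sections~\ref{sec:out-in}--\ref{sec:schedule_jm} and read off the resulting approximation ratio. Concretely, given an arbitrary instance with both in- and out-delays on machines and jobs, first apply Lemma~\ref{lem:reduction} (Algorithm~\ref{alg:reduction}) to reduce to an instance where $\delayout i = 0$ for every machine $i$ and $\delayout v = 0$ for every job $v$; this transformation and its inverse cost an overall $O(\log^2 \delaymax) = O(\log^2 n)$ factor in the approximation and justify the in-delay-only assumption stated at the top of Section~\ref{sec:job_machine_delays}. From here on we work with the reduced instance.

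Next, invoke the rounding of Section~\ref{sec:partition_jm} to replace the machine set by $M'$ and the job set by $V'$, which organizes machines and jobs into $K\le \log n$ and $L\le \log n$ groups of uniform delay, respectively, at the cost of only a factor of $2$ (Lemma~\ref{lem:group_rounding_jm}). Solve \LP 1 to optimality in polynomial time; by Lemma~\ref{lem:LPjm_valid} its optimum $\hat{\C 1}$ is at most $\opt$. Then apply the coordinate-wise rounding of Section~\ref{sec:round_jm} with scale factor $2K$, producing an integer solution $(C,x,y,z)$ of \LP{2K}. Lemma~\ref{lem:LPKjm_solution} guarantees feasibility, and Lemma~\ref{lem:LPKjm_makespan} gives $\C{2K}\le 4K\cdot \opt$.

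Finally, feed this rounded solution into Algorithm~\ref{alg:scheduler_jm}. Lemma~\ref{lem:alg_valid_polytime} confirms that the algorithm runs in polynomial time and produces a valid schedule, and Lemma~\ref{lem:jm_integer_approximation} bounds its makespan by
\[
12\,\alpha\log(\delaymax)\bigl(K L\,\Calpha + \delaymax(K+L)\bigr)
\]
with $\alpha=2K$. Substituting $\Calpha\le 4K\cdot\opt$ and $K,L=O(\log n)$ and $\log \delaymax = O(\log n)$ yields
\[
O\!\bigl(K^3 L\log(\delaymax)\cdot\opt + K^2\log(\delaymax)\cdot\delaymax\bigr)
= O\!\bigl(\log^{5} n\,(\opt + \delaymax)\bigr)
\]
for the in-delay-only instance. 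Composing with the $O(\log^{2} n)$-cost reduction (applied once to transform the input and once to lift the output schedule back to the original model) multiplies this bound by a further $O(\log^{4} n)$, giving the claimed $O(\log^{9} n)(\opt + \delaymax)$ makespan.

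All steps are polynomial time, and every intermediate guarantee has already been established. The only substantive bookkeeping is tracking how the log factors accumulate: the scheduler contributes $\alpha\log\delaymax\cdot KL = O(\log^{5} n)$ on top of $\opt$ once we substitute $\alpha = 2K$ and $\Calpha = O(K\cdot\opt)$, and the out-delay reduction supplies the remaining $O(\log^{4} n)$. There is no real obstacle beyond this accounting; the theorem is essentially a corollary of the lemmas assembled in Sections~\ref{sec:out-in}--\ref{sec:schedule_jm}.
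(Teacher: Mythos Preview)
Your proposal is correct and follows exactly the paper's approach: chain Lemmas~\ref{lem:LPKjm_solution}, \ref{lem:LPKjm_makespan}, and~\ref{lem:jm_integer_approximation} to obtain an $O(\log^5 n)(\opt+\delaymax)$ bound for the in-delay-only case, then invoke Lemma~\ref{lem:reduction} to absorb the additional $O(\log^4 n)$ factor from handling out-delays. The only cosmetic difference is that the paper applies Lemma~\ref{lem:reduction} once at the end as a black box (which already encapsulates both the forward transformation and the lift back), whereas you describe the two directions separately; the arithmetic is the same.
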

\begin{proof}
Lemma~\ref{lem:LPKjm_solution} entails that $(C,x,y,z)$ is a valid solution to \LP{2K}. Lemma~\ref{lem:LPKjm_makespan} entails that $\C{2K} \le 4K \cdot \opt$. With $\alpha = 2K$,  Lemma~\ref{lem:jm_integer_approximation}  entails that the makespan of our schedule is at most $12 \alpha \log( \delay{\max})(K L \Calpha + \delay{\max}(K + L)) = 48 (\log \delay{\max})^5 \opt + 24 (\log \delay{\max})^3 \delay{\max}$ for the case with no out-delays. By Lemma~\ref{lem:reduction}, the length of our schedule is $O((\log \delay{\max})^9 (\opt + \delay{\max})$ The theorem is entailed by $\delay{\max} \le n$.  This proves the theorem.
\end{proof}

\begin{corollary}[Machine Delays]
    There exists a polynomial time algorithm to compute a valid machine delays schedule with makespan $O((\log n)^5 \cdot (\opt + \delay{}))$.
\label{cor:machine}
\end{corollary}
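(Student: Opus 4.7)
The plan is to view scheduling under machine delays (with no job delays) as the special case of Theorem~\ref{thm:jm_approximation} in which $\delay v = 0$ for every job $v$, and to re-examine the analysis with this parameter setting so as to extract a tighter exponent on $\log n$.

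First, I would invoke Lemma~\ref{lem:reduction} to reduce to an instance with only in-delays. Since there are no job delays, the overhead of this reduction collapses to a constant factor (the $O(\log^2 \delaymax)$ overhead of that lemma was incurred solely by the job-delay adjustment), so the reduced instance has optimum at most $O(\opt + \delaymax)$. From this point on I can assume $\delayout i = 0$ and work with in-delays only.

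Next, I would observe that with $\delay v = 0$ for all $v$, the job-partition of Section~\ref{sec:partition_jm} collapses to a single job group, so $L = 1$, while the machine-partition still gives $K = O(\log n)$. I would then run \LP 1 and the rounding of Section~\ref{sec:round_jm} to obtain a rounded solution $(C,x,y,z)$ to \LP{2K} with $\C{2K} \le 4K\cdot \opt$ (Lemma~\ref{lem:LPKjm_makespan}), and feed it to Algorithm~\ref{alg:scheduler_jm}. The main step is to re-plug these parameters into the bound of Lemma~\ref{lem:jm_integer_approximation}, namely
\[
  12\alpha \log(\delaymax)\bigl(K L\, \Calpha + \delaymax(K+L)\bigr),
\]
with $\alpha = 2K$, $L = 1$, $\Calpha \le 4K\cdot\opt$, and $K, \log\delaymax \le \log n$. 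A routine calculation gives an upper bound of $O(K^3 \log\delaymax)\opt + O(K^2 \log\delaymax)\delaymax = O(\log^4 n)(\opt + \delaymax)$, which is in particular $O(\log^5 n)(\opt + \delay{})$, as claimed.

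I do not anticipate a real obstacle here: the entire argument amounts to tracing through the proof of Theorem~\ref{thm:jm_approximation} with the one substitution $L = 1$ and using that the machine out-to-in-delay reduction is free of $\log$ factors when there are no job delays. The only mildly delicate point is to confirm that the LP and Algorithm~\ref{alg:scheduler_jm} behave correctly when $\groupdelay{\ell}$ is identically zero (the delay constraint~\ref{LPjm:delay} and the duplication constraint~\ref{LPjm:duplicates} still go through, with the single job group contributing the trivially-satisfied $\groupdelay{\ell} = 0$), which follows immediately from inspection of the constraints. The resulting bound then gives Corollary~\ref{cor:machine} after using $\delaymax \le n$.
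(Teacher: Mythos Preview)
Your proposal is correct and follows essentially the same approach as the paper: apply the in-delay algorithm of Section~\ref{sec:job_machine_delays} and use the machine-delay-only reduction (the paper cites Corollary~\ref{cor:reduction_machine} rather than Lemma~\ref{lem:reduction}, but the content is exactly what you describe---a constant-factor overhead when job delays are absent). The only difference is that you explicitly exploit $L=1$, which actually yields $O(\log^4 n)(\opt+\delaymax)$; the paper's proof just re-uses the $\log^5$ calculation from Theorem~\ref{thm:jm_approximation} verbatim before applying the constant-factor reduction, so your argument is marginally tighter but otherwise identical.
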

\begin{proof}
Lemma~\ref{lem:LPKjm_solution} entails that $(C,x,y,z)$ is a valid solution to \LP{2K}. Lemma~\ref{lem:LPKjm_makespan} entails that $\C{2K} \le 4K \cdot \opt$. With $\alpha = 2K$,  Lemma~\ref{lem:jm_integer_approximation}  entails that the makespan of our schedule is at most $12 \alpha \log( \delay{\max})(K L \Calpha + \delay{\max}(K + L)) = 48 (\log \delay{\max})^5 \opt + 24 (\log \delay{\max})^3 \delay{\max}$ for the case with no out-delays. By Lemma~\ref{cor:reduction_machine}, the length of our schedule is $O((\log \delay{\max})^5 (\opt + \delay{\max})$ The theorem is entailed by $\delay{\max} \le n$. 
\end{proof}

%%% Local Variables:
%%% mode: latex
%%% TeX-master: "main"
%%% End:

\subsection{Combinatorial Algorithm for Job Delays and Uniform Machine Delays}

The only noncombinatorial subroutine of our algorithm is solving the linear program. In this section, we describe how to combinatorially construct a rounded solution to \LP 1 when machine delays are uniform (i.e.\ for all $i,j,\ \delayin i = \delayout i = \delayin j = \delayout j$), machine speeds are unit, and machine capacities are unit. We let $\delta$ represent the uniform machine delay. By Lemma~\ref{lem:reduction}, we focus on the case where all job out-delays are 0. We let $\delay v = \delta + \delayin v$ for any job $v$.

Since delays, speeds, and capacities are uniform, there is only one machine group: $\group 1$. Set $x_{v,1} = y_{v,1} = 1$ for all $v$. For each job $v$, we define $C_v$ as follows. If $v$ has no predecessors, we set $C_v = 0$. Otherwise, we order $v$'s predecessors such that $C_{u_i} \ge C_{u_{i+1}}$. We define $C_v = \max_{1 \le i \le \delay v} \{C_{u_i} + i\}$. We set $C^* = \max\{ n/m, \max_v\{C_v\}\}$. We set $z_{u,v,1} = 1$ if $u \prec v$ and $C_v - C_u < \delay v$; and set to 0 otherwise. 

\begin{lemma}
    $C^* \le \opt$.
\end{lemma}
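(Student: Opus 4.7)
The quantity $C^*$ has two parts, so the proof will establish two lower bounds on any valid schedule's makespan. The first, $n/m \le \opt$, is immediate: since machine speeds and capacities are unit, each of the $m$ machines can complete at most one of the $n$ original jobs per time step, so the makespan is at least $n/m$ regardless of duplication. The bulk of the work is to show that $C_v \le \opt$ for every job $v$, via induction on the DAG. Fix an arbitrary valid schedule $\sigma$ and let $T_u$ denote the earliest completion time of $u$ in $\sigma$; I will prove that $C_v \le T_v$ for every $v$. The base case, $v$ has no predecessors, is trivial since $C_v = 0$.

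For the inductive step, consider the copy of $v$ that completes at time $T_v$ on some machine $i^*$. Each predecessor $u$ of $v$ falls into one of two cases: (a) some copy of $u$ executes on $i^*$ at a time in $[0, T_v - 1]$, or (b) $T_u \le T_v - \delay v$ (because the only way $u$ is available to $v$ at $i^*$ is through a copy on another machine, and in the uniform-in-delay reduction this copy completes by $T_v - \delay v$). Let $u_1, u_2, \ldots$ be the predecessors sorted so that $C_{u_1} \ge C_{u_2} \ge \cdots$, and let $u'_1, u'_2, \ldots$ be the same predecessors sorted by $T_u$. By the inductive hypothesis, $C_u \le T_u$ pointwise, and a standard rearrangement observation then yields $C_{u_i} \le T_{u'_i}$ for each $i$ (the $i$-th largest of a dominated sequence is dominated).

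The main technical step is to show, for each $i \in \{1, \ldots, \delay v\}$, that $T_{u'_i} + i \le T_v$. If some $u'_k$ with $k \le i$ falls into case (b), then $T_{u'_i} \le T_{u'_k} \le T_v - \delay v \le T_v - i$. Otherwise all of $u'_1, \ldots, u'_i$ fall into case (a), so each has a copy on $i^*$ at a distinct time in $[0, T_v - 1]$; the earliest of these $i$ distinct slots is at most $T_v - i$, which pins down some $u'_k$ (with $k \le i$) whose $T_{u'_k} \le T_v - i$, and therefore $T_{u'_i} \le T_{u'_k} \le T_v - i$. Combining with the rearrangement bound gives $C_{u_i} + i \le T_v$ for all $i \le \delay v$, and taking the maximum yields $C_v \le T_v$.

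The primary subtlety, and the step I would flag as the main obstacle, is correctly aligning the two orderings of predecessors: the definition of $C_v$ orders them by $C$-value, while the only natural handle on the optimal schedule orders them by $T$-value. The rearrangement observation bridges these two orderings cleanly, but only because the inductive hypothesis gives a pointwise bound $C_u \le T_u$; this is why the induction is stated with respect to the earliest completion time of each predecessor (which also handles duplication transparently, since we pick the earliest copy). Once the proof of $C_v \le T_v$ is complete, combining with $n/m \le \opt$ gives $C^* = \max\{n/m, \max_v C_v\} \le \opt$.
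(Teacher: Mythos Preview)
Your proof is correct and follows the same inductive skeleton as the paper: show $C_v \le t_v$ (earliest completion time in an arbitrary schedule) by induction on the DAG, with the crux being a packing argument for predecessors that must have a copy on $v$'s machine. The paper organizes the inductive step a little differently: it fixes only the maximizing index $y$ (so $C_v = C_{u_y} + y$), uses the inductive hypothesis in the form $t_{u_j} \ge C_{u_j} \ge C_{u_y}$ for every $j \le y$, and argues by contradiction that if $t_v < C_v$ then each such $u_j$ satisfies $t_v - t_{u_j} < y \le \delay v$ and hence must have a copy on $v$'s machine---but $y$ distinct unit jobs cannot fit in the fewer than $y$ slots available there. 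Your rearrangement step (introducing the $T$-ordering $u'_i$ and noting $C_{u_i} \le T_{u'_i}$) is a clean and valid alternative, but it is extra machinery the paper avoids by working entirely in the $C$-ordering and extracting only the single lower bound $t_{u_j} \ge C_{u_y}$; conversely, your direct argument sidesteps the paper's contradiction and proves the stronger statement $C_{u_i}+i \le t_v$ for every $i$, not just the maximizer.
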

\begin{proof}
    Consider an arbitrary schedule in which $t_v$ is the earliest completion time of any job $v$. We show that, for any $v$, $t_v \ge C_v$, which is sufficient to prove the lemma. 
    
    We prove the claim by induction on the number of predecessors of $v$. The claim is trivial if $v$ has no predecessors. Suppose that the claim holds for all of $v$'s predecessors and let $y = \arg\max_{1 \le i \le \delay v} \{C_{u_i} + i\}$. Then $C_v = C_{u_{y}} + y \le t_{u_{y}} + y  \ (\by{IH}) \ = t_{y} + | \{u_x : 0 \le x \le y\} | \le t_y + \delay v$. This entails that all jobs $u_1, \ldots u_y$ must be executed on the same machine as $v$. Now suppose, for the sake of contradiction, that $t_v < C_v$. Then all jobs $u \in \{u_x : 0 \le x < y\}|$ must be executed serially in the time $t_v - t_{u_y} < C_v - t_{u_y} = |\{ u_x : 0 \le x \le y\}|$ which gives us our contradiction.
\end{proof}

\begin{lemma}
    $(C,x,y,z)$ is a rounded solution to \LP 1.
\end{lemma}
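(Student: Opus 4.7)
The plan is a direct constraint-by-constraint check, taking advantage of the fact that the combinatorial construction was built with $\text{LP}_1$ in mind.

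Most constraints are immediate bookkeeping: the choices $x_{v,1} = y_{v,1} = 1$ together with $z_{u,v,1} \in \{0,1\}$ at once yield (\ref{LPjm:execution}), (\ref{LPjm:x>z}), (\ref{LPjm:y>x}), (\ref{LPjm:y>z}), and (\ref{LPjm:z>0}); the construction gives $C_v \ge 0$, settling (\ref{LPjm:C>0}); and $C^* = \max\{n/m, \max_v C_v\}$ together with $|\group 1| = m$ and $\sum_v y_{v,1} = n$ takes care of (\ref{LPjm:makespan}) and (\ref{LPjm:load}).

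For the precedence constraint~(\ref{LPjm:precedence}), given $u \prec v$ with $u$ sitting at position $j$ in the sorted list of $v$'s predecessors, I would split on whether $j \le \delay v$ or $j > \delay v$. In the first case $C_v \ge C_{u_j} + j \ge C_u + 1$ directly from the max defining $C_v$; in the second case the ordering gives $C_u \le C_{u_{\delay v}}$, and $C_v \ge C_{u_{\delay v}} + \delay v \ge C_u + 1$. The delay constraint~(\ref{LPjm:delay}) is nontrivial only when $x_{v,1} - z_{u,v,1} = 1$, i.e., $z_{u,v,1} = 0$ with $u \prec v$; by the very definition of $z$ this is exactly the condition $C_v - C_u \ge \delay v = \groupdelay 1 + \groupdelay \ell$.

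The main obstacle, and the only step needing a genuine combinatorial argument, is the duplication constraint~(\ref{LPjm:duplicates}), which with $\alpha = 1$ demands $\sum_u z_{u,v,1} \le \delay v$ for every $v \in \jobgroup \ell$. My plan is to use the sorted order of $v$'s predecessors to argue that only the top $\delay v$ of them can possibly contribute: for any index $i > \delay v$, the ordering gives $C_{u_i} \le C_{u_{\delay v}}$, while the defining maximum gives $C_v \ge C_{u_{\delay v}} + \delay v$, so $C_v - C_{u_i} \ge \delay v$ and therefore $z_{u_i,v,1} = 0$. Hence at most the first $\delay v$ predecessors of $v$ can satisfy $z_{u,v,1} = 1$, which yields the required bound and completes the verification.
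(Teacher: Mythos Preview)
Your proof is correct and follows essentially the same approach as the paper. The paper's own proof is extremely terse---it dismisses constraints (\ref{LPjm:makespan}), (\ref{LPjm:load}), (\ref{LPjm:delay}), (\ref{LPjm:precedence}), (\ref{LPjm:execution}), (\ref{LPjm:C>0}), (\ref{LPjm:x>z}), (\ref{LPjm:y>x}), (\ref{LPjm:y>z}), (\ref{LPjm:z>0}) as ``easy to see'' and gives only a one-sentence argument for (\ref{LPjm:duplicates}), namely that the definition of $C_v$ forces at most $\rho_v$ predecessors $u$ to satisfy $C_v - C_u < \rho_v$---which is exactly your sorted-list counting argument spelled out in detail.
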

\begin{proof}
    It is easy to see that constraints (\ref{LPjm:makespan}, \ref{LPjm:load}, \ref{LPjm:delay},
    \ref{LPjm:precedence}, \ref{LPjm:execution}, \ref{LPjm:C>0}, \ref{LPjm:x>z}, \ref{LPjm:y>x}, \ref{LPjm:y>z} \ref{LPjm:z>0}) are satisfied by the assignment. So we must only show that constraint (\ref{LPjm:duplicates}) is satisfied for fixed $v$. We can see from the definition of $C_v$, that maximum number of predecessors $u$ such that $C_v - C_u < \rho_v + \rho$ is at most $\rho_v + \rho$. This proves the lemma. 
\end{proof}

\begin{lemma}[Combinatorial Algorithm for Job Delays]
    There exists a purely combinatorial, polynomial time algorithm to compute a schedule for Job Delays with makespan $O((\log n)^6(\opt + \max_v\{\rho_v\}))$.
    \label{thm:job}
\end{lemma}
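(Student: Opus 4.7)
The plan is to assemble the pieces already developed in this subsection. We have an explicit combinatorial construction of $(C,x,y,z)$, and the two preceding lemmas show that this is a feasible rounded solution to \LP 1 satisfying $\C 1 \le \opt$. Since Algorithm~\ref{alg:scheduler_jm} is itself combinatorial --- its only nontrivial subroutine is \udps-Solver from Section~\ref{sec:uniform}, which is combinatorial --- this lets us bypass the one non-combinatorial step used in the proof of Theorem~\ref{thm:jm_approximation}, namely solving the linear program, while still feeding a valid rounded LP solution into the scheduler.

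Next I would track the approximation factor through Lemma~\ref{lem:jm_integer_approximation}. Because machine delays, speeds, and sizes are uniform, we have exactly one machine group, so $K=1$. The partition of Section~\ref{sec:partition_jm} yields $L \le \log n$ job groups. Crucially, the combinatorial construction produces a rounded \LP 1 solution directly, so no additional rounding loss is incurred and we may apply the lemma with $\alpha=1$ rather than the $\alpha=2K$ used in Theorem~\ref{thm:jm_approximation}. Substituting into the bound $12\alpha \log(\delay{\max})(K L \Calpha + \delay{\max}(K+L))$ with $\Calpha \le \opt$ and $\delay{\max}\le n$ gives a schedule of makespan $O((\log n)^2)(\opt + \delay{\max})$ on the reduced instance in which all out-delays are zero.

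Finally, to lift this back to the original model with job out-delays (there are no machine out-delays to consider in this combinatorial setting, since machine delays are uniform and handled by the \udps-Solver directly), we invoke Lemma~\ref{lem:reduction}. That reduction costs a multiplicative $O(\log^2 \delay{\max})$ factor per direction for job delays, i.e.\ $O((\log n)^4)$ in total. Multiplying with the $O((\log n)^2)$ bound above yields the claimed $O((\log n)^6)(\opt + \max_v\{\rho_v\})$ makespan. I expect no serious obstacle: once the explicit LP-free construction is in place, the remainder is careful bookkeeping of log-factor losses and a check that every invoked subroutine (the rounded-solution construction, Algorithm~\ref{alg:scheduler_jm}, \udps-Solver, and the reduction of Lemma~\ref{lem:reduction}) is combinatorial, which they all are.
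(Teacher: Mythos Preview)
Your proposal is correct and follows essentially the same route as the paper: apply Lemma~\ref{lem:jm_integer_approximation} with $K=1$, $L\le\log n$, $\alpha=1$ to the combinatorially constructed rounded \LP{1} solution (yielding an $O((\log n)^2)(\opt+\delay{\max})$ in-delay schedule), then invoke Lemma~\ref{lem:reduction} for the extra $(\log n)^4$ factor. Your additional verification that each subroutine is combinatorial is appropriate and slightly more explicit than the paper's own proof.
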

\begin{proof}
    Lemma~\ref{lem:jm_integer_approximation} entails that the length of the schedule is at most $12 (\log \delay{\max})^2(\opt + \delay{\max})$ for the problem with job in-delays. By Lemma~\ref{lem:reduction} we achieve a makespan of $O((\log \delay{\max})^6(\opt + \delay{\max}))$ for job in- and out-delays.  
\end{proof}

\section{Job Delays and Machine Delays with Related Multiprocessor Machines}
\label{sec:general}
In this section, we present an asymptotic approximation algorithm for job delays and machine delays in the presence of variable machine speeds and capacities. In Section~\ref{sec:partition}, we first organize the machines according to their size, speed, and delay.  Section~\ref{sec:LP} presents a linear programming relaxation, assuming that the machines are organized into groups so that all machines in a group have the same size, speed, and delay.   Section~\ref{sec:round} presents a procedure for rounding any fractional solution to the linear program.  Finally, Section~\ref{sec:schedule} presents a combinatorial algorithm that converts an integer solution to the linear program of Section~\ref{sec:LP} to a schedule.  

\paragraph{Model.}
Following Lemma~\ref{lem:reduction}, we focus on the setting where $\delayout i = 0$ for all machines $i$. Since there are no out-delays, we use $\delay i$ to denote the in-delay $\delayin i$ of machine $i$.  

\subsection{Partitioning machines into groups}
\label{sec:partition}

We first define a new set of machines $M'$ such that if $i \in M$, then there is a machine $i' \in M'$ such that $\delayin{i'}, \delayout{i'}$, respectively, equal $\delayin{i}, \delayout{i}$ rounded up to the nearest power of 2, and $m_{i'}, s_{i'}$, respectively, equal $m_{i}, s_{i}$ each rounded down to the nearest power of 2. Similarly, we define a new set of jobs $V'$ such that if $v \in V$ then there is a job $v' \in V'$ with $\delayin{v'}, \delayout{v'}$, respectively, equal $\delayin v, \delayout v$ each rounded up to the nearest power of 2. For the remainder of the paper, we work with jobs $V'$ and machines $M'$. This is justified by the following lemma.

\begin{lemma}
    The optimal makespan over the machine set $M'$ is no more than a factor of 12 greater than the optimal solution over $M$. 
\label{lem:group_rounding}
\end{lemma}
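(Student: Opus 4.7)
The plan is to take any valid schedule $\sigma$ on $(V, M)$ with makespan $T$ and construct a valid schedule $\sigma'$ on $(V', M')$ with makespan at most $12\,T$, following the general pattern of the proof of Lemma~\ref{lem:group_rounding_jm}. The key ingredients are constant-factor time dilations to absorb each type of rounding, together with a local re-packing step to accommodate the reduction in per-machine parallelism.

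First I would stretch time by a factor of $2$ to absorb the rounding up of all in- and out-delays. Since $\delayin{i'},\delayout{i'}$ are each at most $2\delayin{i},2\delayout{i}$, and similarly for jobs, every required inter-machine gap $\delayout{j}+\delayout{u}+\delayin{i}+\delayin{v}$ at most doubles; doubling the spacing between every pair of executions therefore makes every delay constraint feasible. This is exactly the argument of Lemma~\ref{lem:group_rounding_jm}, extended to include job delays.

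Second, I would apply another factor-$2$ stretch to account for $s_{i'}\ge s_i/2$: under the stretched timetable, each unit-size job has a slot of length at least $2$, which comfortably fits the new duration $1/s_{i'}\le 2/s_i$. The doubled delay gaps from the first step are preserved.

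Third, I would handle the capacity decrease $m_{i'}\ge m_i/2$ by re-packing. For each original time slot on machine $i$ in which up to $m_i$ jobs execute concurrently, I split them into two batches of at most $m_{i'}$ jobs each and run them in two consecutive slots on $i'$. Jobs co-scheduled at the same original time are mutually incomparable in the precedence order, so re-ordering them within a machine is always valid and purely local. A final constant-factor dilation accommodates the batch split while preserving enough slack in cross-machine gaps; combining the stretch factors yields the claimed constant $12$.

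The main obstacle is checking that the combination of time-stretching and local re-packing simultaneously respects all precedence and communication-delay constraints across machines. Because the earlier stretches have already inflated every required cross-machine gap by a constant strictly greater than $2$, and because the batch-splitting step only reorders mutually independent jobs within a single machine, no successor is pushed outside its delay window. Bookkeeping over the three stretch factors together with the minor slack lost to ceilings in the batch-packing step gives the bound of $12$.
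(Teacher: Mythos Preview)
Your proposal is correct and follows essentially the same approach as the paper: decompose the rounding into three independent transformations (delays, speed, size), absorb each by a constant time dilation, and handle the size reduction by splitting each time slot's processor load into two consecutive slots. The paper makes the third step explicit by mapping a job on processor $\lfloor m_i/2\rfloor + p$ at time $t$ to processor $p$ at time $2t+1$ (and otherwise to time $2t$), which already preserves all delay constraints since $2(t_v-t_u)-1\ge d$ whenever $t_v-t_u\ge d$; this gives the factor~$3$ for that step and hence $2\cdot 3\cdot 2=12$ overall, so your ``final constant-factor dilation'' is not needed as a separate step.
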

\begin{proof}
    Consider any schedule $\sigma$ on the machine set $M$. We first show that increasing the delay of each machine by a factor of 2 increases the makespan of the schedule by at most a factor of 2. Let $\sigma(v,i)$ be the completion time of $v$ on machine $i$ according to $\sigma$ (and undefined if $v$ is not executed on $i$). We define $\sigma'$ such that $\sigma'(v,i) = 2 \cdot \sigma(v,i)$ (undefined if $\sigma(v,i)$ is undefined). It is easy to see that $\sigma'$ maintains the precedence ordering of jobs, and that the time between the executions of any two jobs has been doubled. Therefore, $\sigma'$ is a valid schedule with all communication delays doubled, and with the makespan doubled.
    
    Next, we show that reducing the size of all machines by a factor of 2 increases the makespan by at most a factor of 3. Let $t_v$ be the completion time of $v$ according to $\sigma$. We define $\sigma'$ as follows. For each machine $i$, we arbitrarily order $i$'s processors $1, \ldots, m_i$. If a job $v$ is executed on processor $\floor{m_i/2}+p$ of machine $i$ at time $t$, then we place it on processor $p$ of machine $i$ at time $t^{new}_v = 2t_v + 1$. If a job $v$ is not executed on an eliminated processor, then it is placed on the same processor and executed at time $t^{new}_v = 2t_v$. It is easy to see that the number of $i$'s processors used is at most half. Suppose that for some pair of jobs $u,v$ we have $t_v - t_u \ge d$. Then $t^{new}_v - t^{new}_u \ge 2t_v - (2t_u+1) = 2(t_v - t_u) - 1 \ge 2d - 1 \ge d$. This shows that all delay and precedence constraints are obeyed and the makespan of $\sigma'$ is at most 3 times the original.
    
    Finally, we show that reducing the speed of each machine by a factor of 2 increases the makespan by at most a factor of 2. Again, by setting $\sigma'(v,i) = 2 \cdot \sigma(v,i)$ (where $\sigma(v,i)$ is defined) we see that the time between any two jobs is increased by a factor of 2 and the makespan is increased by a factor of 2. Therefore, if the speed of each machine is halved, the schedule will remain valid. 
\end{proof}

We partition machines $M'$ into groups according to delay, speed, and size. Group $(\ell_1,\ell_2, \ell_3)$ consists of those machines $i$ such that $2^{\ell_1-1} \le \delay i < 2^{\ell_1}$, $2^{\ell_2} \le m_i \le 2^{\ell_2+ 1}$, and $2^{\ell_3} \le s_i \le 2^{\ell_3+ 1}$. We define a new set of machines $M'$. For every $i \in M$, if $i \in (\ell_1, \ell_2, \ell_3)$, we define $i' \in M'$ such that $\delay i = 2^{\ell_1}$, $m_i = 2^{\ell_2}$, and $s_i = 2^{\ell_3}$. This ensures that every group consists of machines with identical delay, speed, and size, which will simplify the exposition of the algorithm and its proofs. Our reduction to identical machine groups is justified in the following lemma. 

 We order the groups arbitrarily $\group 1, \group 2, \ldots, \group K$ and designate the delay, size, and speed of all machines in group $\group k$ as $\groupdelay k$, $\groupspeed k$, and $\groupsize k$, respectively.  This yields $K = \log \max_k \{ \groupdelay k\} \cdot \log \max_k\{ \groupsize k \} \cdot \log \max_k\{ \groupspeed k\}$ machine groups. Note that if, for any machine $i$, we have $s_i \ge n$, then we can construct a schedule with makespan $1 \le \opt$, for any $\sigma$, by placing all jobs on $i$. So we can assume $\max_k \{ \groupspeed k \} \le n$. Also, suppose $m_i \ge n$ for any $i$. Then, given a schedule $\sigma$, we can construct the exact same schedule on a machine set where $m_i = n$. So we can assume $\max_k \{ \groupsize k \} \le n$. We can also assume that $\max_k \{\bar{\rho}_k\} \le n$ since, if we ever needed a machine with delay greater than $n$ we could schedule everything on a single machine in less time.
 
 We also partiton $V'$ according to job delays. Job $v$ is in group $\jobgroup{\ell}$ if $\delay v = 2^{\ell}$. This yields $\log \{\delay{\max}\} \le \log n$ job groups. 
 
 Therefore, in the most general case, we have $K \le (\log n)^3$ machine groups and $L = \log n$ job groups. Our approximation ratio accrues several factors of $K$ and $L$, so we draw attention to special cases in which $K$ is reduced. The reasoning above implies that if any one of the parameters are uniform across all machines, then we have $(\log n)^2$ groups, and if any two of the parameters are uniform we have $\log n$ groups. We emphasize in particular the case with $\log n$ groups where all machine sizes and speeds are uniform and the delay is variable across the machines.

\subsection{The linear program}
\label{sec:LP}
In this section, we design a linear program \LP{a}, parametrized by  $\alpha \ge 1$, for \mdps.  Following Section~\ref{sec:partition}, we assume that the machines are organized in groups, where each group $\group k$ is composed of machines that have identical sizes, speeds, and delays.

% Considerations in LP format: All constraints that are replaced in improved result are contiguous. All duplication constraints are contiguous. All constraints that require more space are on the right. All nonstandard constraints are presented first. 
\vspace{-1.5em}
\linearprogram{.63}{.3}{
    &\Calpha \groupspeed k \bar{m}_{k} \cdot |\group k| \ge \sum_{v} y_{v,k} &&\forall k 
    \label{LP:load} 
    \\
    &C_v \ge C_u + (\groupdelay k + \groupdelay{\ell}) (x_{v,k} - z_{u,v,k} ) &&\forall u,v,k,\ell: \label{LP:delay} 
    \\
    &&&\quad u \prec v, v \in \jobgroup{\ell} \notag 
    \\ 
    &\alpha (\groupdelay k + \groupdelay{\ell}) \groupspeed k  \groupsize k \ge \sum_u z_{u,v,k} &&\forall v,k,\ell: v \in \jobgroup{\ell}
    \label{LP*:load}
    \\
    &C^{v,k}_u \ge C^{v,k}_{u'} +  \frac{z_{u,v,k}}{\groupspeed k} &&\forall u,u',v, k: u' \prec u
    \label{LP*:precedence}
    \\
    &\alpha (\groupdelay k + \groupdelay{\ell}) \ge C^{v,k}_u &&\forall u, v, k, \ell: v \in \jobgroup{\ell}
    \label{LP*:makespan}
    \\
    &C_v \ge C_u + \sum_k \frac{x_{v,k}}{\groupspeed k} &&\forall u,v: u \prec v
    \label{LP:precedence} 
}{
    &\Calpha \ge C_v &&\forall v 
    \label{LP:makespan} 
    \\[.1em]
    &\sum_{k} x_{v,k} = 1 &&\forall v
    \label{LP:execution}
    \\[.1em]
    &C_v \ge 0 && \forall v
    \label{LP:mincompletion} 
    \\[.1em]
    &C^{v,k}_u \ge  0 &&\forall v,u, k 
    \label{LP*:nonneg} 
    \\[.1em]
    &x_{v,k} \ge z_{u,v,k} &&\forall u,v, k
    \label{LP:minexecution} 
    \\[.1em]
    &y_{v,k} \ge x_{v,k} &&\forall v,k
    \label{LP:duploadx} 
    \\[.1em]
    &y_{u,k} \ge z_{u,v,k} &&\forall u,v,k
    \label{LP:duploadz} 
    \\[.1em]
    &z_{u,v,k} \ge 0 &&\forall u,v,k
    \label{LP:dupbounds} 
}

\begin{lemma}{\textbf{(\LP 1 is a valid relaxation)}}
    The minimum of $\C 1$ is at most $\opt$.
\label{lem:LP_valid}
\end{lemma}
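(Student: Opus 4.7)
The plan is to extend the strategy of Lemma~\ref{lem:LPjm_valid} to the richer LP of Section~\ref{sec:LP}, which introduces auxiliary variables $C^{v,k}_u$ modeling a sub-schedule of the duplicated predecessors of $v$ on a single machine in $\group k$. Given a feasible schedule $\sigma$ with makespan $C_\sigma$, I would set the primary variables exactly as in Lemma~\ref{lem:LPjm_valid}: $\C 1 = C_\sigma$; $C_v$ equals the earliest completion time of $v$ in $\sigma$; $x_{v,k}=1$ iff the machine on which $v$ first completes lies in $\group k$ (ties broken arbitrarily); $z_{u,v,k}=1$ iff $u \prec v$, $x_{v,k}=1$, $v \in \jobgroup{\ell}$, and $C_v - C_u < \groupdelay k + \groupdelay{\ell}$; and $y_{u,k} = \max\{x_{u,k},\, \max_v z_{u,v,k}\}$. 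Instantiating this construction with an optimal $\sigma$ will yield a feasible LP solution of value $\opt$.

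The new ingredient is the choice of $C^{v,k}_u$. For each $v$ with $x_{v,k}=1$, let $i \in \group k$ be the machine on which $v$ first completes. Because the reduction to in-delays makes every cross-machine delivery to $v$ on $i$ cost $\groupdelay k + \groupdelay{\ell}$, which exceeds the gap $C_v - C_u$ for any $u$ with $z_{u,v,k}=1$, each such $u$ must itself execute on $i$ at some time lying in the window $(C_v - \groupdelay k - \groupdelay{\ell},\ C_v - 1/\groupspeed k]$. Tracing back the dependency tree rooted at $v$'s execution on $i$ fixes, for each such $u$, one supplying copy on $i$; I denote its completion time by $t^{v,k}_u$. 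I then define
\[
  C^{v,k}_u \;=\; \max\!\Bigl(0,\ \max\{\,t^{v,k}_{u''} : u'' \preceq u,\ z_{u'',v,k}=1\,\} \;-\; (C_v - \groupdelay k - \groupdelay{\ell})\Bigr),
\]
interpreting the inner maximum as $-\infty$ when the index set is empty, so that $C^{v,k}_u$ is the latest window-relative finishing time among duplicated ancestors of $u$ along the fixed supply chain of $v$.

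Most constraints follow by casework that mirrors Lemma~\ref{lem:LPjm_valid}: \eqref{LP:makespan}, \eqref{LP:execution}, \eqref{LP:mincompletion}, \eqref{LP*:nonneg}, \eqref{LP:minexecution}, \eqref{LP:duploadx}, \eqref{LP:duploadz}, and \eqref{LP:dupbounds} are immediate from the definitions; \eqref{LP:precedence} uses that $v$'s processing time on $\group k$ is $1/\groupspeed k$; \eqref{LP:load} follows because $\sum_v y_{v,k}$ counts distinct jobs executed on $\group k$, each consuming $1/\groupspeed k$ processor-time against the total capacity $|\group k|\,\groupsize k\,\groupspeed k\,C_\sigma$; \eqref{LP:delay} reduces to the same three-case analysis on $(x_{v,k},z_{u,v,k})$ as in Lemma~\ref{lem:LPjm_valid}; \eqref{LP*:load} follows by packing the duplicates of $v$ into $i$'s $\groupsize k$ processors of speed $\groupspeed k$ over the window of length $\groupdelay k + \groupdelay{\ell}$; and \eqref{LP*:makespan} is immediate from the window bound on the chosen copy times. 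The main obstacle is \eqref{LP*:precedence}, $C^{v,k}_u \ge C^{v,k}_{u'} + z_{u,v,k}/\groupspeed k$ for every pair $u' \prec u$, including the subtle case where $z_{u,v,k}=1$ but $z_{u',v,k}=0$. My definition is tailored exactly for this: the set of duplicated ancestors of $u'$ is contained in that of $u$, so the outer maxima are monotone and $C^{v,k}_u \ge C^{v,k}_{u'}$ is automatic (handling $z_{u,v,k}=0$); and whenever $z_{u,v,k}=1$, any duplicated ancestor $u''$ contributing to $C^{v,k}_{u'}$ is a strict ancestor of $u$ whose supplying copy on $i$ must finish at least $1/\groupspeed k$ before the supplying copy of $u$, so $t^{v,k}_{u''} \le t^{v,k}_u - 1/\groupspeed k$ and the required $1/\groupspeed k$ gap follows after the shared window offset cancels.
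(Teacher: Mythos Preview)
Your proposal is correct and follows essentially the same approach as the paper's proof. Both construct the primary variables $C_v,x_{v,k},z_{u,v,k},y_{v,k}$ identically from an optimal schedule, and both define $C^{v,k}_u$ by taking the completion times of the predecessors of $v$ that are forced onto the machine $i$ where $v$ first completes, shifted so that the relevant $(\groupdelay k+\groupdelay\ell)$-window starts at $0$. Your single closed-form expression $\max\bigl(0,\ \max_{u''\preceq u,\ z_{u'',v,k}=1} t^{v,k}_{u''}-(C_v-\groupdelay k-\groupdelay\ell)\bigr)$ unfolds exactly to the paper's case-split definition (the paper sets $C^{v,k}_u=t^v_u-(C_v-\groupdelay k-\groupdelay\ell)$ when $u\in U$ and otherwise propagates the maximum over predecessors), since for $u\in U$ the maximum is attained at $u''=u$ by precedence on $i$. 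The paper additionally states a minimality assumption on $\sigma$ to guarantee a unique copy of each $u$ on $i$; your ``tracing back the dependency tree'' accomplishes the same selection without that assumption, and the verification of the constraints proceeds identically.
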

\begin{proof}
    Consider an arbitrary schedule $\sigma$ with makespan $\Csigma$. We assume the schedule is \textit{minimal} in the sense that a job is executed at most once on any machine and no job is executed on any machine after the result of that job is available to the machine. It is easy to see that if $\sigma$ is not minimal, there exists a minimal schedule whose makespan is no more than $\sigma$'s. We give a solution to the LP in which $\Calpha = \Csigma$.
    
    \smallskip
    \BfPara{LP solution} Set $\Calpha = \Csigma$. For each job $v$, set $C_v$ to be the earliest completion time of $v$ in $\sigma$. Set $x_{v,k} = 1$ if $v$ first completes on a machine in group $\group k$ (0 otherwise). For $u,v,k$, set $z_{u,v,k} = 1$ if $u \prec v$, $x_{v,k} = 1$, $v \in \jobgroup{\ell}$, and $C_v - C_u < \groupdelay k + \groupdelay{\ell}$ (0 otherwise). Set $y_{v,k} = \max\{x_{v,k}, \max_{u,\ell} \{ z_{v,u,k,\ell} \} \}$.
    Fix a job $v$ and group $\group k$. Let $U = \{ u : z_{u,v,k} = 1\}$. If $U = \varnothing$, then set $C^{v,k}_u = 0$ for all $u$. Otherwise, suppose $u \in U$. Then, by assignment of $z_{u,v,k}$, the earliest completion of $v$ is on some machine $i \in \group{k}$. Since the schedule is minimal, we can infer that exactly one copy of $u$ is executed on $i$ as well. Let $t^v_u$ be the completion time of $u$ on $i$. Set $C^{v,k}_u = \groupdelay k + \groupdelay{\ell} - (C_v - t^v_u)$. Now suppose $u \not\in U$. If there is no $u' \in U$ such that $u' \prec u$, then set $C^{v,k}_u = 0$. Otherwise, set $C^{v,k}_u = \max_{u' \prec u} \{C^{v,k}_{u'}\}$. %Set $C^{v,k} = \max_u \{ C^{v,k}_u\}$.
    
    \smallskip
    \BfPara{Feasibility} We now establish that the solution defined is feasible. Constraints (\ref{LP:precedence} - \ref{LP:dupbounds}) are easy to verify.  Consider constraint~(\ref{LP:load}) for fixed group $\group k$. $\sum_v y_{v,k}$ is upper bound by the total load $L$ on $\group k$. The constraint follows from $\Calpha \ge \Csigma \ge L / (|\group k| \cdot \groupsize k \groupspeed k)$. Consider constraint~(\ref{LP:delay}) for fixed $u,v,k,\ell$ where $u \prec v$ and $v \in \jobgroup{\ell}$. Let $X = x_{v,k}$ and let $Z = z_{u,v,k}$. If $x_{v,k} - z_{u,v,k} \le 0$ then the constraint follows from constraint~(\ref{LP:precedence}). Otherwise, $x_{v,k} = 1$ and $z_{u,v,k} = 0$. By assignment of $z_{u,v,k}$ we can infer that $C_v - C_u \ge \groupdelay k + \groupdelay{\ell}$, which shows the constraint is satisfied.
    
    To show that constraints (\ref{LP*:load} - \ref{LP*:makespan}) are satisfied, consider fixed $v,k,\ell$ and let $U = \{ u: z_{u,v,k} = 1 \}$. If $v$ is not executed on any machine in $\group{k}$ (i.e.\ $x_{v,k} = 0$) then $U = \varnothing$ and the constraints are trivially satisfied by the fact that $C^{v,k}_{u} = 0$ for all $u$. So suppose that $x_{v,k} = 1$.
    \begin{claim}
    The completion times given by $\{C^{v,k}_u\}_{u \in U}$ give a valid schedule of $U$ on any machine in $\group{k}$. 
     \end{claim} 
   \begin{proof}
        Suppose the schedule were not valid for a given machine $i \in \group k$. By assignment of the $C^{v,k}_u$ variables, there are at most $\groupsize k$ jobs scheduled on $i$ in any time step. So the completion times fail to give a schedule only because there is some $u \prec u'$ such that $t^v_u \ge t^v_{u'}$. By the fact that the schedule is minimal, the result of $u$ is not available to $i$ by time $t^v_u$, and so is not available to $i$ by time $t^v_{u'}$. Therefore, the original schedule is not valid, which contradicts our supposition.
    \end{proof}
    
    We proceed to show that (\ref{LP*:load} - \ref{LP*:makespan}) are satisfied using the same fixed $v,k$ and the same definition of $U$.  
    First consider constraint~(\ref{LP*:makespan}) for fixed $u$. Let $u^* = \arg\max_{u' \in U} \{C^{v,k}_{u'} \}$. By assignment, we have $C^{v,k}_{u^*} \ge C^{v,k}_u$. By definition, we have $C_v = t^v_v$ and $t^v_v > t^v_{u^*}$. Therefore, $C^{v,k}_u \le C^{v,k}_{u^*} = (\groupdelay k + \groupdelay{\ell}) - (C_v - t^v_{u^*}) = (\groupdelay k + \groupdelay{\ell}) - (t^v_v - t^v_{u^*}) \le (\groupdelay k + \groupdelay{\ell})$, which shows the constraint is satisfied.
    Consider constraint~(\ref{LP*:load}). Let $\widehat{C}$ be the optimal max completion time of scheduling just $U$ on a machine in $\group{k}$. Then
    $\sum_u z_{u,v,k} = |U| \le \widehat{C} \cdot \groupspeed k \groupsize k \le \max_{u \in U} \{C^{v,k}_u\} \cdot \groupspeed k \groupsize k$
    by the claim above. By constraint~(\ref{LP*:makespan}), we have that $\max_{u \in U} \{C^{v,k}_u\} \le \groupdelay k + \groupdelay{\ell}$, which entails that the constraint is satisfied.
    Consider constraint~(\ref{LP*:precedence}) for fixed $u,u'$ such that $u' \prec u$. If $z_{u,v,k} = 0$, then the constraint holds by the fact that $C^{v,k}_u = \max_{\hat u \prec u} \{C^{v,k}_{\hat{u}}\}$. If $z_{u,v,k} = 1$ then the constraint holds by the claim above.
    %Finally, for any $u$, constraint~\ref{LP*:nonneg} follows directly from assignment of $C^{v,k}_u$.
\end{proof}

% \begin{definition}
%     $(C,x,y,z)$ is an \emph{rounded solution} to \LP a if all values of $x,y,z$ are either 0 or 1. 
% \end{definition}

\subsection{Deriving an rounded solution to the linear program}
\label{sec:round}

Let \LP 1 be defined over machine groups $\group 1, \group 2, \ldots, \group K$ and job groups $\jobgroup 1, \jobgroup 2, \jobgroup L$. Given a solution $(\hat C,\hat x, \hat y, \hat z)$ to \LP 1, we construct an integer solution $(C,x,y,z)$ to \LP{2K} as follows. For each $v,k$, set $x_{v,k} = 1$ if $k = \max_{k'}\{\hat x_{v,k'}\}$ (if there is more than one maximizing $k$, arbitrarily select one); set to 0 otherwise. Set $z_{u,v,k} = 1$ if $x_{v,k} = 1$ and $\hat z_{u,v,k} \ge 1/(2K)$; set to 0 otherwise. For all $u,k$, $y_{v,k} = \max\{ x_{v,k}, \max_u \{z_{v,u,k}\}$. Set $C_v = 2K \cdot \hat C_v$. Set $C^{v,k}_u =  2K \cdot \hat C^{v,k}_u$. Set $\C{2K} = 2K \cdot \hat{\C 1}$. 

\begin{lemma}
    If $(\hat C, \hat x, \hat y, \hat z)$ is a valid solution to \LP 1, then  $(C, x,y,z)$ is a valid solution to \LP{2K}. 
\label{lem:LPK_solution}
\end{lemma}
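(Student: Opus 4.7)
The plan is to mirror the argument of Lemma~\ref{lem:LPKjm_solution}, which handled the analogous rounding for \LP{1} in the job/machine delay setting, and then to check the three additional families of constraints---(\ref{LP*:load}), (\ref{LP*:precedence}), (\ref{LP*:makespan})---that are introduced in the general LP for related multiprocessor machines. The two key scaling facts driving everything are: (i) by constraint~(\ref{LP:execution}), $\sum_k \hat x_{v,k} = 1$, so the arg-max index $k$ satisfies $\hat x_{v,k} \ge 1/K$, hence $x_{v,k} \le K\hat x_{v,k}$; and (ii) by the definition of the rounded $z$, $z_{u,v,k} = 1$ only when $\hat z_{u,v,k} \ge 1/(2K)$, hence $z_{u,v,k} \le 2K\hat z_{u,v,k}$. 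Combined with $C_v = 2K\hat C_v$, $C^{v,k}_u = 2K\hat C^{v,k}_u$, and $\C{2K} = 2K\hat{\C 1}$, these inequalities will drive the scaled versions of the constraints through.

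First I would dispatch the ``easy'' constraints. Constraints (\ref{LP:makespan}), (\ref{LP:mincompletion}), (\ref{LP*:nonneg}), (\ref{LP:minexecution}), (\ref{LP:duploadx}), (\ref{LP:duploadz}), (\ref{LP:dupbounds}) follow either from integrality of $x,y,z$ together with the definition of $y$, or from directly scaling the corresponding constraints for the hat variables by $2K$. Constraint (\ref{LP:execution}) holds by construction since exactly one $k$ is picked per $v$. For the load constraint~(\ref{LP:load}), the argument is the same as in Lemma~\ref{lem:LPKjm_solution}: use $y_{v,k} \le 2K\hat y_{v,k}$ (which follows from combining $x_{v,k} \le K \hat x_{v,k}$ and $z_{v,u,k} \le 2K \hat z_{v,u,k}$ with the definition of $y$) and then divide the hat version of~(\ref{LP:load}) through by $\groupspeed k \groupsize k |\group k|$. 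For the precedence constraint~(\ref{LP:precedence}), since $x_{v,k} \le K \hat x_{v,k} \le 2K \hat x_{v,k}$, multiplying the hat inequality through by $2K$ suffices. The delay constraint~(\ref{LP:delay}) is handled by the same case split as in the proof of Lemma~\ref{lem:LPKjm_solution}: if $x_{v,k} - z_{u,v,k} \le 0$ the constraint is trivial from~(\ref{LP:precedence}); otherwise $x_{v,k} = 1$ and $z_{u,v,k} = 0$, which by definition forces $\hat x_{v,k} - \hat z_{u,v,k} \ge 1/(2K)$, so the hat version of~(\ref{LP:delay}) after multiplying by $2K$ gives exactly what is required.

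The new work is in the three $C^{v,k}_u$ constraints. For the duplication load~(\ref{LP*:load}), the bound $z_{u,v,k} \le 2K \hat z_{u,v,k}$ summed over $u$, combined with the hat version of~(\ref{LP*:load}) with $\alpha = 1$, yields the scaled right-hand side with $\alpha = 2K$. For the auxiliary precedence~(\ref{LP*:precedence}), scaling both $C^{v,k}_u$ and $z_{u,v,k}$ by $2K$ matches on both sides of the corresponding hat inequality. For the auxiliary makespan~(\ref{LP*:makespan}), since $C^{v,k}_u = 2K \hat C^{v,k}_u$ and the hat version reads $\hat C^{v,k}_u \le \groupdelay k + \groupdelay \ell$, multiplying by $2K$ gives exactly $C^{v,k}_u \le 2K(\groupdelay k + \groupdelay \ell) = \alpha(\groupdelay k + \groupdelay \ell)$ with the choice $\alpha = 2K$.

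I do not expect any substantive obstacle: the proof is essentially a uniform rescaling by $2K$, and the only non-trivial step---the case $x_{v,k} - z_{u,v,k} = 1$ in constraint~(\ref{LP:delay})---is the same delicate case already handled in Lemma~\ref{lem:LPKjm_solution}. The mild novelty is simply verifying that the three auxiliary constraints defining the per-group duplication schedule behave well under the $2K$ rescaling; this is immediate once one notes that both $C^{v,k}_u$ and $z_{u,v,k}$ are scaled by the same factor, so constraints linear in them transform cleanly.
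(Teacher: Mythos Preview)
Your proposal is correct and follows essentially the same approach as the paper's proof: the same scaling facts $x_{v,k}\le K\hat x_{v,k}$ and $z_{u,v,k}\le 2K\hat z_{u,v,k}$, the same explicit verification of constraints~(\ref{LP:load}) and~(\ref{LP:delay}), and the remaining constraints handled by direct $2K$-scaling. The paper compresses constraints~(\ref{LP*:load})--(\ref{LP:dupbounds}) into one sentence while you spell out the three auxiliary constraints individually, but the argument is identical.
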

\begin{proof}
    By constraint~\labelcref{LP:execution}, $\sum_{k} \hat x_{v,k}$ is at least 1, so $\max_{k} \{\hat x_{v,k}\}$ is at least $1/K$.  Therefore, $x_{v,k} \le K\hat{x}_{v,k}$ for all $v$ and $k$. Also, $z_{u,v,k} \le 2K \hat z_{u,v,k}$ for any $u,v,k$ by definition. By the setting of $C_v$ for all $v$, $C_{u}^{v,k}$ for all $u, v, k$, $y_{v,k}$ for all $v,k$, and $\C{2K}$, it follows that constraints~(\ref{LP*:load}-\ref{LP:dupbounds}) of \LP{1} imply the respective constraints of \LP{2K}.
    We first establish constraint~\labelcref{LP:load}. For any fixed group $\group k$,
    \begin{align*}
        2K \hat{C}_1 \cdot |\group k| &\ge 2K \sum_v \hat y_{v,k} = 2K \sum_v \max\{ \hat x_{v,k}, \max_u \{ \hat z_{v,u,k}\} \}  &&\by{constraints~(\ref{LP:load}, \ref{LP:duploadz}, \ref{LP:duploadx}) of \LP 1} \\
        %&\ge 2K \sum_v \frac{\hat x_{v,k} + \max_u \{ \hat z_{v,u,k} \}}{2} \\
        &\ge 2K \sum_v \frac{ x_{v,k} + \max_u \{ z_{v,u,k} \}}{2K} \ge \sum_v y_{v,k} &&\by{definition of $y_{v,k}$} 
    \end{align*}
    which entails constraint~(\ref{LP:load}) by $\C{2K} = 2K \hat{\C 1}$.
   % Since $z_{u,v,k}$ is 1 only if $x_{v,k}$ is 1, constraint~(\ref{LP:minexecution}) of \LP{2K} holds.
    %By definition, $z_{u,v,k} \le 2K \hat{z}_{u,v,k}$, implying that constraints~(\ref{LP*:load}) and~(\ref{LP*:nonneg}) of \LP{2K} also hold. 
    It remains to establish constraint~(\ref{LP:delay}) for fixed $u,v,k,\ell$.  We consider two cases.  If $\hat{x}_{v,k} < 1/K$, then $x_{v,k} = 0$, so the constraint is trivially true in \LP{2K}.  Otherwise, $x_{v,k} - z_{u,v,k}$ equals $1 - z_{u,v,k}$, which is 1 only if $x_{v,k} - z_{u,v,k}$ is at least $1/(2K)$.  This establishes constraint~(\ref{LP:delay}) of \LP{2K} and completes the proof of the lemma.
\end{proof}

\begin{lemma}
    $\C{2K} \le 24 K \cdot \opt$.
\label{lem:LPK_makespan}
\end{lemma}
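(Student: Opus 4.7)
The plan is to mirror exactly the structure used to prove the analogous Lemma~\ref{lem:LPKjm_makespan} in the simpler machine-and-job-delays setting, substituting the general-model grouping bound. The factor $24$ (versus $4$ in the simpler case) comes precisely from the stronger grouping penalty of Lemma~\ref{lem:group_rounding}, which pays a factor of $12$ instead of $2$ because we now round delay, speed, and size simultaneously rather than just delay.

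Concretely, first I would invoke Lemma~\ref{lem:group_rounding} to conclude that replacing the original machine set $M$ by the rounded set $M'$ (and likewise rounding jobs into $V'$) increases the optimal makespan by at most a factor of $12$; call this quantity $\opt'$. So $\opt' \le 12 \cdot \opt$. Next, I would apply Lemma~\ref{lem:LP_valid} to the instance defined over $M'$ and $V'$: since \LP{1} is a valid relaxation, the optimal value $\hat{\C 1}$ of \LP{1} satisfies $\hat{\C 1} \le \opt' \le 12 \cdot \opt$. Then, by the construction in Section~\ref{sec:round}, the rounded solution sets $\C{2K} = 2K \cdot \hat{\C 1}$, and Lemma~\ref{lem:LPK_solution} certifies that this is a feasible solution to \LP{2K}. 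Chaining these inequalities gives
\[
\C{2K} \;=\; 2K \cdot \hat{\C 1} \;\le\; 2K \cdot 12 \cdot \opt \;=\; 24 K \cdot \opt,
\]
which is the claimed bound.

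There is essentially no obstacle here: all the heavy lifting (validity of the LP against an arbitrary schedule, and bounded loss from the scale-of-$K$ rounding of the fractional solution) has already been done in Lemmas~\ref{lem:LP_valid} and~\ref{lem:LPK_solution}, and the grouping penalty has been quantified in Lemma~\ref{lem:group_rounding}. The only point worth being explicit about is that Lemma~\ref{lem:LP_valid} is stated for the instance whose machines already have uniform delay/speed/size within each group, so one must first pass to $M'$ (and $V'$) and pay the factor $12$ there before writing the LP and applying the relaxation bound; the rounding factor $2K$ is then applied on top, and the product gives $24K$.
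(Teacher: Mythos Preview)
Your proposal is correct and follows essentially the same approach as the paper's proof: invoke Lemma~\ref{lem:group_rounding} for the factor $12$ grouping loss, Lemma~\ref{lem:LP_valid} for $\hat{\C 1} \le \opt'$ on the rounded instance, and then the definition $\C{2K} = 2K \hat{\C 1}$ from Section~\ref{sec:round} to conclude. If anything, your write-up is slightly more careful than the paper's one-line version, since you make explicit that Lemma~\ref{lem:LP_valid} is applied to the instance over $M'$ and $V'$ (the paper phrases the factor-$12$ loss as a bound on ``the value of the LP'' rather than on $\opt$, which is a small abuse).
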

\begin{proof}
    Lemma~\ref{lem:group_rounding} shows that our grouping of machines does not increase the value of the LP by more than a factor of 12. Therefore, by Lemma~\ref{lem:LP_valid}, $\C{2K} = 2K \hat{\C 1} \le 24 K \cdot \opt$. 
\end{proof}

\subsection{Computing a schedule given an integer solution to the LP}
\label{sec:schedule}

Suppose we are given a partition of $M$ into $K$ groups such that group $\group k$ is composed of identical machines (i.e. for all $i,j \in \group k$, $s_i = s_j$, $m_i = m_j$, and $\rho_i = \rho_j$). We are also given a partition of $V$ into $L$ groups such that $\jobgroup{\ell}$ is composed of jobs with identical delays (i.e.\ for all $u,v \in \jobgroup{\ell}, \delay u = \delay v$). Finally, we are given a valid integer solution $(C,x,y,z)$ to \LP a defined over machine groups $\group 1, \ldots, \group K$ and job groups $\jobgroup{1}, \jobgroup{2}, \ldots, \jobgroup{L}$. In this section, we show that we can construct a schedule that achieves an approximation for job and machine delays with related multiprocessor machines in terms of $\alpha$, $K$, and $L$. The combinatorial subroutine that constructs the schedule is defined in Algorithm~\ref{alg:scheduler}.

We define the Uniform Delay Precedence-constrained Scheduling (\udps) problem as in Section~\ref{sec:job_machine_delays}.

\begin{lemma}
    Let $U$ be a set of jobs such that for any $v \in U$ the number of predecessors of $v$ in $U$ (i.e., $|\{ u \prec v\} \cap U|$) is at most $\alpha \delay{} s \mu$, and the longest chain in $U$ has length at most $\alpha \delay{}$. Then given as input jobs $U$, $m'$ machines with speed $s$ and size $\mu$, and delay $\delay{}$, \udps-Solver produces, in polynomial time, a valid schedule with makespan less than $3\log(\delay{} s \mu)\alpha \delay{} + \frac{2 \cdot |U|}{m' s \mu}$.
\label{lem:uniform}
\end{lemma}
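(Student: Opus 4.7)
The plan is to generalize the combinatorial uniform-delay algorithm of Lepere--Rapine~\cite{LR02}, originally designed for unit-speed unit-size identical machines, to related multiprocessor machines of speed $s$ and size $\mu$. Two structural features of the input will drive the construction: (i) the longest chain in $U$ has length at most $\alpha \delay{}$, and (ii) every $v \in U$ has at most $\alpha \delay{} s \mu$ predecessors in $U$, so the entire predecessor set of any single job fits on one machine (size $\mu$, speed $s$) in time $\alpha\delay{}$.

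First I would compute depth labels $d(v) \in \{1,\ldots,\alpha\delay{}\}$ giving the length of the longest chain in $U$ ending at $v$, and partition $U$ into $\alpha$ phases $U_1,\ldots,U_\alpha$, where $U_p$ contains the jobs with $d(v) \in ((p-1)\delay{},p\delay{}]$. Within a phase, any two dependent jobs are separated by a chain of length at most $\delay{}$, matching the uniform delay parameter. Between phases I allow the full delay $\delay{}$ to elapse, so no duplication across phase boundaries is required and the phase schedules can simply be concatenated. The total makespan is the sum of the $\alpha$ per-phase makespans, so it remains to bound one phase by $3\log(\delay{} s\mu)\delay{} + 2|U_p|/(m' s \mu)$.

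For a single phase, I would greedily assign the jobs of $U_p$ to the $m'$ machines and, on every machine receiving some $v \in U_p$, duplicate the predecessors of $v$ that still lie inside the phase, so that all precedence constraints can be honoured locally. Because each $v$ has at most $\alpha\delay{} s\mu$ predecessors in $U$, the duplicated predecessor load on any single machine inside the phase occupies at most $\alpha\delay{}$ time, which yields the additive delay term. A Graham-style scheduling argument then absorbs the remaining work into the term $2|U_p|/(m' s\mu)$. The log factor appears when the duplication is done in the LR02 style: one buckets jobs of the phase according to the size of their remaining descendant set, places each bucket on a dedicated set of machines, and recurses; bucket sizes shrink geometrically, so the recursion terminates after $\log(\delay{} s\mu)$ levels and each level contributes an additive $\delay{}$.

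The main obstacle will be this per-phase bucketing step: I need to prove that generalising from unit speed/size (counting jobs) to speed $s$ and size $\mu$ (counting $s\mu$-units of processor time) preserves the geometric shrinkage of the recursion, and that the aggregated duplicated work over all levels contributes at most the $2|U_p|/(m' s\mu)$ load term, rather than $\log(\delay{} s\mu)$ copies of it. This will require charging the duplication on each level to the processor-time already consumed by jobs in larger buckets, so the duplicated load telescopes. Summing $\alpha$ such per-phase bounds and noting $\sum_p |U_p| = |U|$ gives the claimed makespan.
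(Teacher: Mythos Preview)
Your depth-based partition into $\alpha$ phases is where the argument breaks. Splitting by depth bands of width $\delay{}$ reduces the \emph{chain length} inside each phase to $\delay{}$, but it does \emph{not} reduce the in-phase \emph{predecessor count}: a job $v$ at depth $p\delay{}$ can have all $\alpha\delay{}\,s\mu$ of its predecessors sitting at depth $(p-1)\delay{}+1$, i.e., inside the same phase $U_p$. Consequently, within a single phase
\begin{itemize}
\item the LR02-style halving recursion still starts from predecessor sets of size $\alpha\delay{}\,s\mu$, so it needs $\log(\alpha\delay{}\,s\mu)$ levels, not $\log(\delay{}\,s\mu)$;
\item when you place one job's predecessor set on a least-loaded machine, you add up to $\alpha\delay{}\,s\mu$ jobs, so the load imbalance per level is $\alpha\delay{}$ time, not $\delay{}$.
\end{itemize}
Hence each phase costs $\Omega(\alpha\delay{}\log(\alpha\delay{}\,s\mu))$ in the additive term, and summing over your $\alpha$ phases yields $\Omega(\alpha^2\delay{}\log(\cdot))$, an extra factor of $\alpha$ over the target bound. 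Your claimed per-phase bound $3\log(\delay{}\,s\mu)\delay{} + 2|U_p|/(m's\mu)$ is therefore not achievable under the stated hypotheses.

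The paper dispenses with the depth partition entirely. \textsc{udps}-Solver runs a single loop on all of $U$: in each round it scans jobs $v$ and places the full predecessor set $\{u\in U : u\prec v\}$ on the currently least-loaded machine, \emph{but only if fewer than half of those predecessors were already placed in this round}. This overlap test is the key device. It guarantees (i) any job not scheduled this round sees its remaining predecessor count halved, so there are at most $\log(\alpha\delay{}\,s\mu)$ rounds, and (ii) the total number of placements in a round is at most twice the number of distinct jobs placed, so the load terms telescope to $2|U|/(m's\mu)$ rather than accumulating a logarithmic factor. Each round then costs $O(\alpha\delay{})$ via Graham's bound (chain length $\alpha\delay{}$ plus one predecessor set of imbalance), giving the stated $3\alpha\delay{}\log(\alpha\delay{}\,s\mu) + 2|U|/(m's\mu)$. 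The $\alpha$ layers you introduce are simply absorbed into this single halving argument.
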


For a proof of Lemma~\ref{lem:uniform}, see Section~\ref{sec:uniform}. The subroutine is the same as in Section~\ref{sec:job_machine_delays}, but using the job and machine groups as defined in this section. Lemma~\ref{lem:alg_valid_polytime} establishes that this subroutine outputs a valid schedule in polynomial time.

%We partition jobs into \textit{bands} based on the variable assignments $(C,x,z)$. Band $B_{k,d}$ consists of those jobs $v$ such that $x_{v,k} = 1$ and $d \bar{\rho}_k/\alpha \le C_v < (d+1) \bar{\rho}_k/\alpha$. 

% \begin{lemma}
%     Algorithm~\ref{alg:scheduler} outputs a valid schedule in polynomial time.
% \end{lemma}
% \begin{proof}
%     It is easy to see that the algorithm runs in polynomial time, and Lemma~\ref{lem:uniform} entails that precedence constraints are obeyed on each machine. Consider a fixed $v,k,d$ such that $v \in V_{k,d}$. By line~\ref{line:append}, we insert a communication phase of length $\bar \rho_k$ before appending any schedule on group $\group k$. %any schedule on $\group k$ requires an initial communication phase of $\bar{\rho}_k$ time. 
%     So, by the time Algorithm~\ref{alg:scheduler} executes any job in $V_{k,d}$, every job $u$ such that $C_u < d\bar{\rho}_k$ is available to every machine in group $\group k$. So the only predecessors of $v$ left to execute are those jobs in $U_{k,d}$. Therefore, all communication constraints are satisfied.
% \end{proof}

\begin{lemma}
    If $(C,x,y,z)$ is a rounded solution to \LP a then Algorithm~\ref{alg:scheduler_jm} outputs a schedule with makespan at most $12 \alpha \log( \delay{\max})(K L \Calpha + \delay{\max}(K + L))$.
\label{lem:integer_approximation}
\end{lemma}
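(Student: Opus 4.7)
The plan is to follow the blueprint of the proof of Lemma~\ref{lem:jm_integer_approximation}, tracking the new speed and size factors that were both unit in Section~\ref{sec:job_machine_delays}. Since Algorithm~\ref{alg:scheduler_jm} processes at most one pair $(V_{k,\ell,d}, \group k)$ at a time, separated by communication buffers of length $\groupdelay k + \groupdelay{\ell}$, I would bound the cumulative time spent scheduling jobs of $\jobgroup{\ell}$ on $\group k$ for each pair $(k,\ell)$, and then sum over the $KL$ pairs.

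The first step is to re-establish the analogue of Claim~\ref{clm:jm_z}: if $v \in V_{k,\ell,d}$ has a predecessor $u$ with $C_v < C_u + (\groupdelay k + \groupdelay{\ell})$, then constraint~(\ref{LP:delay}) combined with $x_{v,k} = 1$ and the rounded nature of $(C,x,y,z)$ forces $z_{u,v,k} = 1$. Next, I would establish three properties for each pair $(k,\ell)$. (a) The total phase-load satisfies $\sum_d |V_{k,\ell,d} \cup U_{k,\ell,d}| \le \Calpha \groupspeed k \groupsize k |\group k|$; this follows because every $u \in U_{k,\ell,d}$ satisfies $y_{u,k} \ge z_{u,v,k} = 1$ via constraint~(\ref{LP:duploadz}), after which constraint~(\ref{LP:load}) closes the argument. (b) For any $v \in V_{k,\ell,d}$, the number of its predecessors in $V_{k,\ell,d} \cup U_{k,\ell,d}$ is at most $\alpha(\groupdelay k + \groupdelay{\ell})\groupspeed k \groupsize k$, which is immediate from constraint~(\ref{LP*:load}) together with the preceding $z$-forcing claim. (c) The longest chain inside $V_{k,\ell,d} \cup U_{k,\ell,d}$ has length at most $\alpha(\groupdelay k + \groupdelay{\ell})$ in the sense required by Lemma~\ref{lem:uniform}, obtained by iterating constraint~(\ref{LP*:precedence}) along any predecessor chain $u_1 \prec u_2 \prec \ldots \prec u_h$ ending at some $v \in V_{k,\ell,d}$ and capping the telescoped sum $(h-1)/\groupspeed k$ via constraint~(\ref{LP*:makespan}).

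With these properties, Lemma~\ref{lem:uniform} applied to each $V_{k,\ell,d} \cup U_{k,\ell,d}$ with parameters $\delta = \groupdelay k + \groupdelay{\ell}$, $s = \groupspeed k$, $\mu = \groupsize k$, and $m' = |\group k|$ delivers a per-phase schedule of length at most $3\log((\groupdelay k + \groupdelay{\ell})\groupspeed k \groupsize k)\,\alpha(\groupdelay k + \groupdelay{\ell}) + 2|V_{k,\ell,d} \cup U_{k,\ell,d}|/(|\group k|\groupspeed k \groupsize k)$. Summing over the at most $\lceil \Calpha/(\groupdelay k + \groupdelay{\ell})\rceil$ values of $d$ gives a first-term contribution of $O(\alpha\log(\cdot)\,\Calpha + \alpha(\groupdelay k + \groupdelay{\ell})\log(\cdot))$, while by (a) the second term contributes at most $2\Calpha$. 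Finally, summing over the $K$ machine groups and $L$ job groups and folding the arguments of the logarithms into $O(\log \delay{\max})$ using $\groupdelay k, \groupspeed k, \groupsize k \le n$ together with the bounds on $K$ and $L$ from Section~\ref{sec:partition} yields the claimed bound $12\alpha \log(\delay{\max})(KL\Calpha + \delay{\max}(K+L))$.

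The main obstacle is property (c): the chain-length bound has no counterpart in the unit-speed, unit-size setting of Section~\ref{sec:job_machine_delays} and is precisely the reason the LP for the general setting introduces the auxiliary variables $C^{v,k}_u$ together with constraints~(\ref{LP*:precedence}) and (\ref{LP*:makespan}). I expect properties (a) and (b) to be mechanical analogues of Claim~\ref{clm:properties} once the factors of $\groupspeed k$ and $\groupsize k$ are threaded consistently through the LP constraints.
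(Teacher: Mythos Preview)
Your proposal is correct and follows essentially the same route as the paper's own proof: the paper also first establishes the $z$-forcing claim (its Claim~\ref{clm:z}), then bounds the per-phase predecessor count and the longest chain via constraints~(\ref{LP*:load}), (\ref{LP*:precedence}), (\ref{LP*:makespan}) (its Claim~\ref{clm:chain}), and the total load via constraint~(\ref{LP:load}) (its Claim~\ref{clm:D}), before invoking Lemma~\ref{lem:uniform} and summing over the $KL$ group pairs. Your identification of property~(c) as the one genuinely new ingredient, driven by the auxiliary variables $C^{v,k}_u$, matches exactly how the paper uses them.
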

\begin{proof} 
    The lemma follows from the following claims.
    %Note that the schedule produced by the algorithm executes jobs in one job group on machines in one group at a time. Since there are $K$ machine groups and $L$ job groups, we need only to show that the amount of time spent executing jobs on any group is at most $15\alpha \beta \log (\alpha n) \cdot \opt$.
    
    \begin{claim}
    \label{clm:z}
    For any $u,k,\ell,d$, if $u$ is in  $U_{k,\ell,d}$, then for some $v$ in $V_{k,\ell,d}$ we have $z_{u,v,k} = 1$. 
    \end{claim}
    \begin{proof}
    Fix a job $u$.  If $u$ is in $U_{k,\ell,d}$, then by the definition of $U_{k,\ell,d}$, there exists a job $v$ in $V_{k,\ell,d}$ such that $u \prec v$ and $C_u$ is in $[T, T + \groupdelay k + \groupdelay{\ell})$, where $T = d(\groupdelay k + \groupdelay{\ell})$.  By the definition of $V_{k,\ell,d}$, $x_{v,k}$ is 1 and $C_v$ is also in $[T, T + \groupdelay k + \groupdelay{\ell})$.  Therefore, $C_v - C_u < \groupdelay k + \groupdelay{\ell}$.  By constraint~\labelcref{LP:delay}, $C_v \ge C_u + (\groupdelay k + \groupdelay{\ell})(1 - z_{u,v,k})$, implying that $z_{u,v,k}$ cannot equal 0.  Since $z_{u,v,k}$ is either 0 or 1, we have $z_{u,v,k} = 1$.
    \end{proof}
    
    \begin{claim}
    \label{clm:chain}
    For any $v,k,\ell,d$, if $v \in V_{k,\ell,d}$ then $|\{u \prec v: u \in V_{k,\ell,d} \cup U_{k,\ell,d}\}|$ is at most $\alpha (\groupdelay k + \groupdelay{\ell}) \groupsize k \groupspeed k$ and the longest chain in $U_{k,\ell,d}$ has length at most $\alpha (\groupdelay k + \groupdelay{\ell}) \groupspeed k$.
    \end{claim}
    \begin{proof} 
    Fix $v$ and consider any $u$ in $V_{k,\ell,d} \cup  U_{k,\ell,d}$ such that $u \prec v$.  
    By Claim~\ref{clm:z}, $z_{u,v,k} = 1$.  Therefore, $|\{u \prec v: u \in V_{k,\ell,d} \cup U_{k,\ell,d}\}|$ equals the right-hand side of constraint~\labelcref{LP*:load}, and hence at most $\alpha \bar{\rho}_k \groupsize k \groupspeed k$.  Let $u_1 \prec u_2 \prec \cdots \prec u_\ell$ denote a longest chain in $U_{k,\ell,d}$.  By Claim~\ref{clm:z}, $z_{u_i,v,k}$ equals 1 for $1 \le i \le \ell$.  By constraint~\labelcref{LP*:nonneg}, $C_{u_1}^{v,k} \ge 0$.  By a repeated application of constraint~\labelcref{LP*:precedence}, we obtain $C_{u_\ell}^{v,k} \le \ell$.  By constraint~\labelcref{LP*:makespan}, $\ell \le \alpha \bar{\rho}_k \groupspeed k$, and hence the longest chain in $U_{k,\ell,d}$ has length at most $\alpha \bar{\rho}_k \groupspeed k$.
    \end{proof}
        
    \begin{claim}
    \label{clm:D}
    For any $k$, $|\bigcup_d V_{k,\ell,d} \cup U_{k,\ell,d}| \le D \cdot |\group k| \cdot \groupsize k \groupspeed k$.
    \end{claim} 
    \begin{proof} 
    Fix a $k$.  For any $v$ in $V_{k,\ell,d}$ we have $x_{v,k}=1$ by the definition of $V_{k,\ell,d}$.  Consider any $u$ in $U_{k,\ell,d}$.  By Claim~\ref{clm:z}, there exists a $v$ in $V_{k,\ell,d}$ such that $z_{u,v,k}$ equals 1; fix such a $v$.  By constraint~\labelcref{LP:duploadz}, $y_{u,k}$ equals 1.  Thus, $|\bigcup_d V_{k,\ell,d} \cup U_{k,\ell,d}|$ is at most the right-hand side of constraint~\labelcref{LP:load}, which is at most  $\Calpha \cdot |\group k| \cdot \groupsize k \groupspeed k$.
    \end{proof}
    
    So, by Lemma~\ref{lem:uniform} and Claim~\ref{clm:chain}, the total time spent executing jobs on a single group is upper bounded by
    \begin{align*}
        \sum_d \left( 3\alpha (\groupdelay k + \groupdelay{\ell}) \log(\alpha (\groupdelay k + \groupdelay{\ell}))   + \frac{2 \cdot |V_{k,\ell,d} \cup U_{k,\ell,d}|}{|\group k|} \right) %&= \ceil{\frac{\Calpha}{\groupdelay k}}  3\alpha \groupdelay k \log(\alpha \groupdelay k)  + \frac{2\sum_d |V_{k,d} \cup U_{k,d}|}{|\group k|}  \\
        %&\le \left( \frac{\Calpha}{\groupdelay k} + 1 \right) 3\alpha \groupdelay k \log(\alpha \groupdelay k)  + \frac{4\Calpha \cdot  |\group k|}{|\group k|}   \\
        %&\le 7\Calpha \alpha  \log(\alpha \groupdelay k) +  3\alpha \groupdelay k \log(\alpha \groupdelay k)  
    \end{align*}
    The summation over the first term is at most $\ceil{\Calpha/(\groupdelay k + \groupdelay{\ell})}  3\alpha (\groupdelay k + \groupdelay{\ell}) \log(\alpha (\groupdelay k + \groupdelay{\ell}))$ which is at most $3\Calpha \alpha  \log(\alpha (\groupdelay k + \groupdelay{\ell})) +  3\alpha (\groupdelay k + \groupdelay{\ell}) \log(\alpha (\groupdelay k + \groupdelay{\ell}))$. The summation over the second term is at most $2\Calpha$ by claim~\ref{clm:properties}\ref{prop:load}. Summing over all $K$ machine groups and $L$ job groups, and considering $K,L \le \log \delay{\max}$, the total length of the schedule is at most $12 \alpha \log( \delay{\max})( \Calpha K L + \log (\delay{\max})(K + L))$. 
    % \begin{align*}
    %     \sum_d \Big( 3 \log(\alpha (\groupdelay k + \groupdelay{\ell}) &\groupsize k \groupspeed k)  \alpha \bar{\rho}_k + \frac{2 \cdot |V_{k,\ell,d} \cup U_{k,\ell,d}|}{|\group k| \cdot \bar{m}_k \groupspeed k} \Big) \\ &= \frac{3D\alpha (\groupdelay k + \groupdelay{\ell}) \log(\alpha (\groupdelay k + \groupdelay{\ell}) \groupsize k \groupspeed k)}{\groupdelay k + \groupdelay{\ell}}  + \frac{2\sum_d |V_{k,\ell,d} \cup U_{k,\ell,d}|}{|\group k| \cdot \groupsize k \groupspeed k}  \\
    %     &\le 3D\alpha  \log(\alpha \bar{\rho}_k \groupsize k \groupspeed k)  + \frac{2D \cdot  |\group k| \cdot \groupsize k \groupspeed k}{|\group k| \cdot \bar{m}_k \groupspeed k}   \\
    %     &\le 5D\alpha  \log(\alpha \bar{\rho}_k \groupsize k \groupspeed k)  
    % \end{align*}
    % where the second line follows by Claim~\ref{clm:D} and the fact that $D / \bar{\rho}_k \ge 1$. Summing over all $k$, the total time in the schedule is at most $5KD\alpha \log(\alpha \rho m)$. The lemma follows from the fact that $D \le \beta \cdot (\makespan{\sigma} + \delay{\sigma})$ and $\bar{\rho}_k \groupsize k \groupspeed k \le n^3$.
\end{proof}

\begin{theorem}
    There exists a polynomial time algorithm that produces a schedule with makespan $O((\log n)^{15}(\opt + \delay{\max})$.
\end{theorem}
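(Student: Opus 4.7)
The plan is to assemble the ingredients already developed in Sections~\ref{sec:partition}--\ref{sec:schedule} and combine them mechanically, exactly mirroring the argument of Theorem~\ref{thm:jm_approximation}, but with the new parameters for the most general setting.

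First, I would solve \LP{1} in polynomial time to obtain a fractional optimum $(\hat C, \hat x, \hat y, \hat z)$, and apply the rounding construction of Section~\ref{sec:round} to obtain an integer tuple $(C, x, y, z)$. Lemma~\ref{lem:LP_valid} guarantees $\hat{\C{1}} \le \opt$, Lemma~\ref{lem:LPK_solution} guarantees that the rounded tuple is feasible for \LP{2K}, and Lemma~\ref{lem:LPK_makespan} then gives $\C{2K} \le 24 K \cdot \opt$. I would then feed this integer solution to Algorithm~\ref{alg:scheduler_jm} (it still applies since jobs and machines have been grouped by Section~\ref{sec:partition} and the \udps-Solver invoked inside is analyzed by Lemma~\ref{lem:uniform} in the multiprocessor, variable-speed setting). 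Lemma~\ref{lem:alg_valid_polytime} ensures this yields a valid schedule in polynomial time, and Lemma~\ref{lem:integer_approximation} bounds its makespan by $12\alpha \log(\delaymax)(KL\Calpha + \delaymax(K+L))$.

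Next, I would do the bookkeeping with $\alpha = 2K$ and $\Calpha = \C{2K}$. Substituting $\C{2K} \le 24 K \opt$ into Lemma~\ref{lem:integer_approximation} gives the ``no-out-delay'' bound
\begin{equation*}
O\!\left(K^3 L \log(\delaymax)\,\opt + K(K+L)\log(\delaymax)\,\delaymax\right).
\end{equation*}
By Section~\ref{sec:partition}, in the fully heterogeneous case we have $K \le (\log n)^3$ machine groups (delay $\times$ speed $\times$ size), $L \le \log n$ job groups, and $\delaymax \le n$. Plugging in gives a bound of $O((\log n)^{11}\,\opt) + O((\log n)^7\,\delaymax)$ for the reduced instance with zero out-delays.

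Finally, I would lift the bound from the in-delay-only instance to the original instance with both in- and out-delays by invoking Lemma~\ref{lem:reduction} from Section~\ref{sec:out-in}. As in the last line of the proof of Theorem~\ref{thm:jm_approximation}, this reduction costs an additional factor of $O(\log^4 \delaymax) = O(\log^4 n)$, yielding a final makespan of $O((\log n)^{15}(\opt + \delaymax))$, as claimed. The only ``obstacle'' here is really just careful accounting of the logarithmic factors coming from (i) the three-dimensional grouping in $K$, (ii) the job groups $L$, (iii) the $\alpha = 2K$ blowup from rounding, and (iv) the out-delay to in-delay reduction; none of the steps introduces new combinatorial difficulty beyond what was handled for Theorem~\ref{thm:jm_approximation}, since the more delicate arguments (splitting $V_{k,\ell,d}, U_{k,\ell,d}$, the chain-length argument, and the $y$-variable load accounting) have already been absorbed into Lemmas~\ref{lem:LPK_solution}, \ref{lem:LPK_makespan}, and \ref{lem:integer_approximation}.
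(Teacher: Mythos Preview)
Your proposal is correct and follows essentially the same approach as the paper's own proof: solve \LP{1}, round via Lemma~\ref{lem:LPK_solution} and Lemma~\ref{lem:LPK_makespan}, feed the rounded solution to Algorithm~\ref{alg:scheduler_jm}, apply Lemma~\ref{lem:integer_approximation}, and then plug in $K \le (\log n)^3$, $L \le \log n$, $\delaymax \le n$. In fact you are more explicit than the paper, which omits the final invocation of Lemma~\ref{lem:reduction}; since Section~\ref{sec:general} works under the in-delay-only assumption, that reduction is what supplies the remaining $(\log n)^4$ factor needed to reach $(\log n)^{15}$, exactly as you argue.
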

\begin{proof}
    Lemma~\ref{lem:LPK_solution} entails that $(C,x,y,z)$ is a valid solution to \LP{2K}. By Lemma~\ref{lem:integer_approximation}, the makespan of the constructed schedule is $12(2K)(\log \delay{\max})(KL(2K \cdot \opt) + \delay{\max}(K + L))$. The theorem is entailed by $K \le (\log n)^3$, $L \le \log n$, and $\delay{\max} \le n$. 
\end{proof}

For the most general case with variable delays, sizes, and speeds, we have $K = (\log n)^3$ groups, yielding an $O((\log n)^{10} \cdot (\makespan{\sigma} + \delay{\sigma}))$ approximation for any schedule $\sigma$. For special cases in which any one parameter is uniform across all machines, we have $K = (\log n)^2$ groups, yielding an $O((\log n)^{7} \cdot (\makespan{\sigma} + \delay{\sigma}))$ approximation for any schedule $\sigma$. For special cases in which any two parameters are uniform across all machines, we have $K = (\log n)$ groups, yielding an $O((\log n)^{4} \cdot (\makespan{\sigma} + \delay{\sigma}))$ approximation for any schedule $\sigma$. %In this last case, a model of particular interest is that in which all machines have uniform (or unit) speed and size, and the in- and out-delays may vary. 

%%% Local Variables:
%%% mode: latex
%%% TeX-master: "main"
%%% End:

\section{No-Duplication Schedules under Job and Machine Delays}
\label{sec:no-dup}
In Section~\ref{sec:mdps-no-dup}, we present an asymptotic polylogarithmic approximation for scheduling related machines with multiprocessors under machine delays and job delays, without duplication (Lemma~\ref{lem:no-dup}).  In Section~\ref{sec:mdps-symmetric-no-dup}, we show that when the delays are symmetric (i.e., in- and out-delays are identical), we can in fact derive true polylogarithmic approximations for no-duplication schedules (Lemma~\ref{lem:no-dup-symmetric}).  Theorem~\ref{thm:no-dup} follows immediately from Lemmas~\ref{lem:no-dup} and~\ref{lem:no-dup-symmetric}.
\subsection{Asymptotic Approximation}
\label{sec:mdps-no-dup}
Our algorithm for no-duplication schedules first runs Algorithm~\ref{alg:scheduler_jm} for the problem, which produces a schedule (with possible duplications) having a structure illustrated in Figure~\ref{fig:schedule_structure}.  By line 7 of Algorithm~\ref{alg:scheduler_jm}, the schedule $\sigma_{k,\ell,d}$ for each $V_{k,\ell,d}$ includes a delay of $\groupdelay k + \groupdelay{\ell}$ at the start.  We next 
convert the schedule $\sigma_{k,\ell,d}$, for each $V_{k,\ell,d}$, to a no-duplication schedule using the following lemma from \cite{maiti_etal.commdelaydup_FOCS.20}.  
\begin{lemma}[Restatement of Theorem 3 from \cite{maiti_etal.commdelaydup_FOCS.20}]
  Given a \udps\ instance where the delay is $\rho$ and a schedule of length $D \geq \rho$, there is a polynomial time algorithm that  computes a no-duplication schedule for the instance with length $O((\log m )(\log n)^{2}D)$.
  \label{lemma:focs}
\end{lemma}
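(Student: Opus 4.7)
The statement is a direct restatement of Theorem~3 from \cite{maiti_etal.commdelaydup_FOCS.20}, so my plan is to invoke their algorithm as a black box rather than rederive it. To give a sense of why such a conversion is possible, I sketch the underlying approach.

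Given the duplication schedule $\sigma$ of length $D$ with uniform delay $\rho$, we must select, for each job $v$, exactly one canonical copy to retain, and then reroute all dependencies through these copies without violating the delay $\rho$. A natural first step is to partition the time axis of $\sigma$ into $O(D/\rho)$ phases of length $\Theta(\rho)$; the uniform delay constraint then forces any predecessor--successor pair executed within a single phase to be co-located on one machine, which produces committed chains of jobs that must be preserved together during the conversion. Across phases, the delay $\rho$ is already ``paid'' by the phase boundary, so the interesting constraints are localized within each phase.

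To perform the conversion while keeping the blowup polylogarithmic, three ingredients are combined. First, the machine set is divided into $O(\log m)$ balanced partitions which are cycled through in the rebalancing procedure, accounting for the $\log m$ factor. Second, the DAG of committed chains is processed level by level; the cascade of canonical-copy commitments has depth $O(\log n)$, contributing one factor of $\log n$. Third, an additional $\log n$ factor arises from the phase-alignment accounting needed to bound conflicts across non-adjacent phases after commitments are propagated.

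The main obstacle, resolved in \cite{maiti_etal.commdelaydup_FOCS.20}, is proving that the cascade of commitments terminates without further amplification and that the rescheduled jobs continue to respect both precedence and the $\rho$-delay. Since that argument is carried out in full there, my plan reduces to a direct invocation of their Theorem~3, which supplies the claimed $O((\log m)(\log n)^2 D)$ bound in polynomial time.
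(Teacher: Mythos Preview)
Your proposal is correct and matches the paper's approach: the paper states this lemma purely as a restatement of Theorem~3 from \cite{maiti_etal.commdelaydup_FOCS.20} and invokes it as a black box, giving no proof or sketch of its own. Your additional sketch of the underlying conversion is more than the paper provides, but since the lemma is used only as an imported result, the bare citation suffices.
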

%If the no-duplication schedule thus computed for $V_{k,\ell,d}$ does not have a initial delay $\groupdelay k + \groupdelay{\ell} $, then we introduce a delay of $\groupdelay k + \groupdelay{\ell}$ at the start of the schedule.  
Let $\hat{\sigma}_{v,k,\ell}$ denote the no-duplication schedule thus computed for $V_{k,\ell,d}$.  We concatenate the $\hat{\sigma}_{v,k,\ell}$ in the order specified in Figure~\ref{fig:schedule_structure}.  By construction, there is no duplication of jobs within any $\hat{\sigma}_{v,k,\ell}$.  To ensure that the final schedule has no duplication, we keep the first occurrence of each job in the schedule, and prune all duplicate occurrences.  We now show that no precedence constraints are violated.
\begin{lemma}
  In the above algorithm, all precedence constraints are satisfied.
  \label{lem:de-dup-constraints}
\end{lemma}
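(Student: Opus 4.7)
The plan is to verify, for every precedence $u \prec v$, that the pruned concatenated schedule either executes $u$ on $v$'s machine before $v$, or executes $u$ somewhere with enough lead time for its result to reach $v$. I would fix the unique surviving occurrence of $v$, identify the phase $(k,\ell,d)$ that contains it (so $v \in V_{k,\ell,d}$ and $v$ runs on some $i \in \group k$), and case-split on whether the surviving copy of $u$ sits in the same phase or in an earlier one.

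In the ``same phase'' case, I would use the LP to argue that $u$ must have appeared in the input $V_{k,\ell,d} \cup U_{k,\ell,d}$ fed to Lemma~\ref{lemma:focs}: constraint~\ref{LPjm:precedence} together with the definition of $V_{k,\ell,d}$ forces $C_u < T + \groupdelay k + \groupdelay{\ell}$ with $T = d(\groupdelay k + \groupdelay{\ell})$; if also $C_u \ge T$ then $u \in U_{k,\ell,d}$ by definition, while if $C_u < T$ the surviving copy of $u$ would lie in a strictly earlier phase, contradicting the case assumption. Lemma~\ref{lemma:focs} then places $u$ before $v$ inside $\hat{\sigma}_{v,k,\ell}$ respecting uniform delay $\groupdelay k + \groupdelay{\ell}$, which, under the zero-out-delay reduction of Section~\ref{sec:out-in}, matches the true communication requirement $\delayin i + \delayin v = \groupdelay k + \groupdelay{\ell}$; since $u$'s surviving copy is precisely this in-phase copy, pruning does not disturb the constraint.

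In the ``earlier phase'' case, I would rely on the buffer of length $\groupdelay k + \groupdelay{\ell}$ that line~\ref{line:append_jm} of Algorithm~\ref{alg:scheduler_jm} prepends to phase $(k,\ell,d)$. Because the concatenation order is strictly time-increasing, $u$'s surviving copy completes before this buffer begins, and the buffer is exactly the delay needed to transmit any job's result to a machine $i \in \group k$ for $v \in \jobgroup{\ell}$ once out-delays are zero. Hence $u$'s data is available at $i$ by the moment $v$ actually executes, satisfying the precedence constraint globally.

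The delicate step is the ``same phase'' case, because pruning can silently delete copies of $u$ that Lemma~\ref{lemma:focs} deliberately placed inside $\hat{\sigma}_{v,k,\ell}$ (whenever $u$'s true first occurrence lies in some earlier phase). The key observation that dissolves this worry is that any such deletion, by the definition of pruning, pushes the situation into the ``earlier phase'' regime, whose global buffer already covers the communication; thus the two cases dovetail and no precedence is ever orphaned.
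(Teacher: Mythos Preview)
Your argument is correct and lands on the same two ingredients the paper uses---the within-phase no-duplication guarantee from Lemma~\ref{lemma:focs} and the $\groupdelay k + \groupdelay{\ell}$ buffer inserted before each phase---but you organize the case analysis differently. You split on whether $u$'s \emph{surviving} copy lies in the same phase $\hat{\sigma}_{k,\ell,d}$ as $v$'s or in an earlier one; the paper instead fixes the copy of $u$ that $v$ depends on in the \emph{pre-pruning} schedule and splits on whether that copy is on $v$'s machine or not, then invokes the in-delay-only model to argue that any strictly earlier occurrence of $u$ is at least as available. Your decomposition is arguably cleaner because it tracks directly the objects that survive pruning. Two small cleanups: the LP detour in your same-phase case is unnecessary, since $u$'s surviving copy appearing in $\hat{\sigma}_{k,\ell,d}$ already means $u$ was in the input $V_{k,\ell,d}\cup U_{k,\ell,d}$ handed to Lemma~\ref{lemma:focs}; and the surviving copy of $v$ need not lie in $V_{k,\ell,d}$ specifically (it could be in $U_{k,\ell,d}$), though this does not affect your conclusion.
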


\begin{proof}
Suppose $u \prec v$. Pruning of a duplicate copy of $u$ can only cause a precedence violation if the copy of $u$ that $v$ waits for is pruned, so we assume $u$ is pruned. For each copy of $v$ scheduled on some machine $i$, either some copy of $u$ is also scheduled on $i$, or it needs to wait $\delay i + \delay v$ time to wait for those completed on other machines to be transmitted. In the former case, $u$ will only be pruned if it is not the first occurrence. This can only happen if the first occurrence of $u$ is in an earlier $V_{k,\ell, d}$, since our subroutine guarantees that there is no duplication within a $V_{k,\ell, d}$. Due to our in-delay model, as long as a later occurrence of $u$ can be transmitted in time, so can all its previous occurrences. Therefore a copy of $u$ will be available to $v$. In the latter case, either $u$ is within the same $V_{k,\ell, d}$, or is in a previous one. In either case, the earlier occurrences of $u$ have to be in earlier $V_{k,\ell, d}$, and will be available to $v$.  So the pruning of this copy of $u$ will not cause a precedence violation.
\end{proof}

\begin{lemma}
There exists a polynomial time algorithm that produces a no-duplication schedule whose makespan is at most $\polylog(n) (\opt + \delay{\max})$.
\label{lem:no-dup}
\end{lemma}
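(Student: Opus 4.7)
The plan is to analyze the algorithm already described immediately above the lemma statement, namely: (i) run Algorithm~\ref{alg:scheduler_jm} to obtain a (possibly-duplicating) schedule whose structure is the one depicted in Figure~\ref{fig:schedule_structure}; (ii) replace each phase schedule $\sigma_{k,\ell,d}$ by a no-duplication schedule $\hat{\sigma}_{k,\ell,d}$ obtained from Lemma~\ref{lemma:focs}; (iii) concatenate the $\hat{\sigma}_{k,\ell,d}$'s in the same temporal order as Algorithm~\ref{alg:scheduler_jm} arranges them; and (iv) prune every occurrence of a job other than its first. The polynomial-time claim is immediate, since each of these steps is polynomial and there are at most $KL \cdot \Calpha / (\groupdelay k + \groupdelay{\ell})$ phases.

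For the makespan bound, I would argue in two stages. First, by the theorem of Section~\ref{sec:general}, the duplication schedule produced by Algorithm~\ref{alg:scheduler_jm} has makespan $O((\log n)^{15} (\opt + \delay{\max}))$, and this total is additive over the phases $\sigma_{k,\ell,d}$, each of which is a \udps\ schedule on the identical-machine group $\group k$ with uniform delay $\groupdelay k + \groupdelay{\ell}$ and whose length is at least $\groupdelay k + \groupdelay{\ell}$ (by the communication phase prepended on line~\ref{line:append_jm}). Therefore Lemma~\ref{lemma:focs} applies in each phase, blowing up the phase length by a factor of $O((\log m)(\log n)^2)$. Summing over all phases multiplies the total makespan of stage~(i) by the same factor, giving an overall bound of $O((\log n)^{18})(\opt + \delay{\max}) = \polylog(n)(\opt + \delay{\max})$. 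Pruning duplicates in step~(iv) only shortens the schedule, so the bound is preserved.

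The main obstacle is verifying that pruning across phases does not violate any precedence constraint, and this is precisely what Lemma~\ref{lem:de-dup-constraints} settles: whenever we prune a copy of $u$ that is consumed by some later $v$, an earlier copy of $u$ must already exist in a previous $V_{k,\ell,d}$, and the in-delay model guarantees that if the pruned copy could reach $v$ in time, the earlier copy can as well. Once that invariant is in hand, and combined with the observation that within each $\hat{\sigma}_{k,\ell,d}$ there are no duplicates by construction, the final schedule is both valid and duplication-free. Combining the validity (from Lemma~\ref{lem:de-dup-constraints}) with the makespan accounting above yields the lemma.
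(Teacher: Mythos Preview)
Your proposal is correct and follows essentially the same approach as the paper: invoke Lemma~\ref{lem:de-dup-constraints} for feasibility, apply Lemma~\ref{lemma:focs} phase by phase (using that each $\sigma_{k,\ell,d}$ already begins with a length-$(\groupdelay k + \groupdelay{\ell})$ delay so that the additive $\rho$ term is absorbed), and sum the $O((\log m)(\log n)^2)$-inflated phase lengths to obtain a $\polylog(n)$ blow-up of the original duplication makespan. One small imprecision: pruning does not literally shorten the schedule (time slots are unchanged), but it cannot increase the makespan, which is all you need.
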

\begin{proof}
Lemma~\ref{lem:de-dup-constraints} establishes the feasibility of the schedule. By definition, each job is scheduled exactly once. It remains to bound the approximation ratio.  Let $\tau_{k, d}$ be the makespan of the schedule of $V_{k, d}$ given by the subroutine we use on line 6 of \Cref{alg:scheduler}, and $\tau_{k, d}'$ be the makespan of the de-duplicated version. Then \Cref{lemma:focs} shows that $\tau_{k, d}' \leq O((\log n_{k, d})^{2}(\log m)(\tau_{k, d} + \rho_{k}))$, where $n_{k, d}$ is the number of jobs in $V_{k, d}$, which is upper bounded by $n$.  There are at most $n$ jobs to schedule, so having more than $n$ machines does not help us, therefore we can assume $m \leq n$ without loss of generality. \Cref{alg:uniform} guarantees that there is at least an in-delay at the start of $\tau_{k, d}$, so $\tau_{k, d} \geq \rho_{k}$, therefore $\tau_{k, d}' \leq O((\log n_{k, d})^{3}\tau_{k, d})$. Therefore, the total makespan
\[\sum_{k, d}\tau_{k, d}' \leq \sum_{k, d}O((\log n)^{3}\tau_{k, d})  \leq O((\log n)^{3}) \cdot \sum_{k, d}\tau_{k, d} = O(\text{polylog}(n) \makespan{\sigma} + \delay{\sigma}))\]
\end{proof}

\subsection{Symmetric Job and Machine Delays}
\label{sec:mdps-symmetric-no-dup}
In this section, we establish a true polylogarithmic approximation for the no-duplication model if the delays are symmetric, i.e., $\delayin i = \delayout i$ for all machines $i$ and $\delayin v = \delayout v$ for all jobs $v$.  The additive term of delay in Theorems~\ref{thm:smdjd-general} and Lemma~\ref{lem:no-dup} comes from \Cref{alg:uniform}, where to ensure all previously scheduled jobs are available, we insert a suitable delay at the front.  This can be a cost too high if, for instance, the optimal solution does not incur any communication 
into a particular job or machine group.  To overcome this difficulty, we develop a method that, given a threshold $T$, determines which groups participate in a schedule of length $T$ without any communication.  Formally, we establish the following lemma which, together with a binary search on $T$, yields the second part of~\Cref{thm:no-dup}.
\begin{lemma}
\label{lem:no-dup-symmetric}
There exists a polynomial time algorithm such that given $T$, either gives a schedule of length $O(\polylog (n)T)$, or correctly asserts that $\opt > T$.
\end{lemma}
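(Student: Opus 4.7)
The plan is to exploit symmetry to separate the instance into a low-delay part that can be scheduled using Theorem~\ref{thm:smdjd-general} and a high-delay part that is forced to live on single machines. Given $T$, call a machine $i$ \emph{high-delay} if $\delay i > T$, and a job $v$ \emph{high-delay} if $\delay v > T$; all remaining entities are \emph{low-delay}. In any schedule of makespan at most $T$, a communication from job $u$ on machine $i$ to job $v$ on machine $j$ satisfies $\delayout u + \delayout i + \delayin v + \delayin j \le T$, so each of the four summands is at most $T$. By symmetry, a high-delay machine has both in- and out-delay exceeding $T$, and likewise for a high-delay job; hence no communication in such a schedule can involve a high-delay entity. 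Consequently every high-delay job's predecessors and successors, and every neighbor of any job placed on a high-delay machine, must be co-located with that job on a single machine. Iterating these forced co-location constraints via BFS on the precedence DAG, I compute in polynomial time a collection of disjoint \emph{atomic components}, each of which must be executed in its entirety on a single machine.

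Next I adapt the LP of Section~\ref{sec:LP} so that every atomic component is treated as a single super-job whose internal load and internal critical-path parameters are derived from the uniform-machine schedule of the component. The LP assigns super-jobs and ordinary low-delay jobs to machine groups under the same delay and load constraints as Section~\ref{sec:LP}, except that external communication (the $z$-variables) is permitted only between low-delay jobs on low-delay machines; super-jobs contribute no external $z$-variables. Set $\Calpha = T$ and attempt to solve. If the LP is infeasible, this witnesses that no integer schedule of length $T$ exists, because the LP is a relaxation of all such schedules satisfying the structural constraints above, and I safely return $\opt > T$. Otherwise I round as in Section~\ref{sec:round}, keeping each atomic component as an indivisible unit on its assigned machine.

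I then construct the schedule by concatenating three segments. Segment (a): on each machine hosting an atomic component, execute the prefix of component jobs all of whose low-delay predecessors lie inside the component. Segment (b): run the algorithm of Theorem~\ref{thm:smdjd-general} on the ordinary low-delay jobs restricted to low-delay machines; because this restricted instance has $\delaymax \le T$ and the LP certifies an integer solution of value $T$, segment (b) has length $O(\polylog(n) \cdot T)$ and, combined with the de-duplication procedure of Lemma~\ref{lem:no-dup}, uses no duplication. Segment (c): execute the remaining component jobs on their respective machines. Segments (a) and (c) have length $O(T)$ by the LP's constraints on internal component makespans, so the total makespan is $O(\polylog(n) \cdot T)$ and duplication is avoided everywhere.

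The main obstacle is atomic components placed on low-delay machines that simultaneously contain a low-delay job receiving data from outside and a low-delay job sending data to outside, with the outside chain forcing an order that interleaves with segment (b). A naive three-segment split then violates precedences. I resolve this during the BFS above by also merging any two atomic components connected by a low-delay precedence path whose direction would force such an interleaving, equivalently adding the offending low-delay jobs into the component via transitive closure along edges that would otherwise have to cross the (a)/(b) or (b)/(c) boundary. The enlarged components still satisfy the LP's internal load and critical-path constraints to within a $\polylog(n)$ factor, which is absorbed into the final approximation ratio without affecting the $O(\polylog(n) \cdot T)$ guarantee, and therefore yields the lemma via a binary search on $T$.
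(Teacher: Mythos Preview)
Your high-level plan matches the paper's: separate high-delay machines and jobs from low-delay ones, enrich the LP so it decides which pieces go where, and assemble a three-segment schedule with the low-delay part sandwiched in the middle. The paper also handles high-delay machines and high-delay jobs by \emph{different} mechanisms: connected components of the precedence DAG are assigned to high-delay machines via new LP variables $X_{d,i}$ and then rounded with the Lenstra--Shmoys--Tardos theorem, while high-delay jobs on low-delay machines are collected into the sets $\tilde W_v=\{u: u\prec v \text{ or } v\prec u\}$ (merged on overlap) and placed by a greedy capacity-ordered sweep with a separate feasibility argument. You fold both into one ``atomic component'' notion and a single modified LP; that is plausible but you never specify the rounding for either part, and both steps require nontrivial arguments in the paper.

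The genuine gap is your resolution of the interleaving obstacle. You propose to enlarge components by ``adding the offending low-delay jobs via transitive closure along edges that would cross the (a)/(b) or (b)/(c) boundary,'' and then assert that the enlarged components still satisfy the LP's internal load and critical-path bounds to within a $\polylog(n)$ factor. This assertion is not justified and, as stated, is false in general: once you start absorbing low-delay jobs along external precedence paths, a single component can swallow an arbitrarily long chain of low-delay jobs and blow up both its load and its critical path well beyond $\polylog(n)\cdot T$. The paper avoids this entirely. Because each $\tilde W_v$ already contains \emph{all} predecessors and \emph{all} successors of the high-delay job $v$, the union $U_i$ placed on machine $i$ can be split into a downward-closed prefix $U_{i,1}=\{u\in U_i:\exists\, v\in U_i\cap V_{\mathrm{long}},\ u\preceq v\}$ and an upward-closed suffix $U_{i,2}=U_i\setminus U_{i,1}$. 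Then segment~(a) schedules $U_{i,1}$ (no external predecessors), segment~(b) runs the low-delay algorithm, and segment~(c) schedules $U_{i,2}$ (no external successors). No merging is needed, and the load and chain bounds on $U_i$ come directly from the greedy placement and the definition of the capacity threshold $T s_i m_i$, not from an unproven $\polylog$ preservation claim.
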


Let $S$ (resp., $S'$) be the set of machines that have delay at most (resp., larger than) $T$.  
Let $G$ denote the DAG defined by the set of jobs and its precedence constraints.  Let $G'$ denote the undirected graph obtained from $G$ by removing directions from all edges.  Under symmetric delays,  we can afford no communication in either direction with machines in $S'$, so in the absence of duplication both \opt\  and our algorithm will only schedule full connected components of $G'$ on $S'$.  A priori, it is unclear how to distribute the connected components among $S$ and $S'$.  We revise the LP framework of Section~\ref{sec:general} so that it guides this distribution while yielding a fractional placement without duplication on $S'$ and a fractional placement with duplication on $S$.  We then show how to design a full no-duplication schedule from these fractional placements.

For each connected component $d$ and each machine $i\in S'$, we define an LP variable $X_{d, i}$, indicating whether the component is scheduled on machine $i$. Therefore for the completion of jobs, we replace constraint~\labelcref{LP:execution} with
\begin{equation}
  \label{LP:execution_dedup_simple}
\sum_{k} x_{v, k} + \sum_{i\in S'} X_{d, i} = 1, \quad \forall d, \forall v\in d
\end{equation}
Use $w(d)$ to denote the total number of jobs in connected component $d$, and $L(d)$ the length of its critical path. We also need to add the following constraints:
\begin{align}
  &D\geq \sum_{d}\frac{X_{d, i}\cdot w(d)}{m_{i}s_{i}}, &&\forall i \label{LP:makespan_Xdi}\\
  &0 \leq X_{d, i} \leq 1, &&\forall d, i \label{LP:Xdi}\\
  &X_{d, i} = 0, &&\forall d, \text{ if } l(d) > s_{i}\cdot T \label{LP:Xdi:critical}\\
  &X_{d, i} = 0, &&\forall d, \text{ if } w(d) > m_{i}\cdot s_{i}\cdot T \label{LP:Xdi:load}
\end{align}

After solving the LP, if $D > T$, we know \textsf{OPT} cannot have a schedule less than $T$, since a no-duplication schedule is also a valid duplication schedule, and satisfies the revised LP. If $D \leq T$, we proceed to round and give a schedule. Let $\delta_{d} = \sum_{i}X_{d, i}$. For each component $d$ such that $\delta_{d} \geq \frac{1}{2}$, we assign them to the long delay machines $S'$. To be more precise, we increase their $X_{d, i}$ variables by a factor of $1/\delta_{d}$, and set the corresponding $x_{v, k}, z_{u, v, k}$ variables to be zero. Since $\delta_d \geq 1/2$, we have the corresponding constraints.
\begin{align}
  &\sum_{i\in S'} X_{d, i} = 1, &&\forall  d \label{LP:execution_dedup}\\
  &2 D\geq \sum_{d}\frac{X_{d, i}\cdot w(d)}{m_{i}s_{i}}, &&\forall i \label{LP:makespan_Xdi_rounded}
\end{align}

For all other components with $\delta_{d} < \frac{1}{2}$, set their $X_{d, i} = 0$. For every job $v$ in component $d$, scale its corresponding $x_{v, k}, y_{v, k}, z_{u, v, k}, C_{v}, C_{u}^{v, k}$ up, by a factor of $1/(1-\delta_{d})$. In addition, we also replace $\alpha$ with $2\alpha$, $D$ with $2D$, since $2$ is an upper bound of $1/(1 - \delta_{d})$. It is not hard to check that all constraints~(\ref{LP:load}, \ref{LP:delay}, \ref{LP*:load}, \ref{LP*:precedence}, \ref{LP*:nonneg}, \ref{LP*:makespan}, \ref{LP:makespan}, \ref{LP:precedence}, \ref{LP:mincompletion}, \ref{LP:minexecution}, \ref{LP:duploadx}, \ref{LP:duploadz}, \ref{LP:dupbounds})
hold, and constraint~\ref{LP:execution_dedup_simple} degrades to constraint~\labelcref{LP:execution}.  We have thus established that $(C, x, y, z)$ as defined above satisfies the LP.

Now we need to further the LP variables into a solution.  So we only need to round the non-zero $X_{d,i}$ into an integral solution and get a schedule of connected components (and their jobs) on $S'$ within time polylog of $T$, and we actually upper bound it with $2T + 2D = O(T)$. We first focus on the load $w(d)$. Viewing each component as a job, the problem of assigning components to machines is exactly minimizing makespan while scheduling general length jobs on related machines. For this problem, we use a classic result due to Lenstra, Shmoys, and Tardos, restated here for completeness~\cite{lenstra1990approximation}. %(it actually works for unrelated machines, which is a generalization of related machines).

\begin{lemma}[Restatement of Theorem 1 from \cite{lenstra1990approximation}]
  Let $P = (p_{i, j}) \in \mathbb{Z}_{+}^{m\times n}, (d_{1}, \dots, d_{m})\in \mathbb{Z}_{+}^{m}$, and $t\in \mathbb{Z}_{+}$.
  Let $J_{i}(t)$ denote the set of jobs that requre no more than $t$ time units on machine $i$, and let $M_{j}(t)$ denote the set of machines that can process job $j$ in no more than $t$ time units. Consider a decision version of scheduling problem where for each machine $i$ there is a deadline $d_{i}$ and where we are further constrained to schedule jobs so that each uses processing time at most $t$; we wish to decide if there is a feasible schedule. If the linear program
  \begin{align*}
    \sum_{i\in M_{j}(t)}x_{ij} &= 1, && j = 1, \dots, n\\
    \sum_{j\in J_{i}(t)}p_{ij}x_{ij} &\leq d_{i}, &&  i = 1, \dots, m\\
    x_{ij} &\geq 0, && \forall j\in J_{i}(t), i = 1, \dots, m
  \end{align*}
  has a feasible solution, then any vertex $\tilde{x}$ of this polytope can be rounded to a feasible solution $\bar{x}$ of the integer program
  \begin{align*}
    \sum_{i\in M_{j}(t)}x_{ij} &= 1, && j = 1, \dots, n\\
    \sum_{j\in J_{i}(t)}p_{ij}x_{ij} &\leq d_{i} + t, &&  i = 1, \dots, m\\
    x_{ij} &\in \{0, 1\}, && \forall j\in J_{i}(t), i = 1, \dots, m
  \end{align*}
\label{lem:Lenstra}
\end{lemma}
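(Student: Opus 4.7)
The plan is to prove this classical Lenstra--Shmoys--Tardos rounding lemma by exploiting the combinatorial structure of vertex solutions to the LP polytope. Fix a vertex $\tilde{x}$ of the polytope and partition the jobs into $J_1 = \{j : \tilde{x}_{ij} = 1 \text{ for some } i\}$, the integrally assigned jobs, and $J_2$, the remaining jobs. Every $j \in J_2$ must have at least two strictly positive coordinates $\tilde{x}_{ij}$, because $\sum_i \tilde{x}_{ij} = 1$ forces nontrivial splitting when no single coordinate is 1.

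The first step is to bound the support size. Since $\tilde{x}$ is a vertex, the number of strictly positive variables equals the number of linearly independent tight non-negativity-free constraints, which is at most the $n$ equality constraints plus the $m$ machine deadline constraints. Jobs in $J_1$ account for exactly $|J_1|$ positive entries, so the positive entries among $J_2$'s variables number at most $n + m - |J_1| = |J_2| + m$.

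The second step is to form the bipartite support graph $H$ on vertex set $J_2 \cup M$ with an edge $(j,i)$ whenever $\tilde{x}_{ij} > 0$ and $j \in J_2$. Allocating tight constraints to their natural components shows that each connected component of $H$ with $a$ jobs and $b$ machines has at most $a + b$ edges; equivalently, $H$ is a pseudoforest (each component contains at most one cycle). Since every $j \in J_2$ has degree at least $2$ in $H$, a degree-counting argument applied to any $S \subseteq J_2$ within one component yields $2|S| \le |E_H(S \cup N_H(S))| \le |S| + |N_H(S)|$, hence $|S| \le |N_H(S)|$. Thus Hall's condition holds and there is a matching $\mu$ saturating $J_2$.

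The third step is to define the integral assignment: $\bar{x}_{ij} = 1$ if $j \in J_1$ and $\tilde{x}_{ij} = 1$, or if $j \in J_2$ and $\mu(j) = i$. Each machine $i$ then carries its original integral load from $J_1$ plus at most one extra job $j^* \in J_2$; since $(i,j^*)$ is in the LP support, $j^* \in J_i(t)$, giving $p_{ij^*} \le t$. Because $\tilde{x}$ satisfied $\sum_j p_{ij}\tilde{x}_{ij} \le d_i$ and the integral part from $J_1$ is dominated by this fractional load, the rounded load on machine $i$ is at most $d_i + t$, establishing the integer program's feasibility. The main obstacle will be the pseudoforest claim: one must carefully localize the tight deadline constraints to the components they touch so that each component's edge count is bounded by its vertex count; once this is in hand, the remainder is bipartite matching via Hall's theorem plus elementary bookkeeping on per-machine loads.
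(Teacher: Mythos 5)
The paper does not actually prove this lemma: it is invoked as a black-box citation of Theorem~1 from the Lenstra--Shmoys--Tardos paper, so there is no in-paper proof to compare against. Your reconstruction is correct and follows the standard LST rounding argument: a vertex of the LP has a support graph in which every connected component is a pseudoforest; the integrally assigned jobs stay put; and the fractional jobs get an integral assignment by matching into the support, with the deadline increasing by at most one job of size at most~$t$ per machine. The one place where you diverge from the most common exposition is in how the matching saturating $J_2$ is produced: you verify Hall's condition via the degree-at-least-2 property and the pseudoforest bound $|E|\le|V|$ on induced subgraphs, whereas the textbook construction roots each tree component at a machine and matches each fractional job to one of its children (with the unique cycle in each component handled by alternating around it). Both routes are sound; the Hall argument is slicker to state, the tree-rooting is more directly algorithmic. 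You correctly flag the only delicate point: the \emph{global} bound of $n+m$ on support size is not enough, and you need the componentwise version (hence a pseudoforest, not merely a graph with few edges), which requires localizing the linear-independence/rank count to the variables and constraints touching a single component. Filling in that localization (perturb within a component that has more edges than vertices, contradicting extremality) would complete a self-contained proof.
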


If we view components as items and set $p_{i,d} = w_{d}/(m_{i}s_{i})$, the constraints of Lemma~\ref{lem:Lenstra} are the same as  constraints~\labelcref{LP:execution_dedup,LP:makespan_Xdi_rounded}. Therefore, we can round them into an integer solution such that
\begin{equation}
  \sum_{d}\frac{X_{d, i}\cdot w(d)}{m_{i}s_{i}} \leq T + \frac{1}{\delta_{d}}\cdot D, \quad \forall i \label{LP:makespan_Xdi_rounded_integer}
\end{equation}
In other words, for machine $i\in S'$, the total number of jobs on it is at most $m_{i}s_{i}(T + 2D)$. Since each DAG has critical path length $T$, using Graham's list scheduling, we can find a schedule for machine $i$ in time $\frac{m_{i}s_{i}(T + 2D)}{m_{i}s_{i}} + T = 2T + 2D$.

We now handle the jobs assigned to machines in $S$. We define the set $\Vlong = \{v: \delay v > T$ and $v$ not assigned to $S'\}$. For each $v \in \Vlong$, we define $W_v = \{u : u \prec v$ or $v \prec u\}$. We then merge sets that have overlapping elements: define $\tilde{W}_v =  \bigcup_{v': W_v \cap W_{v'} \ne \varnothing} W_{v'}$. Clearly each set $\tilde{W}_v$ must be scheduled on a single machine to construct a schedule of length less than $T$. We order machines in $S$ by non-increasing capacity: if $i > j$ then $s_i m_i \ge s_j m_j$. For each machine $i \in S$, we define the set $V_i = \{v: |\tilde{W}_v| > T s_i m_i$ or the length of the longest chain in $\tilde{W}_v$ has length greater than $T s_i\}$. 

\begin{lemma}
    In any schedule with length less than $T$, jobs in $V_i$ are scheduled on machines $1$ through $i-1$.%
\label{lem:nodup-opt}
\end{lemma}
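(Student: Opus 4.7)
The plan is to prove the lemma by contradiction: assume some $v \in V_i$ has $\tilde{W}_v$ scheduled on a machine $j$ with $j \ge i$ and derive a violation of the length-$T$ bound.

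First, I would establish the structural property that, in any schedule of length less than $T$, the entire cluster $\tilde{W}_v$ must live on a single machine. This uses the symmetric-delay assumption. For any $v' \in \Vlong$ we have $\delay{v'} > T$, and by symmetry $\delayin{v'} = \delayout{v'} > T$, so any predecessor of $v'$ scheduled on a different machine would force $v'$ to wait more than $T$ for the result, and any successor on a different machine would analogously need to wait more than $T$ for $v'$. Hence every job in $W_{v'}$ must be co-located with $v'$. Taking the transitive closure implicit in the definition $\tilde{W}_v = \bigcup_{v': W_v \cap W_{v'} \ne \varnothing} W_{v'}$, the entire cluster $\tilde{W}_v$ must therefore execute on one machine (this is exactly the ``clearly'' claim preceding the lemma, which I would verify carefully here).

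Next, I would lower-bound the parameters of the hosting machine. Suppose $\tilde{W}_v$ is scheduled on machine $j$. Because each of its $m_j$ processors runs at speed $s_j$ and every job is unit-size, the total work performed by $j$ in time $T$ is at most $T \cdot s_j m_j$; this yields $|\tilde{W}_v| \le T s_j m_j$. Likewise, any precedence chain of length $c$ inside $\tilde{W}_v$ must run serially on a single processor of $j$ and therefore consumes time at least $c/s_j$, giving $c \le T s_j$ for the longest such chain.

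Finally, I would combine these two inequalities with the defining conditions of $V_i$. Since $v \in V_i$, either $|\tilde{W}_v| > T s_i m_i$ or the longest chain $c$ in $\tilde{W}_v$ satisfies $c > T s_i$. If $j \ge i$, then by the ordering of $S$ we have $s_j m_j \le s_i m_i$, which combined with the appropriate dominance of $s_j \le s_i$ (see below) contradicts at least one of the bounds from the previous step. Therefore $j < i$, i.e.\ $\tilde{W}_v$ is scheduled on one of machines $1, \ldots, i-1$.

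The main obstacle I anticipate is reconciling a single one-dimensional ordering of the machines (by non-increasing capacity $s_j m_j$) with the two-dimensional requirement that a hosting machine dominate both in capacity and in per-processor speed. In this section's grouped setting, every machine of $S$ inherits its $(s,m,\rho)$-parameters from one of $O(\polylog n)$ machine groups, and the argument can be made clean either by refining the ordering to be lexicographic over $(s_j m_j, s_j)$ or, equivalently, by phrasing the lemma at the level of groups rather than individual machines. Aside from this bookkeeping, the remainder of the argument is a direct comparison of the two bounds from the capacity/chain analysis against the two conditions in the definition of $V_i$.
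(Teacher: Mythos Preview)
Your approach mirrors the paper's: argue by contradiction that if $\tilde W_v$ sits on some machine $j\ge i$, then the time spent on $j$ already exceeds $T$. The paper's proof is terser---it does not re-derive the single-machine confinement of $\tilde W_v$ (that is asserted in the text just before the lemma)---and simply writes the lower bound $\max\{|W_v|/(s_j m_j),\ (\text{longest chain in }W_v)/s_i\}>T$ and concludes.

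The obstacle you flag is real, and in fact it is present in the paper's own proof as well. The machines in $S$ are ordered only by the product $s_j m_j$, so from $j\ge i$ one obtains $s_j m_j\le s_i m_i$; this handles the load case $|\tilde W_v|>T s_i m_i$ cleanly, but the chain case $c>T s_i$ needs $s_j\le s_i$ to force $c/s_j>T$, and that does not follow from the capacity ordering alone. (The paper's displayed bound divides the chain length by $s_i$ rather than $s_j$, which makes the inequality formally true but is not a lower bound on the time taken on machine $j$.) Your proposed repair---refining the order to be lexicographic in $(s_j m_j,\, s_j)$, or equivalently phrasing the argument at the level of machine groups---is exactly what is needed and patches both your argument and the paper's.

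One minor quibble: in your chain bound you say the chain ``must run serially on a single processor of $j$.'' It need not stay on one processor; precedence alone forces it to be serial in time, which already yields the $c/s_j$ bound.
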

\begin{proof}
    Suppose some set $W_v \in V_i$ is placed on a machine $j > i$. By our ordering, $s_j m_j < s_i m_i$, so $W_v$ takes time at least $\max\{|W_v|/s_j m_j, ($length of longest chain in $W_v)/s_i\} > T$. This entails that the schedule has length greater than $T$.
\end{proof}

The algorithm then proceeds as follows. For each machine $i$, iterate through sets $V_{i-1}$ through $V_m$. For each set $V_{j}$, keep placing sets $W_v \in V_j$ on $i$ until either the number of jobs on $i$ exceeds $T s_i m_i$ or $V_j$ is empty. If, after all iterating through all machines, there are any sets $\tilde{W}_v$ not placed, then indicate that $\opt > T$.

\begin{lemma}
    If $\opt \le T$ then the algorithm places all sets $W_v$. 
\label{lem:nodup-Vi}
\end{lemma}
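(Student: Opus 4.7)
The plan is to prove, by induction on $i$, the invariant that after the algorithm finishes processing machine $i$, every set $\tilde{W}_v$ with $\iota(v) \leq i$ has been placed on some machine $j \in \{1,\ldots,i\}$, where $\iota(v) := \max\{j : v \notin V_j\}$ is the largest-index machine of $S$ on which $\tilde{W}_v$ actually fits. Because machines of $S$ are ordered by non-increasing capacity, the thresholds in the definition of $V_i$ tighten as $i$ grows, so $V_1 \subseteq V_2 \subseteq \cdots \subseteq V_{|S|}$, making $\iota(v)$ well-defined; applying the invariant at $i = |S|$ yields the lemma. To anchor the argument, I would fix an optimal no-duplication schedule $\sigma^*$ of makespan at most $T$ and invoke Lemma~\ref{lem:nodup-opt} to conclude that $\tilde{W}_v$ sits in $\sigma^*$ on a machine of index at most $\iota(v)$ and that every machine $j$ carries at most $T s_j m_j$ jobs in $\sigma^*$. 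Together these give the key load bound
\[
\sum_{v \,:\, \iota(v) \leq i} |\tilde{W}_v| \;\leq\; T \sum_{j=1}^{i} s_j m_j \qquad \text{for every } i.
\]

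For the base case $i = 0$, I would observe that $\iota(v) = 0$ would force $\tilde{W}_v \in V_1$, which is incompatible with $\opt \leq T$ via Lemma~\ref{lem:nodup-opt}, so no such $v$ exists. For the inductive step, the inductive hypothesis implies that when machine $i$'s phase begins, the dynamic (unplaced) members of $V_{i-1}$ and $V_i$ are empty, so the first class the algorithm's scan $V_{i-1}, V_i, V_{i+1}, \ldots, V_m$ actually interacts with is $V_{i+1}$, whose remaining members are exactly the $\iota(v) = i$ sets; the scan then continues through $V_{i+2}, V_{i+3}, \ldots$, effectively implementing a greedy that processes candidates in increasing order of $\iota$. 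I would then argue by contradiction: if some $\tilde{W}_v$ with $\iota(v) = i$ is still unplaced after machine $i$'s phase, then for every $j \leq i$ the ``load exceeds $T s_j m_j$'' condition must have triggered during machine $j$'s phase --- otherwise that phase would have exhausted $V_{i+1}$ and placed $v$ --- so machine $j$ ends with load strictly greater than $T s_j m_j$, and every job placed on machine $j$ belongs to a set with $\iota \leq i$, because the scan halted no later than class $V_{i+1}$. Summing these machine loads for $j = 1, \ldots, i$ and adding the still-unplaced $|\tilde{W}_v| > 0$ forces the total load of $\iota \leq i$ sets to strictly exceed $T \sum_{j \leq i} s_j m_j$, contradicting the load bound above.

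The hard part will be pinning down the claim that, on each machine $j \leq i$ whose phase has halted with $v$ unplaced, every placed job belongs to a set with $\iota \leq i$. This follows from the order in which the scan visits classes: the scan reaches a class containing $\iota > i$ sets only after fully processing $V_{i+1}$, and processing $V_{i+1}$ to completion on machine $j$ would necessarily place the unplaced $v$, a contradiction. Once this structural claim is secured, the load-counting argument above delivers the needed contradiction cleanly, closing the induction and establishing the lemma.
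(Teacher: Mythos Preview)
Your proposal is correct and takes essentially the same approach as the paper: both derive a contradiction by showing that an unplaced set forces every earlier machine to carry load exceeding $Ts_jm_j$, all of it drawn from sets that Lemma~\ref{lem:nodup-opt} confines to those same machines, which violates the load budget available to $\opt$. The paper's proof is far terser (three sentences, no induction), whereas your inductive invariant and the explicit ``hard part'' claim spell out the step the paper leaves implicit---namely that no set with $\iota > i$ can have been placed on machines $1,\ldots,i$ when the scan halted.
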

\begin{proof}
    Suppose, for some $v,i$ there is some $W_v \in V_i$ not placed by the algorithm. This can happen only if all machines $j =1$ through $i-1$ have load greater than $T s_j m_j$. By Lemma~\ref{lem:nodup-opt}, $\opt > T$. 
\end{proof}

\begin{lemma}
    The maximum number of jobs placed on any machine $i$ is at most $2T s_i m_i$. 
\label{lem:nodup-load}
\end{lemma}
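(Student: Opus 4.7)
The plan is to prove Lemma~\ref{lem:nodup-load} via a short two-part accounting argument that just tracks the number of jobs the algorithm places on machine $i$.

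First, I would appeal to the algorithm's explicit stopping rule: placements on machine $i$ continue only while the cumulative number of jobs already on $i$ does not exceed $T s_i m_i$. Hence, immediately before the final set $\tilde{W}_v$ is placed on $i$ (the one whose addition first pushes the load past the threshold), the load is at most $T s_i m_i$. This half of the argument is essentially by definition of the inner loop in the algorithm preceding Lemma~\ref{lem:nodup-Vi}.

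Second, I would bound the size of this final set. The algorithm only places $\tilde{W}_v$ on machine $i$ when $v \notin V_i$, which by the definition of $V_i$ is equivalent to $|\tilde{W}_v| \le T s_i m_i$ (together with the longest chain in $\tilde{W}_v$ having length at most $T s_i$). Consequently, the final placement increases the load on $i$ by at most $T s_i m_i$, yielding a total of at most $2 T s_i m_i$ jobs on $i$, as claimed.

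The step I expect to require the most care is the second one: making precise that the iteration range ``$V_{i-1}$ through $V_m$'' in the algorithm is compatible with the invariant $v \notin V_i$ for every set actually placed on machine $i$. Once the ordering of machines by capacity is fixed and the nesting among the $V_j$'s is read off from the definition, this reduces to the observation that the algorithm never selects $v \in V_i$ for placement on $i$, since such a $\tilde{W}_v$ cannot be scheduled on $i$ within time $T$, and hence would have been routed to a larger-capacity machine by an earlier iteration of the outer loop.
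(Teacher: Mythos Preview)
Your proposal is correct and follows exactly the paper's approach: the paper's proof is the two-sentence version of your argument, asserting that each $\tilde{W}_v$ placed on $i$ has size at most $T s_i m_i$ and that the algorithm stops once the load exceeds $T s_i m_i$, so the total is at most $2T s_i m_i$. Your third paragraph, justifying the size bound via $v \notin V_i$, is in fact more careful than the paper, which simply asserts the bound without argument.
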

\begin{proof}
    Since each $\tilde{W}_v$ placed on $i$ has size less than $T s_i m_i$, and the algorithm stops placing jobs placing jobs on $i$ as soon as its number of jobs exceeds $T s_i m_i$, the total number of jobs on $i$ does not exceed $2 T s_i m_i$.
\end{proof}

Let $U_i$ be the set of all jobs placed on machine $i \in S$. Let $U$ be the set of remaining jobs (not yet placed on either $S$ or $S'$). Partition $U_i$ into $U_{i,1}$ and $U_{i,2}$ where $U_{i,1} = \{u \in U_i : \exists v \in U_i \cap \Vlong,\ u = v$ or $u \prec v\}$. The algorithm then proceeds as follows.  First, for each $i$, schedule $U_{i,1}$ on $i$ in time $1$ to $t_1$. Second, use the algorithm from Section~\ref{sec:mdps-no-dup} to schedule $U$ on $S$ in time $t_1$ to $t_2$. Finally, for each $i$, schedule $U_{i,2}$ on $i$ in time $t_2$ to $t_3$. 

\begin{lemma}
    If $\opt \le T$, the constructed schedule has length $\polylog(n) \cdot T$. 
\end{lemma}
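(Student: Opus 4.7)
The plan is to show that the three sequential phases running on the low-delay machines $S$ each take time $\polylog(n) \cdot T$ and that the concurrent schedule on the high-delay machines $S'$ fits inside the same envelope; summing then gives the claim.

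For phases 1 and 3, I would run Graham's list scheduling on each $i \in S$ applied to $U_{i,1}$ and $U_{i,2}$ respectively. Lemma~\ref{lem:nodup-load} gives $|U_i| \le 2 T s_i m_i$, so both $|U_{i,1}|$ and $|U_{i,2}|$ are bounded by $2 T s_i m_i$. The definition of $V_i$ ensures that every cluster $\tilde{W}_v$ placed on $i$ has longest chain at most $T s_i$. A short case analysis on pairs $u_j \prec u_{j+1}$ of consecutive chain elements (splitting on whether each is a high-delay job or a predecessor/successor of one) shows that $U_{i,1}$ is predecessor-closed and $U_{i,2}$ is successor-closed in the full DAG, and that any chain inside $U_{i,1}$ or $U_{i,2}$ lies entirely within a single cluster on $i$. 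Hence the longest chain in either set is at most $T s_i$, and Graham's bound yields makespan at most $|U_{i,\cdot}|/(s_i m_i) + T s_i/s_i \le 3T$ on each machine. So phases 1 and 3 each take at most $3T$.

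For phase 2, every job in $U$ and every machine in $S$ has delay at most $T$: higher-delay jobs are either in $\Vlong$ (and thus in some $U_{i,\cdot}$) or in connected components assigned to $S'$, and $S$ is by construction the set of machines with delay at most $T$. Restricting a hypothetical schedule of length $\opt \le T$ to $U$ and $S$ shows the induced subproblem has optimum at most $T$, so applying Lemma~\ref{lem:no-dup} produces a no-duplication schedule of length $\polylog(n)(T + T) = \polylog(n) \cdot T$. The concurrent schedule on $S'$ has length $2T + 2D \le 4T = O(T)$ by the Lenstra--Shmoys--Tardos rounding bound derived just before the lemma. Adding the three sequential phases on $S$ gives total makespan $3T + \polylog(n) \cdot T + 3T = \polylog(n) \cdot T$, within which the $S'$ schedule fits.

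The main obstacle is validating cross-phase precedence and communication. Within $U_{i,1}$ and $U_{i,2}$ the closure properties above suffice; the remaining cases involve a predecessor scheduled in one phase on some machine $j$ feeding a successor scheduled in another phase on some machine $i$, with pairwise delay $\rho_j + \rho_u + \rho_i + \rho_v \le 4T$ (each summand being at most $T$ by the definitions of $S$, $U$, and $\Vlong$). For the phase-1 to phase-2 boundary this delay is absorbed by the initial buffer $\groupdelay k + \groupdelay \ell$ that Algorithm~\ref{alg:scheduler_jm} inserts at the start of each sub-schedule inside the Section~\ref{sec:mdps-no-dup} routine; for the phase-2 to phase-3 boundary I would pad an explicit $O(T)$ idle buffer, which adds only a constant multiple of $T$ and preserves the polylogarithmic bound. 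Finally, the LP constraint~\eqref{LP:execution_dedup_simple} guarantees that each connected component of $G'$ is entirely on $S$ or entirely on $S'$, so no precedence edge crosses the partition and the concurrent $S'$ schedule needs no communication with the phases running on $S$.
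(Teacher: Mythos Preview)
Your overall structure matches the paper's: bound phases~1 and~3 by $3T$ via Graham's list scheduling plus the load and chain bounds, bound the $S'$ schedule by $O(T)$ via the Lenstra--Shmoys--Tardos rounding, and bound phase~2 by appealing to Lemma~\ref{lem:no-dup}. Your treatment of the closure properties of $U_{i,1}$, $U_{i,2}$ and of the cross-phase precedence/communication constraints is in fact more careful than the paper's terse one-line invocation.

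There is, however, a genuine gap in your phase-2 argument. You claim that ``restricting a hypothetical schedule of length $\opt\le T$ to $U$ and $S$ shows the induced subproblem has optimum at most $T$.'' This restriction need not be a valid schedule of all of $U$. The set $U$ consists of jobs whose connected component the \emph{LP rounding} assigned to $S$ (i.e.\ $\delta_d<1/2$) and which are not caught by the $\Vlong$ closure. But $\opt$ may disagree with the LP and place such a component entirely on some machine $j\in S'$ (this is consistent with $\opt\le T$ since symmetric delays merely force the whole component onto $j$). Those jobs then lie in $U$ yet are scheduled on $S'$ in $\opt$, so your restriction drops them, and there is no a~priori reason a machine in $S$ can absorb that component within time $T$.

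The paper avoids this by not bounding $\opt_{U,S}$ at all. What it actually uses from Lemma~\ref{lem:no-dup} is the \emph{algorithm}, which operates on an LP solution: after the scaling step in Section~\ref{sec:mdps-symmetric-no-dup} one already holds a feasible solution to (the $S$-restricted) \LP{2} with objective at most $2D\le 2T$ for every job assigned to $S$, hence in particular for $U$. Running the pipeline (rounding $+$ Algorithm~\ref{alg:scheduler} $+$ de-duplication) on this solution directly yields makespan $\polylog(n)\cdot T$, since $\delay{\max}$ on $U$ and $S$ is at most $T$. Replacing your restriction sentence with this LP-based justification closes the gap and leaves the rest of your argument intact.
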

\begin{proof}
    By construction of $U_{i}$, $U_{i,1}$ is downward closed (i.e.\ if $v \in U_{i,1}$ and $u \prec v$, then $u \in U_{i,1}$) and $U_{i,2}$ is upward closed (i.e.\ if $v \in U_{i,2}$ and $v \prec u$ then $U \in U_{i,2}$). By construction of $V_i$, Lemma~\ref{lem:nodup-load}, and Graham's list scheduling theorem \cite{graham:schedule}, we can schedule $U_{i_1}$ and $U_{i,2}$ on $i$ in time $3T$. The lemma then follows from Lemma~\ref{lem:Lenstra} and Lemma~\ref{lem:no-dup}.
\end{proof}

This completes the proof of Lemma~\ref{lem:no-dup-symmetric} and the second part of Theorem~\ref{thm:no-dup}.

%%% Local Variables:
%%% mode: latex
%%% TeX-master: "main"
%%% End:

\section{\umps\ Reduces to No-Duplication Scheduling under Job-Machine Delays}
\label{sec:umps_reduction}
\paragraph{Job-Machine Delay Scheduling.}
We are given a set of precedence-ordered jobs and a set of machines. For each job-machine pair $(v,i)$ there is an associated delay $\delay{v,i}$. In a schedule, if $v$ is executed on machine $i$ at time $t$, then for any $u \prec v$, either $u$ is executed on $i$ before time $t$ or $u$ is executed on some machine $j$ before time $t - \delay{v,i}$. The objective is to construct a no-duplication schedule that minimizes makespan. 

\paragraph{\umps\ problem statement.}
We are given a set $V$ of $n$ unit-size, precedence ordered jobs, and a set $M$ of $m$ identical machines $M$. $V$ is partitioned into $m$ nonoverlapping sets, $V_i$ for each $i \in M$. In a valid schedule, if job $u$ precedes job $v$ then $u$ must be executed before $v$. Also, for each job $v$ and each machine $i$, if $v \in V_i$ then $v$ must be executed on $i$. The objective is to construct a no-duplication schedule that minimizes makespan.

\begin{theorem}
    If there is a polynomial time algorithm that computes an $\alpha$-approximation for job-machine delays, then there is a polynomial time algorithm that computes a $6\alpha$-approximation for \umps.
\label{thm:umps}
\end{theorem}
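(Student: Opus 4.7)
The plan is to construct a polynomial-time reduction from a \umps\ instance $I = (V, M, \{V_i\}_{i \in M})$ to a job-machine delay instance $I'$ that preserves approximation quality up to a constant factor of $6$. The key observation is that the parameter $\rho_{v,i}$ in the job-machine delay model can be used to encode the ``designated machine'' of each \umps\ job: setting $\rho_{v,i}$ to $0$ for $v$'s designated machine $i$ and to a prohibitively large value $D$ on all other machines will penalize wrong placements, \emph{provided} $v$ has at least one predecessor on a different machine to actually trigger the delay.

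The construction of $I'$ will retain the original jobs and their precedence relations, and augment them with a small collection of auxiliary jobs. For each $v \in V_i$, I set $\rho_{v,i}=0$ and $\rho_{v,j}=D$ for $j\neq i$, with $D$ chosen as a large polynomial in $n$ that exceeds any conceivable makespan of an $\alpha$-approximate schedule on $I'$ (for instance $D > \alpha n^{2}$). The auxiliary jobs, with similarly structured delay profiles and appended precedence relations, will ensure that in any schedule of makespan less than $D$, every original job has a predecessor on a different machine---so that the penalty $\rho_{v,j}=D$ actually activates for any wrong placement of $v$.

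The reduction then proceeds in three main steps. First, I would show $\opt(I') = O(\opt(I))$: any valid \umps\ schedule of makespan $T$ lifts to a schedule for $I'$ of makespan $O(T)$, since correctly placed jobs incur zero delay and the auxiliary jobs add only a constant-factor overhead in total length. Second, running the assumed $\alpha$-approximation on $I'$ yields a schedule $\sigma$ of makespan $M \le O(\alpha T)$; by the choice $D > M$, no job in $\sigma$ can incur the large delay, and combined with the forcing structure this forces every original job onto its designated machine. Third, I recover a valid \umps\ schedule $\sigma'$ by deleting the auxiliary jobs and keeping each original job at its time and machine from $\sigma$; precedence and unique-machine constraints are immediately inherited, and the makespan is at most $M$, yielding the desired $6\alpha \cdot \opt(I)$ bound after absorbing lower-order overhead into the constant.

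The hard part will be designing the auxiliary structure so that the forcing works even for jobs that are \emph{source} nodes of the \umps\ precedence DAG. A single global dummy predecessor does not suffice: an adversarial schedule can co-locate the dummy with a source job on any single machine and escape the delay constraint entirely. I expect the resolution to require multiple layered chains or carefully structured sets of auxiliary jobs whose load and precedence relations jointly guarantee that at least one predecessor of every original job lies on a machine different from any candidate wrong placement for that job. Working out this combinatorial design---and verifying that it inflates $\opt(I')$ by only a constant factor so that the final ratio is $6\alpha$---will be the main technical content of the proof.
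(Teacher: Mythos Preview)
Your high-level reduction idea---set $\rho_{v,i}=0$ on $v$'s designated machine and a prohibitive value elsewhere---is exactly right, and matches the paper. The gap is in your plan to \emph{force} every original job onto its designated machine via auxiliary structures. This cannot work in the job-machine delay model with only constant overhead: the delay $\rho_{v,j}$ only bites when $v$ has a predecessor on a machine other than $j$, but any downward-closed set of jobs can always be scheduled entirely on a single machine $j$ (paying only load, not delay). So no matter how you layer auxiliary predecessors, the adversary can transplant $v$ together with its whole ancestor set onto $j$; the only obstruction is load, and making that load prohibitive would blow up $\opt(I')$ and destroy the constant factor. Your own diagnosis of the ``hard part'' is correct, but the proposed resolution direction (layered chains, etc.) is a dead end.

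The paper sidesteps forcing entirely. It adds just two auxiliary jobs per machine $i$: a source $u_i$ preceding all of $V_i$ and a sink $v_i$ succeeding all of $V_i$, both with zero delay on $i$ and delay $n$ elsewhere. This does \emph{not} force $V_i$ onto $i$; rather, a short case analysis on where $u_i,v_i$ land shows that in any schedule of makespan $<n$, each $V_i$ splits into a part already on $i$ and a part sitting on one other machine $\mu_i$ \emph{together with all its predecessors}. Such a co-located downward-closed chunk can be transplanted to machine $i$ at the same timestamps without violating anything; the part already on $i$ is then shifted by $C$ to avoid collisions. This yields a valid \umps\ schedule of length $2C$, and since $\opt(I')\le\opt(I)+2$, the final bound is $2\alpha(\opt(I)+2)\le 6\alpha\,\opt(I)$. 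The key idea you are missing is: do not try to prevent misplacement---exploit the fact that misplaced jobs must be co-located with their entire ancestor set, which makes them trivially relocatable.
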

\begin{proof}
    Suppose we are given an instance $I$ of \umps. We construct an instance $I'$ of job-machine delays as follows. The set of machines is the same as $I$. For each machine $i$ we introduce two new jobs $u_i$ and $v_i$. All other jobs are identical to $I$. For all $v \in V_i$, we introduce precedence constraints such that $u_i \prec v \prec v_i$. Other precedence relations are the same as $I$. For all $i$ and $v \in V_i \cup \{u_i, v_i\}$, we set $\delay{v,i} = 0$ and $\delay{v,j} = n$ if $j \ne i$. 
    
    Let $\opt(I)$ and $\opt(I')$ be the optimal makespans of the $I$ and $I'$, respectively. Let $\sigma'$ be a job-machine delays schedule with makespan $C$.  If $C \ge n$, we can trivially construct a length $n$ \umps\ schedule of $I$ by list scheduling any available job on its assigned machine at each step. So we suppose $C < n$.  

    \begin{thmclaim}
        For each $i$, $V_i$ can be partitioned into two sets $V_{i,1}$ and $V_{i,2}$ with the following properties:
        \begin{enumerate*}[label=(\arabic*)]
            \item there is a machine $\mu_i$ such that $\sigma'$ executes every job in $V_{i,1}$ on $\mu_i$ with all its predecessors, and 
            \item $\sigma'$ executes every job in $V_{i,2}$ on $i$.
        \end{enumerate*} 
    \label{clm:Vi-partition}
    \end{thmclaim}
    \begin{proof}
        Fix machine $i$. We prove the claim by cases depending on where $u_i$ and $v_i$ are executed. 
        
        Suppose $u_i$ and $v_i$ are both executed on $i$. If any job in $V_i$ is executed on a machine other than $i$, then the schedule requires a communication of length $n$ which contradicts our assumption. So all jobs in $V_i$ are executed on $i$. In this case, $V_{i,1} = \varnothing$, $V_{i,2} = V_i$, and $\mu_i = i$. 

        The case where $u_i$ is executed on $i$ and $v_i$ is executed on $j \ne i$ is not possible because it requires a communication length $n$, which contradicts our assumption.

        Suppose $u_i$ is executed on $j \ne i$ and $v_i$ is executed on $i$. If any job in $V_i$ is executed on any machine $i' \not\in\{i, j\}$, then the schedule must have a communication of length $n$, which contradicts our assumption. So all jobs are executed on either machine $i$ or $j$. Consider a job $u$ that precedes a job $v \in V_i$ where $v$ is executed on $j$. If $u$ is not executed on $j$, then there must be a length $n$ communication, which contradicts our assumption. Letting $V_{i,1}$ be the set of jobs in $V_i$ executed on $i$, $V_{i,2}$ the set of jobs in $V_i$ executed on $j$, and $\mu_i = j$, we have that all jobs preceding a job in $V_{i,2}$ are also executed on $\mu_i$. 

        Suppose $u_i$ and $v_i$ are both executed on some machine $j \ne i$. If any job in $V_i$ is executed on any machine other than $j$, the schedule must have a communication of length $n$, contradicting to our assumption. Similarly for any jobs that precede jobs in $V_i$. So we may conclude that each job in $V_i$ along with its predacessors are all executed on $j$.  In this case, we set $\mu_i = j$, $V_{i,1} = V$, and $V_{i,2} = \varnothing$.  
    \end{proof}

    We now construct a schedule $\sigma$ of problem instance $I$ of \umps. By Claim~\ref{clm:Vi-partition}, we partition each $V_i$ into two sets, $V_{i,1}$ and $V_{i,2}$. For all jobs $v \in V_{i,1}$, if $\sigma'(v) = \{(j,t)\}$ then we set $\sigma(v) = \{(i,t)\}$.  Next, for each $i$ and all jobs $v \in V_{i,2}$, if $\sigma'(v) = \{(i,t)\}$ then we set $\sigma(v) = \{(i, C + t + 1)\}$. Informally, for each $i$, $\sigma$ schedules all jobs in $V_{i,1}$ on $i$ in time 1 to $C$, and all jobs in $V_{i,2}$ on $i$ in time $C + 1$ to $2C$. By Claim~\ref{clm:Vi-partition}, this is a valid schedule of $I$. 

    \begin{thmclaim}
        $\opt(I') \le \opt(I) + 2$.
    \end{thmclaim}
    \begin{proof}
        Suppose there is a schedule $\sigma_I$ of $I$ with makespan $D$. We define a schedule $\sigma_{I'}$ of $I'$ with makespan $D+2$. Let $\sigma_{I'}$ execute each $u_i$ on machine $i$ at time 1, then run schedule $\sigma_I$ from time 2 to $D+1$, and then execute each $v_i$ on $i$ at time $D+2$.  The construction entails that, for all $i$, $\sigma_{I'}$ schedules each job $v \in V_i \cup \{u_i, v_i\}$ on machine $i$. By definition of the delays in $I'$, there are no communication delays incurred in $\sigma_{I'}$. Therefore, $\sigma_{I'}$ is a valid schedule of $I'$. 
    \end{proof}

    Therefore, the makespan of $\sigma$ is $2\alpha\cdot\opt(I') \le 2\alpha\cdot (\opt(I) + 2) \le 6 \alpha \cdot \opt(I)$.
\end{proof}

\section{Reduction from Out/In-Delays to In-Delays}
\label{sec:out-in}
In this section, we show that we can achieve an approximation for machine delays and job precedence delays if we are given an algorithm to solve the problem when all out-delays are 0. Let $\I$ be the set of problems with machine delays and job precedence delays. Let $\I'$ be the subset of $\I$ for which all machine out-delays, and all job out-delays, are 0. We show that, given an algorithm to approximate any instance in $\I'$, we can approximate any instance in $\I$. For a given instance $I$, we define $\opt(I)$ to be the optimal makespan of $I$.  %Also, for a given instance $I$ with delays given by $\delay{}$, let $\delay{\max} = \max\{ \max_i \{\delayout i, \delayin i\}, \max_v \{\delayout v, \delayin v\}\}$.

\paragraph{Model.}
Most general model: machines have speed, size, and in/out-delays. Jobs have in/out-delays. For arbitrary $a,b$, we define $\groupdelay{a,b} = 2^a + 2^b$ and define an ordering on pairs: $(a,b) < (a',b')$ if $\groupdelay{a,b} \le \groupdelay{a',b'}$, breaking ties arbitrarily. 

\begin{lemma}
    There exists an algorithm which, on arbitrary input $I \in \I$ with delays given by $\delay{}$, produces a schedule with makespan at most $\alpha(\log \delay{\max})^4 \opt(I) + 16 \alpha \beta \delay{\max} (\log \delay{\max})^3$ if there exists an algorithm which, on arbitrary input $I' \in \I'$, produces a schedule with makespan at most $\alpha\cdot\opt(I') + \beta$.
\label{lem:reduction}
\end{lemma}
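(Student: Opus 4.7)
I would define a companion instance $I'\in\I'$ on the same jobs and machines as $I$ by setting $\delayin{i}^{I'} := \delayin{i} + \delayout{i}$, $\delayin{v}^{I'} := \delayin{v} + \delayout{v}$, and all out-delays to $0$. The algorithm then invokes the hypothesized $\I'$-approximation on $I'$ to obtain a schedule of length at most $\alpha\opt(I')+\beta$, and transforms this schedule back into a feasible $I$-schedule. Two transformations are needed: a forward direction bounding $\opt(I')$ in terms of $\opt(I)$, and a backward direction turning an $I'$-schedule into an $I$-schedule. Both proceed by partitioning machines into $K=O(\log\delaymax)$ classes by rounded $\delayout{i}$ and jobs into $L=O(\log\delaymax)$ classes by rounded $\delayout{v}$, and reorganizing the schedule into one phase per class pair $(a,b)$ of length $O(T+2^a+2^b)$.

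\textbf{Forward transformation ($I\to I'$).} Given an $I$-schedule $\sigma$ of length $T$, I plan to build an $I'$-schedule by concatenating $KL$ phases, one per pair $(a,b)$. The $(a,b)$-phase is the restriction of $\sigma$ to jobs in class $b$ on machines in class $a$, with every execution time shifted earlier by $\delayout{i}+\delayout{v}$; a buffer of $\Theta(2^a+2^b)$ prepended to the phase absorbs the mismatch between the $I$-delay $\delayin{i}+\delayout{j}+\delayin{v}+\delayout{u}$ and the new $I'$-delay $\delayin{i}+\delayout{i}+\delayin{v}+\delayout{v}$. Since within a single class pair the out-delays agree to within a factor of two, the shift preserves precedence gaps up to this constant. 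Concatenating gives $\opt(I')=O((\log\delaymax)^2)(\opt(I)+\delaymax)$.

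\textbf{Backward transformation ($I'\to I$).} Given an $I'$-schedule $\sigma'$ of length $T'$, I plan to shift each job $v$ on machine $i$ later by $\delayout{i}+\delayout{v}$, reinstating the out-delay contribution absent in $I'$. As in the forward direction, I first expand $\sigma'$ into $O((\log\delaymax)^2)$ phases with class-appropriate padding, so that every shift stays inside its phase and cross-phase communication is paid for by the buffers. This produces an $I$-schedule of length $O((\log\delaymax)^2)(T'+\delaymax)$. Composing with $T'\le\alpha\opt(I')+\beta$ and the forward bound on $\opt(I')$ yields the claimed $\alpha(\log\delaymax)^4\opt(I)+O(\alpha\beta\delaymax(\log\delaymax)^3)$ after collecting constants.

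\textbf{Main obstacle.} The chief difficulty is the ordering and sizing of the phases so that, after shifting, no job is sent backward into a preceding phase in a way that violates precedence or collides with an earlier job. The DAG ordering is uncorrelated with the out-delay class of a job, so a predecessor of a job in class $(a,b)$ may lie in a class of completely different delay; the resolution is to sequence the phases (e.g.\ lexicographically by $(a,b)$) and to charge each precedence-induced jump to the buffer of the affected phase, whose size is $\Theta(2^a+2^b)$ rather than $\delaymax$. Controlling these buffers is precisely what drives the $O(1)$ factor for the machine-delay component and the $O(\log^2\delaymax)$ factor for the job-delay component mentioned in the technical overview, which combine across the two directions to give the $(\log\delaymax)^4$ multiplicative blow-up in the statement.
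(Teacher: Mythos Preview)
Your high-level plan matches the paper's: define $I'$ by folding out-delays into in-delays, show $\opt(I')\le\polylog\cdot(\opt(I)+\delaymax)$ by a phase-expansion argument, apply the $\I'$-solver, and reverse the transformation. Two points of divergence are worth noting.

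First, a simplification you are missing: the paper does \emph{not} partition machines into out-delay classes. Machine out-delays are absorbed by a single global shift $t\mapsto t+\delaymax\pm\delayout i$, which is collision-free because every job on machine $i$ shifts by the same amount. The $(\log\delaymax)^2$ factor per direction comes instead from partitioning \emph{jobs} by both their in-delay and their out-delay (the two indices $\ell,\ell'$ in the paper's sets $S_{\ell,\ell',d}$). Your machine partition is harmless but unnecessary, and your job partition by out-delay alone would not match the paper's bookkeeping.

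Second, and more importantly, your phase structure has a gap. You propose one phase per class pair $(a,b)$, containing \emph{all} class-$(a,b)$ executions from $\sigma$, and then concatenate these phases in some fixed class-based order (you suggest lexicographic). But precedence is oblivious to delay class: a predecessor $u$ of a class-$(a,b)$ job $v$ can sit in any class $(a',b')$, and if that class comes later in your ordering, $v$ appears before $u$ in the concatenated schedule. No fixed ordering of the $KL$ class-phases can respect precedence for all DAGs. The paper's fix is to partition along \emph{time} as well: for each $(\ell,\ell')$ it slices $\tilde\sigma$ into windows of width $\groupdelay{\ell,\ell'}=2^\ell+2^{\ell'}$ (the index $d$), and orders the resulting blocks primarily by their start time $d\,\groupdelay{\ell,\ell'}$, using delay-class only as a tie-breaker among blocks with the same start time. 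Each block $S_{\ell,\ell',d}$ is augmented with a predecessor set $R_{\ell,\ell',d}$ containing same-window predecessors of arbitrary class, so that every block is locally precedence-closed. This time-indexed interleaving is exactly what your ``main obstacle'' paragraph gestures at but does not resolve; without it, the concatenation is not a valid schedule.
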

\begin{proof}
    We suppose we are given an instance $I \in \I$ with delays given by $\delay{}$. We construct an instance $I' \in \I'$ with delays given by $\ddelay{}$. $I'$ is identical to $I$, except that, for all $v$, $\ddelayout{v} = 0$ and $\ddelayin{v} = \delayin v + \delayout v$. Since out-delays are 0 in $I'$, we refer to $\ddelayin v$ and $\ddelayout i$ as $\ddelay v$ and $\ddelay i$, respectively. We then apply the supposed algorithm to solve $I'$ to produce a schedule $\sigma'$ with makespan at most $\alpha \opt(I') + \beta$.   

    To convert $\sigma'$ into a schedule $\sigma$ of $I$, we partition jobs into groups based on their delay: job $v$ is in group $V_{k,k'}$ if $2^{k-1} \le \delayin v < 2^k$ and $2^{k'-1} \le \delayout v < 2^{k'}$. Letting $K = \log \max_v\{ \delayin v, \delayout v\}$, we have at most $K^2$ job groups. We also define the group $U_{k,k'} = \{u : \exists v \in V_{k,k'},\ u \prec v\}$. Let $\sigma$ be the output of Algorithm~\ref{alg:reduction} on input $\sigma'$. (It is easy to see that Algorithm~\ref{alg:reduction} runs in polynomial time.) 
    
    \begin{alg}[Reduction: In-Delays to In/Out-Delays]
        Input: a schedule $\sigma'$. Initialize: $\tilde{\sigma}(v) = \varnothing$ and $\sigma(v) = \varnothing$ for all $v$. Output: schedule $\sigma$. 
        
        For all $v,i,t$, $\tilde{\sigma}(v) = \{(i,t + \delay{\max} - \delayout i): (i,t) \in \sigma'(v)\}$. Let 
        \begin{align}
            &S_{\ell,\ell',d} = \{v:  \delayin v \in [2^{\ell-1}, 2^{\ell}) \land \delayout v \in [2^{\ell'-1}, 2^{\ell'}) \land \exists i,t \in [d\groupdelay{\ell,\ell'}, (d+1)\groupdelay{\ell,\ell'}),\ (i,t) \in \tilde{\sigma}(v) \} \label{eq:Sll'd}\\
            &R_{\ell,\ell',d} = \{u: \exists v \in S_{\ell,\ell',d},\ u \prec v \land \exists i,t \in [d\groupdelay{\ell,\ell'}, (d+1)\groupdelay{\ell,\ell'}),\ (i,t) \in \tilde{\sigma}(u) \} \label{eq:Rll'd}
        \end{align}
        %$S_{\ell,\ell',d} = \{v: 2^{\ell-2} \le \delayin v < 2^{\ell-1}$ and $2^{\ell'-2} \le \delayout v < 2^{\ell'-1}$ and $\exists t,i,\ (i,t) \in \tilde{\sigma}(v)$ and $d\groupdelay{\ell,\ell'} \le t < (d+1)\groupdelay{\ell,\ell'}\}$. Let $R_{\ell,\ell',d} = \{u: \exists v \in S_{\ell,\ell',d},\ u \prec v$ and $\exists t,i,\ (i,t) \in \tilde{\sigma}(v)$ and $d\groupdelay{\ell,\ell'} \le t < (d+1)\groupdelay{\ell,\ell'}\}$. 
        For all $v,i,\ell,\ell',t,d$, if $(i,t) \in \tilde{\sigma}(v)$ and $d\groupdelay{\ell,\ell'} \le t < (d+1)\groupdelay{\ell,\ell'}$ and $v \in S_{\ell,\ell',d} \cup R_{\ell,\ell',d}$, then we define
        \begin{equation}
            \tau = \sum_{(d,d')} \ceil{ (t-1)/ \groupdelay{d,d'}} (\groupdelay{d,d'} + 2^{d'}) + \sum_{(d,d') > (\ell,\ell')} \ceil{t/\groupdelay{d,d'}} (\groupdelay{d,d'} + 2^{d'})
            \label{eq:red_phase_expansion}
        \end{equation}
        and assign $\sigma(v) \gets \sigma(v) \cup \{(i,\tau + t - d \groupdelay{\ell,\ell'} )\}$. 
    \label{alg:reduction}
    \end{alg}
    
    We provide an informal description of Algorithm~\ref{alg:reduction}. 
    %Let $S_{\ell,\ell',d}$ be the set of jobs $v$ such that $2^{\ell-2} \le \delayin v < 2^{\ell-1}$ and $2^{\ell'-2} \le \delayout v < 2^{\ell'-1}$ and $\tilde{\sigma}$ executes $v$ at some time $t: d\groupdelay{\ell,\ell'} \le t < (d+1)\groupdelay{\ell,\ell'}$. Let $R_{\ell,\ell',d}$ be the set of jobs $u$ that precede some job in $S_{\ell,\ell',d}$ and $\tilde{\sigma}$ executes $u$ at some time $t: d\groupdelay{\ell,\ell'} \le t < (d+1)\groupdelay{\ell,\ell'}$.
    Let phase $\phi_{\ell,\ell',d} = S_{\ell,\ell',d} \cup R_{\ell,\ell',d}$ and let the start time of $\phi_{\ell,\ell',d}$ be $d\groupdelay{\ell,\ell'}$. The schedule is structured so that each phase $\phi_{\ell,\ell'}$ of $\tilde{\sigma}$ is allotted $2\groupdelay{\ell,\ell'}$ time in $\sigma$, during which no other phases are executed. If phase $\phi_{\ell,\ell'}$ is alloted time $t$ to $t + 2 \groupdelay{\ell,\ell'}$ in $\sigma$, the jobs are executed from time $t + 2^{\ell}$ to time $t + 2^{\ell} + \groupdelay{\ell,\ell'}$. (Jobs are always executed in the same order, and on the same machines, as in $\tilde{\sigma}$.) This allows for an initial in-communication phase of length $2^{\ell}$ and a final out-communication phase of length $2^{\ell'}$. Earlier phases of $\tilde{\sigma}$ are exected in $\sigma$ before later phases of $\tilde{\sigma}$. Also, if multiple phases have the same start time in $\tilde{\sigma}$ then we execute them in non-increasing order of the sum of their delays: if $\phi_{\ell_1, \ell_2}$ and $\phi_{\ell_3,\ell_4}$ have the same start time in $\tilde{\sigma}$ and $\phi_{\ell_1, \ell_2}$ is executed before $\phi_{\ell_3,\ell_4}$ in $\sigma$ then $\ell_1 + \ell_2 \le \ell_3 + \ell_4$. Figure~\ref{fig:reduction}(a) depicts the structure of $\sigma$ for two consecutive values of $(\ell,\ell')$, where the corresponding phases have the same start time.

    \junk{
    \begin{algorithm}
        \SetKwInput{Input}{input}
        \SetKwInput{Init}{init}
        \Input{schedule $\sigma$}
        \Init{$\forall v,\ S_v \gets \varnothing$ and $\sigma'(v) \gets \varnothing$;\ $T \gets 0$; $T' \gets 0$}
        $\forall v,i,t$,\ if $(i,t) \in \sigma(v)$ then $S_v \gets S_v \cup \{(i,t + \delay{\max} - \delayout i)\}$ \label{line:reduce-machine}\;
        \While{$T \le C$}{
            \For{($k=K$ to $0$) and ($k'=K$ to $0$): $\exists$ integer $a,\ a(2^k + 2^{k'}) = T $}{
                \ForAll{$i,t,v : v \in (V_{k,k'} \cup U_{k,k'})$}{
                    \If{$(i,t) \in S_v$ and $T \le t < 2^{k+1}$}{
                        $\sigma'(v) \gets \sigma'(v) \cup \{(i, T' + 2^k + t - T)\}$ \label{line:reduce-job}\;
                    }
                }
                $T' \gets T' + 2(2^k + 2^{k'})$\;
            }
            $T \gets T+1$
        }
    \caption{Reduction}
    \end{algorithm}
    }

    \begin{claim}
        $\sigma$ is a valid schedule of $I$. 
    \end{claim}
    \begin{proof}
        Since each job is executed on the same machines as in $\tilde{\sigma}$, and the relative ordering of jobs on each machine is maintained, we can immediately infer that precedence relations are satisfied on a single machine. So we must only verify that communication constraints are met.
        
        Let $u,v$ be a precedence ordered pair with $u \prec v$. Suppose some instance of $v$ is executed on $i$ at time $t_v$ in $\sigma'$, and $u$ is executed on $j$ at time $t_u < t_v - \delayin v - \delayout v - \delayin i - \delayout i$. Let $\phi_{a,b,c} \ni v$ (resp., $\phi_{e,f,d} \ni u$) be the first phase in which $v$ (resp., $u$) is executed.  Note that $\groupdelay{a,b} \ge \delayin v + \delayout v$ and $\groupdelay{e,f} \ge \delayin u + \delayout u$.  Let $t^{\mathrm{new}}_v$ and $t^{\mathrm{new}}_u$ be the time at which these instances of $u,v$ are executed in $\sigma'$. Let $\tau_v$ and $\tau_u$ be the values of $\tau$ given in equation~(\ref{eq:red_phase_expansion}) for $v,i,a,b,t_v,c$ and $u,j,e,f,t_u,d$, respectively. 
        \begin{align*}
            t^{\mathrm{new}}_v &\ge t_v + \delay{\max} - \delayout i + \tau_v - c \groupdelay{a,b} &&\by{construction of } \sigma \\
            &\ge t_u + \delayin v + \delayout v + \delayin i + \delayout i + \delay{\max} - \delayout i + \tau_v - c \groupdelay{a,b} &&\by{supposition} \\
            &\ge t^{\mathrm{new}}_u + \delayout j - \tau_u + d \groupdelay{e,f} + \delayin v + \delayout v + \delayin i + \tau_v -  c \groupdelay{a,b} &&\by{construction of } \sigma \\
            %& = t^{\mathrm{new}}_u + \delayout j - \tau' + d\groupdelay{e,f} + \delayin v + \delayout v + \delayin i + \tau +c\groupdelay{a,b} \\
            &\ge t^{\mathrm{new}}_u + \delayout j + (2^e + 2^f) + d \groupdelay{e,f} + \delayin v + \delayout v + \delayin i - c \groupdelay{a,b} &&\by{} \tau_v - \tau_u \ge 2^e + 2^f\\
            &\ge t^{\mathrm{new}}_u + \delayout j + \delayout u + \delayin v + \delayin i &&\by{} 2^e + 2^f \ge c\groupdelay{a,b} - d\groupdelay{e,f} 
        \end{align*}
        This proves the claim.
    \end{proof}

    \begin{figure}
        \centering
        \includegraphics[width=\textwidth]{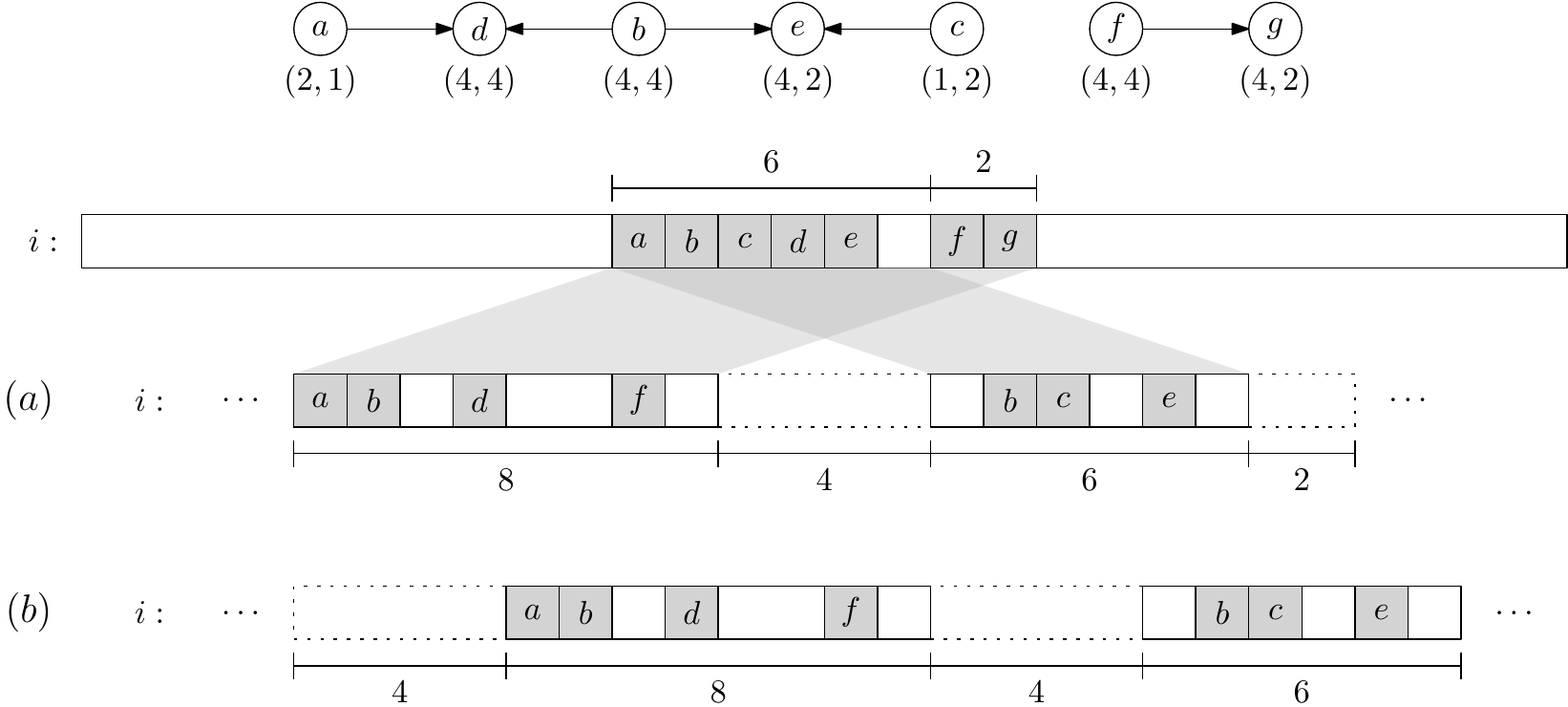}
        \caption{The expansion of a schedule via Algorithms~\ref{alg:reduction} and \ref{alg:reduction2}. (a) Algorithm~\ref{alg:reduction} adds an out-communication period to the end of each phase. (b) Algorithm~\ref{alg:reduction2} adds an in-communication period to the start of each phase.}
        \label{fig:reduction}
    \end{figure}
    
    \begin{claim} 
        The makespan of $\sigma$ is at most $(\log \delay{\max})^2(\alpha \opt(I') + \beta) + 8 \delay{\max}\log \delay{\max}$.
    \label{clm:red_makespan}
    \end{claim}
    \begin{proof}
        By construction, the length of $\sigma$ is at most $\sum_{(d,d')} \ceil{\frac{\alpha\opt(I') + \beta}{\groupdelay{d,d'}}} 2 \groupdelay{d,d'} \le (\log \delay{\max})^2(\alpha \opt + \beta) + 8 \delay{\max} \log \delay{\max}$, using $\groupdelay{d,d'} = 2^d + 2^{d'}$. 
    \end{proof}

    \begin{claim}
        $(\log \delay{\max})^2 \opt(I) + 8 \delay{\max}\log \delay{\max} \ge \opt(I')$.
    \end{claim}
    \begin{proof}
        We define the following algorithm.
        \begin{alg}[Reduction: In/Out-Delays to In-Delays]
            Input: a schedule $\sigma$. Initialize: $\tilde{\sigma}(v) = \varnothing$ and $\sigma(v) = \varnothing$ for all $v$. Output: schedule $\sigma'$. 
            
            For all $v,i,t$, $\tilde{\sigma}(v) = \{(i,t + \delay{\max} + \delayout i): (i,d) \in \sigma(v)\}$. Let $S_{\ell,\ell',d}$ and $R_{\ell,\ell',d}$ be defined as in equations~\ref{eq:Sll'd} and \ref{eq:Rll'd}. For all $v,i,\ell,\ell',t,d$, if $(i,t) \in \tilde{\sigma}(v)$ and $d\groupdelay{\ell,\ell'} \le t < (d+1)\groupdelay{\ell,\ell'}$ and $v \in S_{\ell,\ell',d} \cup R_{\ell,\ell',d}$, then we define
            \begin{equation}
                \tau = \sum_{(d,d')} \ceil{ (t-1)/ \groupdelay{d,d'}} (\groupdelay{d,d'} + 2^{d}) + \sum_{(d,d') > (\ell,\ell')} \ceil{t/\groupdelay{d,d'}} (\groupdelay{d,d'} + 2^{d})
                \label{eq:red_phase_expansion2}
            \end{equation}
            and assign $\sigma'(v) \gets \sigma'(v) \cup \{(i,\tau + t - d \groupdelay{\ell,\ell'} + 2^d)\}$. 
        \label{alg:reduction2}
        \end{alg}
        
        Informally, this Algorithm~\ref{alg:reduction2} is similar to Algorithm~\ref{alg:reduction}. One difference that, when constructing $\tilde{\sigma}$, this algorithm \textit{increases} the execution times on each machine $i$ by $\delayout i$. The other main difference is that when executing each phase $\phi_{d,d'}$, this algorithm adds an in-communication period before the execution of the phase. Figure~\ref{fig:reduction}(b) depicts the structure of $\sigma'$ for two consecutinve values of $(\ell,\ell')$ when the corresponding phases have the same start time.   
            
        Suppose we are given an arbitrary schedule $\sigma_1$ of $I$ with makespan $C$. Let $\sigma_2$ be the result of running Algorithm~\ref{alg:reduction2} on input $\sigma_1$.  By the same reasoning in claim~\ref{clm:red_makespan}, we can see that the length of $\sigma_2$ is a most $(\log \delay{\max})^2 \opt(I) + 8 \delay{\max}\log \delay{\max}$. This proves the claim
    \end{proof}
    The claims are sufficient to prove the lemma.
\end{proof}

\begin{corollary}
    When all job delays are 0 there exists an algorithm which, on arbitrary input $I \in \I$ with delays given by $\delay{}$, produces a schedule with makespan at most $\alpha \opt(I) + \beta + (\alpha + 1)\delay{\max}$ if there exists an algorithm which, on arbitrary input $I' \in \I'$, produces a schedule with makespan at most $\alpha\cdot\opt(I') + \beta$.
\label{cor:reduction_machine}
\end{corollary}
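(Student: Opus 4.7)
The plan is to reuse the instance transformation from the proof of Lemma~\ref{lem:reduction}---namely, build $I' \in \I'$ from $I$ by setting $\ddelayin{i} = \delayin{i} + \delayout{i}$ and $\ddelayout{i} = 0$ for every machine (job delays are already $0$, so no job-side change is needed)---apply the hypothesized algorithm to $I'$ to obtain a schedule $\sigma'$ with makespan at most $\alpha\cdot\opt(I') + \beta$, and then convert $\sigma'$ to a schedule $\sigma$ of $I$ by a single per-machine time shift. Because all job delays vanish, the entire phase-expansion machinery of Algorithm~\ref{alg:reduction} (which exists to separate jobs by the $(\ell,\ell')$ groupings of job in/out-delays) is unnecessary, and we lose only an additive $\delay{\max}$ rather than the polylogarithmic blow-up of Lemma~\ref{lem:reduction}.

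\textbf{Forward shift and feasibility.} For each copy $(i,t) \in \sigma'(v)$, place a copy of $v$ on $i$ at time $t + \delay{\max} - \delayout{i}$ in $\sigma$. The shift depends only on the machine, so the temporal order of every pair of jobs assigned to the same machine is preserved, and same-machine precedence holds automatically. For a precedence pair $u \prec v$ with $u$ executed on machine $j$ and the relevant copy of $v$ executed on machine $i \ne j$, feasibility of $\sigma'$ in $I'$ gives $t_v - t_u \ge \ddelayin{i} = \delayin{i} + \delayout{i}$, whence after the shift
\[
t_v^{\mathrm{new}} - t_u^{\mathrm{new}} = (t_v - t_u) + \delayout{j} - \delayout{i} \ge \delayin{i} + \delayout{j},
\]
which is exactly the communication requirement of $I$ once job delays are zero. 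Thus $\sigma$ is a valid $I$-schedule, and its makespan exceeds that of $\sigma'$ by at most $\delay{\max}$.

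\textbf{Comparing optima.} The symmetric shift $(i,t) \mapsto (i, t + \delayout{i})$ applied to an optimal $I$-schedule produces a feasible $I'$-schedule: for $u \prec v$ on distinct machines $j,i$, the new gap is $(t_v - t_u) + \delayout{i} - \delayout{j} \ge \delayin{i} + \delayout{i}$ using the $I$-feasibility $t_v - t_u \ge \delayin{i} + \delayout{j}$. This yields $\opt(I') \le \opt(I) + \delay{\max}$. Chaining the three inequalities,
\[
\mathrm{makespan}(\sigma) \le \alpha\,\opt(I') + \beta + \delay{\max} \le \alpha\,\opt(I) + \beta + (\alpha + 1)\delay{\max},
\]
which is the stated bound.

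\textbf{Expected obstacle.} The argument is short and essentially mechanical; the only point demanding care is that a machine-dependent shift could in principle invert the order of copies of the same job on different machines, but since each copy is shifted only according to its host machine and precedence is enforced per copy, no inconsistency arises. The crucial reason the proof sidesteps the cascading phase structure of Algorithm~\ref{alg:reduction} is that with all job delays equal to $0$ there is a single ``job-delay class,'' so the gap $\delayout{j} - \delayout{i}$ can be absorbed by a uniform offset $\delay{\max} - \delayout{i}$ without having to stagger jobs of different delay magnitudes.
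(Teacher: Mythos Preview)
Your proof is correct and takes essentially the same approach as the paper: the paper's proof simply notes that when job delays vanish, the phase-expansion machinery of Algorithms~\ref{alg:reduction} and~\ref{alg:reduction2} collapses and only the per-machine shifts $t \mapsto t + \delay{\max} - \delayout{i}$ (forward) and $t \mapsto t + \delayout{i}$ (backward) remain, yielding exactly the two inequalities you derive. Your writeup spells out the feasibility verification more explicitly than the paper's two-sentence sketch, but the argument is identical.
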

\begin{proof}
    The corollary follows from the proof of Lemma~\ref{lem:reduction} when job delays are 0. Specifically, using Algorithms~\ref{alg:reduction} and \ref{alg:reduction2}, we can infer that the makespan of $\sigma$ is at most $\alpha\opt(I') + \beta + \delay{\max}$ and that $\opt(I') \le \opt(I) + \delay{\max}$. This proves the corollary.
\end{proof}

\section{Algorithm for Uniform Machines}
\label{sec:uniform}

In this section, we show the existence of an algorithm which achieves provably good bounds for scheduling on uniform machines with fixed communication delay. The algorithm presented here is a generalization of the algorithm in \cite{LR02}, the main difference being our incorporation of parallel processors for each machine. 

Recall that \udps\ is identical to \mdps\ with $\delayout{i} = 0$ for all $i$, and with $\rho_i = \rho_j$, $m_i = m_j$, and $s_i = s_j$ for all $i,j$. We define \udps-Solver as \Cref{alg:uniform}. The algorithm takes as input a set of jobs $U$ and a group $\group k$ of identical machines with delay $\delay{}$, speed $\groupspeed k$, and size $\groupsize k$. We are guaranteed that the length of the longest chain in $U$ is at most $\alpha \delay{} \groupspeed k$ and that, for any job $v \in U$, the number of jobs $u \in U$ such that $u \prec v$ is at most $\alpha \delay{} \groupsize k \groupspeed k$.

\begin{algorithm}
\KwData{set of jobs $U$, set of identical machines $\group k$, delay $\delay{}$}
\KwInit{$t \gets 0$\;}
\While{$U \ne \varnothing$}{
    $\forall i \in \group k,\  V_{i,t} \gets \varnothing$ \label{line:Vit_init}\;
    \ForAll{jobs $v \in U$}{
        $U_{v,t} \gets \{u \in U: u \prec v\}$\;
        $U^{\dup}_{v,t} \gets U_{v,t} \cap \{u \in U : u \prec v$ and $\exists i, u \in V_{i,t} \}$\;
        \If{$|U_{v,t}| \ge 2 \cdot |U^{\dup}_{v,t}|$ \label{line:overlap}}{
            $i \gets \arg\min_{j \in \group k} \{ | V_{j,t} | \}$ \label{line:machine}\;
                $V_{i,t} \gets V_{i,t} \cup U_{v,t}$ \label{line:place}\;
        }
    }
    $\forall i \in \group k$, list schedule $V_{i,t}$ on $i$ starting at time $t$\;
    $U \gets U \setminus \big( \bigcup_i V_{i,t} \big)$ \label{line:remove}\;
    $t \gets \delay{} + \max\{ t, \max_{v} \{$completion time of $v\}\}$ \label{line:update_t}\;
}
\caption{\udps-Solver}
\label{alg:uniform}
\end{algorithm}

\begin{lemma}
    \Cref{alg:uniform} produces a valid schedule of $U$ on $\group k$ in polynomial time.
\end{lemma}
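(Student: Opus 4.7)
The plan is to verify polynomial running time and two aspects of validity---within-machine precedence and cross-machine delay---separately, relying on the downward closure of the predecessor sets $U_{v,t}$ within $U$ and the $\delay{}$-gap inserted between consecutive phases on line~\ref{line:update_t}.

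For polynomial running time, observe that the inner for-loop processes at most $|U|$ jobs, and each inner iteration does polynomial work (computing $U_{v,t}$ and $U^{\dup}_{v,t}$, then finding $\arg\min_j |V_{j,t}|$). It therefore suffices to show that each outer-loop iteration removes at least one job from $U$. This is witnessed by the first job $v$ examined in the inner loop: all $V_{i,t}$ are empty at that moment, so $U^{\dup}_{v,t} = \varnothing$, the test on line~\ref{line:overlap} holds, and line~\ref{line:place} places $v$ together with its predecessors remaining in $U$ on the least-loaded machine, after which line~\ref{line:remove} removes $v$ from $U$. Hence the while loop runs for at most $|U|$ iterations in total.

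For validity, I need to check (i) that within each machine the list-scheduled order respects precedence, and (ii) that for every precedence pair $u \prec v$ executed on different machines at times $t_u$ and $t_v$, the inequality $t_v \ge t_u + \delay{}$ holds. Part (i) follows because $U_{v,t}$ is downward closed within $U$, so every invocation of line~\ref{line:place} adds to $V_{i,t}$ a set whose induced sub-DAG is self-contained; list scheduling then respects precedence on the parallel processors of machine $i$. For part (ii), fix $u \prec v$ with $v$ placed in a phase starting at time $t$ on machine $i$. If $u$ is still in $U$ when this phase begins, then the placement on line~\ref{line:place} that put $v$ on $i$ also included $u$ (since $u$ lies in the downward closure of whichever $w$ triggered $v$'s placement), contradicting that $u$ was executed on a different machine. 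Therefore $u$ was removed in a strictly earlier phase at completion time $t_u$, and line~\ref{line:update_t} guarantees $t \ge t_u + \delay{}$, so $t_v \ge t \ge t_u + \delay{}$. The main care in the argument is tracking the downward-closure property through the inner for-loop; this is preserved because $U$ is not modified until line~\ref{line:remove} runs after the inner loop finishes, so each invocation of line~\ref{line:place} operates on a stable snapshot of $U$.
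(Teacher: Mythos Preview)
Your proof is correct and follows essentially the same approach as the paper: validity is split into within-machine precedence (handled by list scheduling on downward-closed sets) and cross-machine delay (handled by the $\delay{}$-gap on line~\ref{line:update_t}), exactly as the paper does. Your treatment is more explicit than the paper's---in particular you spell out why each outer iteration removes at least one job, whereas the paper merely asserts polynomial time---but the underlying ideas are the same.
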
 
\begin{proof}
    For fixed $i,t$, the list scheduling subroutine guarantees that precedence constraints are maintained in placing $V_{i,t}$. So we must only show that precedence and communication constraints are obeyed across different values of $t$. For a fixed value $t^*$, line~\ref{line:remove} entails that all jobs in $\bigcup_i V_{i,t}$ for $t < t^*$ have been removed from $U$. This entails that all precedence constraints are obeyed in scheduling $V_{i,t^*}$. The update at line~\ref{line:update_t} ensure that no jobs are executed for $\delay{}$ before scheduling $V_{i,t^*}$. Since there is no communication necessary for scheduling $\bigcup_i V_{i,t^*}$, this entails that communication constraints are obeyed.  
\end{proof}

\begin{lemma}
    For any $t$, $\sum_i |V_{i,t}| \le 2 \cdot |\bigcup_i V_{i,t}|$.
    \label{lem:dup_load}
\end{lemma}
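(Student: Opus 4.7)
I would argue by tracking two quantities as the inner for-loop (over $v \in U$) progresses within a single iteration of the outer while-loop at time $t$: the total count $S := \sum_i |V_{i,t}|$ and the unique count $N := |\bigcup_i V_{i,t}|$. Both start at $0$ after line~\ref{line:Vit_init}, and the goal is to show that after the loop finishes we have $S \le 2N$.

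The key is to analyze what happens on a single iteration of the for-loop in which the guard on line~\ref{line:overlap} fires and line~\ref{line:place} sets $V_{i,t} \gets V_{i,t} \cup U_{v,t}$ for the machine $i$ chosen on line~\ref{line:machine}. Let $\Delta S$ and $\Delta N$ be the resulting increments. Then $\Delta S = |U_{v,t} \setminus V_{i,t}|$ and $\Delta N = |U_{v,t} \setminus \bigcup_j V_{j,t}|$, where both set differences are taken with respect to the state before the update. Since $V_{i,t} \subseteq \bigcup_j V_{j,t}$, we have $\Delta S \le |U_{v,t}|$. Moreover, by the definition of $U^{\dup}_{v,t}$, the elements of $U_{v,t}$ already contained in some $V_{j,t}$ at this point are exactly $U^{\dup}_{v,t}$, so $\Delta N = |U_{v,t}| - |U^{\dup}_{v,t}|$.

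Now I invoke the guard condition $|U_{v,t}| \ge 2\,|U^{\dup}_{v,t}|$, which gives $|U^{\dup}_{v,t}| \le |U_{v,t}|/2$ and hence $\Delta N \ge |U_{v,t}|/2 \ge \Delta S / 2$, i.e.\ $\Delta S \le 2\,\Delta N$. In iterations where the guard fails, neither $S$ nor $N$ changes, so the inequality is preserved trivially. Summing the per-iteration inequality $\Delta S \le 2\,\Delta N$ over the for-loop, and using $S = N = 0$ at the start, yields $S \le 2N$, which is the claimed bound. I do not anticipate a real obstacle here: the only point that needs care is to make sure $U^{\dup}_{v,t}$ is interpreted with the ``live'' state of the $V_{j,t}$'s at the moment $v$ is processed, which is exactly how the algorithm defines it.
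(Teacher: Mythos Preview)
Your proposal is correct and follows essentially the same approach as the paper: both arguments step through the updates of the $V_{i,t}$'s and use the guard $|U_{v,t}| \ge 2\,|U^{\dup}_{v,t}|$ to bound the overlap. The only cosmetic difference is that the paper phrases this as an induction maintaining the invariant $|\bigcup_i V_{i,t,\ell}| \ge \tfrac12 \sum_i |V_{i,t,\ell}|$, whereas you track the per-step increments $\Delta S \le 2\,\Delta N$ and sum; the two are equivalent.
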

\begin{proof}
    Consider a fixed value of $t$. For any machine $i$, let $V_{i,t,0}$ be the value of $V_{i,t}$ after initialization at line~\ref{line:Vit_init}. Also, for all machines $i$, let $V_{i,t,\ell}$ be the value of $V_{i,t}$ after the $\ell$\textsuperscript{th} time some set $V_{j,t}$ is updated.  We prove the lemma by induction on $\ell$. 
    
    The lemma holds trivially for $\ell = 0$ since $\sum_i |V_{i,t,0}| = |\bigcup_i V_{i,t,0}| = 0$. Suppose the lemma holds up until the $\ell$\textsuperscript{th} update of any $V_{i,t}$, and consider the $(\ell + 1)$\textsuperscript{th} update. Let $U'$ be the set of jobs placed during the update. The condition at line~\ref{line:overlap} entails that at least half the jobs in $U$ are not in $\bigcup_i V_{i,t,\ell}$. Therefore,
    \begin{align*}
        |\bigcup_i V_{i,t,\ell+1}| &= |U \cup \bigcup_i V_{i,t,\ell}| = |U| + |\bigcup_i V_{i,t,\ell}| - |U \cap \bigcup_i V_{i,t,\ell}| \\
        &\ge |U| + |\bigcup_i V_{i,t,\ell}| - \frac{|U|}{2} &&\by{supposition}\\
        &\ge \frac{|U|}{2} + \frac{\sum_i |V_{i,t,\ell}|}{2} \ge \frac{1}{2} \sum_i |V_{i,t,\ell+1}| &&\by{IH} 
    \end{align*}
    This proves the lemma.
\end{proof}

\begin{lemma}
    If $t_1$ and $t_2$ are two consecutive values of $T$, then $t_2 \le t_1 + \frac{2 \cdot |\bigcup_i V_{i,t}|}{|\group k| \cdot \groupsize k \groupspeed k} + 3\alpha \delay{}$. 
\label{lem:load_balance}
\end{lemma}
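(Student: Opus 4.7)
The plan is to decompose $t_2 - t_1$ into the $\delay{}$ pause inserted at line~\ref{line:update_t} plus the wall-clock length of the list-schedule constructed during the iteration at time $t_1$. From the update rule, $t_2 - t_1 \le \delay{} + \max_i \mu_i$, where $\mu_i$ denotes the list-scheduling makespan of $V_{i,t_1}$ on machine $i$ starting at $t_1$. I would fix a machine $i^*$ that attains this maximum.

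Next I would apply the standard Graham bound $\mu_{i^*} \le |V_{i^*,t_1}|/(\groupsize k \groupspeed k) + \chi/\groupspeed k$, where $\chi$ is the length of a longest chain in $V_{i^*,t_1}$. Because $V_{i^*,t_1} \subseteq U$ and the hypothesis on \udps-Solver's input guarantees every chain in $U$ has at most $\alpha \delay{} \groupspeed k$ jobs, the chain term contributes at most $\alpha\delay{}$.

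The main work is bounding $|V_{i^*,t_1}|$. I would examine the final augmentation of $V_{i^*,t_1}$ at line~\ref{line:place} by some batch $U_{v,t_1}$ (if no such augmentation happens, $|V_{i^*,t_1}|=0$ and the bound is trivial). At that instant line~\ref{line:machine} selected $i^*$ because it had the smallest current load, so the pre-update size of $V_{i^*,t_1}$ is at most $\frac{1}{|\group k|}\sum_j |V_{j,t_1}|$. Lemma~\ref{lem:dup_load} bounds this sum by $2|\bigcup_j V_{j,t_1}|$, and the input hypothesis bounds $|U_{v,t_1}|$, the set of predecessors of $v$ in $U$, by $\alpha\delay{}\groupsize k \groupspeed k$. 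Together these give $|V_{i^*,t_1}| \le \frac{2|\bigcup_j V_{j,t_1}|}{|\group k|} + \alpha\delay{}\groupsize k \groupspeed k$.

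Plugging into the Graham bound and adding the $\delay{}$ gap yields $t_2 - t_1 \le \frac{2|\bigcup_j V_{j,t_1}|}{|\group k|\,\groupsize k \groupspeed k} + 2\alpha\delay{} + \delay{}$, which is at most $\frac{2|\bigcup_j V_{j,t_1}|}{|\group k|\,\groupsize k \groupspeed k} + 3\alpha\delay{}$ since $\alpha \ge 1$. The principal obstacle is the load-balancing step: combining the greedy min-load rule at line~\ref{line:machine} with the duplication-overhead control of Lemma~\ref{lem:dup_load} so that the heaviest machine's load stays within a constant of the overall average---this is precisely where the factor~$2$ on the leading term originates.
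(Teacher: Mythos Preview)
Your proposal is correct and follows essentially the same approach as the paper: decompose $t_2-t_1$ into the $\delay{}$ gap plus the list-scheduling makespan on the heaviest machine, apply Graham's bound, control the chain term via the input hypothesis, and control $|V_{i^*,t_1}|$ by combining the greedy min-load rule with Lemma~\ref{lem:dup_load}. The only cosmetic difference is that the paper argues all final loads stay within $\alpha\delay{}\groupsize k\groupspeed k$ of one another (so $|V_{i^*,t_1}| \le \min_i |V_{i,t_1}| + \alpha\delay{}\groupsize k\groupspeed k$), whereas you look at the last augmentation of $V_{i^*,t_1}$ and compare its pre-update size to the running average; both routes yield the same inequality.
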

\begin{proof}
    Consider a fixed value of $t$. If there are no jobs added to $V_{i,t}$ for any $i$, then $t_2 = t_1 + \delay{}$ by line~\ref{line:update_t} and the lemma holds. So suppose that $\max_v\{$completion time of $v\} > t_1$ on execution of line~\ref{line:update_t}. At the execution of this line, suppose the last job completes on machine $i^*$. Then 
    \begin{align*}
        t_2 &= \delay{} + t_1 + \max_v\{\text{completion time of } v \text{ on } i^*\} - t_1 &&\by{line~\ref{line:update_t}}\\
        &\le \delay{} + t_1 + \frac{|V_{i^*,t}|}{\groupsize k \groupspeed k} + (\text{length of longest chain in } U) &&\by{Graham \cite{graham:schedule}} \\
        &\le \delay{} + t_1 + \frac{|V_{i^*,t}|}{\groupspeed k \groupsize k} + \alpha \delay{} &&\by{supposition} \\
        &\le \delay{} + t_1 + \min_i \Big\{ \frac{|V_{i,t}|}{\groupsize k \groupspeed k} \Big\} + \frac{\alpha \delay{} \groupsize k \groupspeed k}{\groupsize k \groupspeed k} + \alpha \delay{} &&\by{supposition and line~\ref{line:machine}} \\
        &\le  t_1 + \frac{\sum_i |V_{i,t}|}{|\group k| \cdot \groupsize k \groupspeed k} + \delay{} (2 \alpha + 1) \le t_1 + \frac{2 \cdot |\bigcup_i V_{i,t}|}{|\group{k}| \cdot \groupsize k \groupspeed k} + \delay{} (2\alpha + 1) &&\by{Lemma~\ref{lem:dup_load}}
    \end{align*}
    The appeal to line~\ref{line:machine} invokes the fact that the algorithm always chooses the least loaded machine when placing new jobs. Since every job has a maximum of $\alpha \delay{} \groupsize k \groupspeed k$ predecessors in $U$, by supposition, the difference $|V_{i,t}| - |V_{j,t}|$ is always maintained to be less than  $\alpha \delay{} \groupsize k \groupspeed k$, for any $i,j$. This proves the lemma.
\end{proof}

\begin{lemma}%[Restatement of Lemma~\ref{lem:uniform}]
    Let $U$ be a set of jobs such that for any $v \in U$ the number of predecessors of $v$ in $U$ (i.e., $|\{ u \prec v\} \cap U|$) is at most $\alpha \delay{} \groupsize k \groupspeed k$, and the longest chain in $U$ has length at most $\alpha \delay{}$. Then given as input the set $U$ of jobs and the set $\group k$ of identical machines, \udps-Solver produces, in polynomial time, a valid schedule with makespan less than $3\log(\alpha \delay{} \groupsize k \groupspeed k)\alpha \delay{} + \frac{2 \cdot |U|}{|\group k| \groupsize k \groupspeed k} + \delay{}$.
\label{lem:udps}
\end{lemma}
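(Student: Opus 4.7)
The plan is to sum the per-iteration time increments provided by Lemma~\ref{lem:load_balance} over all iterations of the while loop, and separately bound the number of iterations. First I would note that between consecutive values $t_1 < t_2$ of the clock $t$, Lemma~\ref{lem:load_balance} gives $t_2 - t_1 \le \frac{2|\bigcup_i V_{i,t_1}|}{|\group k|\groupsize k\groupspeed k} + (2\alpha+1)\delay{}$, which is at most $\frac{2|\bigcup_i V_{i,t_1}|}{|\group k|\groupsize k\groupspeed k} + 3\alpha\delay{}$ since $\alpha \ge 1$. Because line~\ref{line:remove} removes from $U$ the jobs placed during iteration $t_1$, the sets $\bigcup_i V_{i,t}$ across distinct iterations are pairwise disjoint subsets of the original $U$, so their sizes telescope to at most $|U|$. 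Letting $T$ denote the total number of iterations, the final makespan is thus bounded by $3\alpha\delay{}\,T + \frac{2|U|}{|\group k|\groupsize k\groupspeed k} + \delay{}$, where the trailing $\delay{}$ absorbs the clock update at line~\ref{line:update_t} in the last iteration.

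The remaining task is to show $T \le \log_2(\alpha\delay{}\groupsize k\groupspeed k) + O(1)$. The key potential argument is that whenever a job $v$ remains in $U$ after iteration $t$, the quantity $|\{u \in U : u \prec v\}|$ strictly less than halves between iterations $t$ and $t+1$. The reason: $v$ remaining implies that the condition at line~\ref{line:overlap} failed when $v$ was examined, i.e., $|U_{v,t}| < 2|U^{\dup}_{v,t}|$; by definition every job in $U^{\dup}_{v,t}$ has been placed in some $V_{j,t}$ during this iteration and is therefore removed from $U$ at line~\ref{line:remove}. Hence $|U_{v,t+1}| \le |U_{v,t}| - |U^{\dup}_{v,t}| < |U_{v,t}|/2$. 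Starting from the hypothesized bound $\alpha\delay{}\groupsize k\groupspeed k$ on the initial predecessor count, this halving drops the count below $1$ after at most $\log_2(\alpha\delay{}\groupsize k\groupspeed k)$ iterations, at which point the condition at line~\ref{line:overlap} becomes trivially satisfied for $v$ and $v$ is placed in the subsequent iteration.

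Combining the two bounds yields makespan at most $3\alpha\delay{}\log(\alpha\delay{}\groupsize k\groupspeed k) + \frac{2|U|}{|\group k|\groupsize k\groupspeed k} + \delay{}$, matching the target. I expect the main obstacle to be the potential argument of the second paragraph: one must carefully verify that the greedy within-iteration scan at line~\ref{line:overlap} does indeed guarantee the strict halving whenever $v$ is left behind, rule out pathological orderings of the scan in which $v$ might escape placement without its predecessor mass having been removed, and handle the boundary case in which $v$'s predecessor set in $U$ has shrunk to the empty set so that $v$ itself is scheduled and removed on the very next pass, rather than persisting indefinitely.
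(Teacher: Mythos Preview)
Your proposal is correct and follows essentially the same approach as the paper: the paper bounds the number of while-loop iterations by $\log(\alpha\delay{}\groupsize k\groupspeed k)$ via the same predecessor-halving argument, then proves an inductive claim that is exactly the unrolled version of your telescoping sum of the per-iteration increments from Lemma~\ref{lem:load_balance}, using disjointness of the sets $\bigcup_i V_{i,t}$ across iterations to bound the load term by $2|U|/(|\group k|\groupsize k\groupspeed k)$. One small point to watch when you write it out: the implication ``$v$ remains $\Rightarrow$ the condition at line~\ref{line:overlap} failed for $v$'' is not literally true (when the condition passes, only $U_{v,t}$, i.e.\ the strict predecessors of $v$, is placed, so $v$ itself may still remain), but in that case $|U_{v,t+1}|=0$ and the halving conclusion holds a fortiori, so your argument goes through unchanged.
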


\begin{proof}
    By the condition of line~\ref{line:overlap}, each time the variable $t$ increments we can infer that all unscheduled jobs have their number of unscheduled predecessors halved. By our supposition, this entails $t$ can be incremented only  $\log(\alpha \delay{} \groupsize k \groupspeed k)$ times before all jobs are scheduled. Let $t_{\ell}$ be $\ell$\textsuperscript{th} value of $t$, for $\ell = 0, 1, 2, \ldots, \log(\alpha \delay{} \groupsize k)$. 
    \begin{claim}
        For any $\ell$, we have  $\displaystyle t_{\ell + 1} \le 3\ell \alpha \delay{} +  \frac{2 \cdot |\bigcup_i \bigcup_{\ell' = 0}^{\ell} V_{i,t_{\ell'}}| }{|\group k| \cdot \groupsize k} + \bar{\rho_k}$
    \end{claim}
    \begin{proof}
        We prove the claim by induction on $\ell$. For $\ell = 0$, the claim holds since the inital value of $t$ is $0$. Suppose the claim holds up to $\ell$. Then 
        \begin{align*}
            t_{\ell + 1} &\le t_{\ell} + \frac{2 \cdot |\bigcup_i V_{i,t_{\ell}}|}{|\group k| \cdot \groupsize k} + 3 \alpha \delay{}  &&\by{Lemma~\ref{lem:load_balance}}\\
            &\le 3 \ell \alpha \delay{} + \frac{2 \cdot |\bigcup_i \bigcup_{\ell' = 1}^{\ell-1} V_{i, t_{\ell'}}|}{|\group k| \cdot \groupsize k} + \frac{2 \cdot |\bigcup_i V_{i,t_{\ell}}|}{|\group k| \cdot \groupsize k} + 3 \alpha \delay{}  &&\by{I.H.} \\
            &\le 3 \alpha \delay{}(\ell + 1) +  \frac{2 \cdot |\bigcup_i \bigcup_{\ell' = 1}^{\ell} V_{i, t_{\ell'}}|}{|\group k| \cdot \groupsize k} %&&\by{} \bigcup_i V_{i,t_{\ell}} \cap \bigcup_i \bigcup_{\ell' = 1}^{\ell-1} V_{i, t_{\ell'}} = \varnothing 
        \end{align*}
        where the last line follows from the fact that $\bigcup_i V_{i,t_{\ell}}$ and $\bigcup_i \bigcup_{\ell' = 1}^{\ell-1} V_{i, t_{\ell'}}$ have no members in common.
    \end{proof}
    The claim is sufficient to establish the lemma.
\end{proof}

\bibliographystyle{alpha}
\bibliography{refs.bib}

\newcommand{\etalchar}[1]{$^{#1}$}
\begin{thebibliography}{RGDMML19}

\bibitem[AHC{\etalchar{+}}19]{andrio2019bioexcel}
Pau Andrio, Adam Hospital, Javier Conejero, Luis Jord{\'a}, Marc Del~Pino, Laia
  Codo, Stian Soiland-Reyes, Carole Goble, Daniele Lezzi, Rosa~M Badia, Modesto
  Orozco, and Josep Gelpi.
\newblock Bioexcel building blocks, a software library for interoperable
  biomolecular simulation workflows.
\newblock {\em Scientific data}, 6(1):1--8, 2019.

\bibitem[AK98]{ahmad+k:schedule}
Ishfaq Ahmad and Yu-Kwong Kwok.
\newblock On exploiting task duplication in parallel program scheduling.
\newblock {\em IEEE Transactions on Parallel and Distributed Systems},
  9(9):872--892, Sep. 1998.

\bibitem[Ban17]{bansal:survey}
Nikhil Bansal.
\newblock Scheduling open problems: Old and new.
\newblock The 13th Workshop on Models and Algorithms for Planning and
  Scheduling Problems (MAPSP 2017), 2017.

\bibitem[BGK96]{bampis+gk:schedule}
Evripidis Bampis, Aristotelis Giannakos, and Jean-Claude K{\"o}nig.
\newblock On the complexity of scheduling with large communication delays.
\newblock {\em European Journal of Operational Research}, 94:252--260, 1996.

\bibitem[BK09]{Bansal_2009}
Nikhil Bansal and Subhash Khot.
\newblock Optimal long code test with one free bit.
\newblock {\em 2009 50th Annual IEEE Symposium on Foundations of Computer
  Science}, pages 453--462, Oct 2009.

\bibitem[BNF15]{Bazzi_2015}
Abbas Bazzi and Ashkan Norouzi-Fard.
\newblock Towards tight lower bounds for scheduling problems.
\newblock {\em Lecture Notes in Computer Science}, page 118–129, 2015.

\bibitem[CB01]{chekuri-bender01}
Chandra Chekuri and Michael Bender.
\newblock An efficient approximation algorithm for minimizing makespan on
  uniformly related machines.
\newblock {\em Journal of Algorithms}, 41(2):212--224, November 2001.

\bibitem[CC21]{chandrasekaran+c:partition}
Karthekeyan Chandrasekaran and Chandra Chekuri.
\newblock Min-max partitioning of hypergraphs and symmetric submodular
  functions.
\newblock In {\em Proceedings of the Thirty-Second Annual ACM-SIAM Symposium on
  Discrete Algorithms}, SODA '21, page 1026–1038, USA, 2021. Society for
  Industrial and Applied Mathematics.

\bibitem[CDM17]{chlamtavc2017minimizing}
Eden Chlamt{\'a}{\v{c}}, Michael Dinitz, and Yury Makarychev.
\newblock Minimizing the union: Tight approximations for small set bipartite
  vertex expansion.
\newblock In {\em Proceedings of the Twenty-Eighth Annual ACM-SIAM Symposium on
  Discrete Algorithms}, pages 881--899. SIAM, 2017.

\bibitem[CS99]{chudak1999approximation}
Fabi{\'a}n~A. Chudak and David~B. Shmoys.
\newblock Approximation algorithms for precedence-constrained scheduling
  problems on parallel machines that run at different speeds.
\newblock {\em Journal of Algorithms}, 30(2):323--343, 1999.

\bibitem[DA98]{darbha+a:schedule}
Sekhar Darbha and Dharma~P. Agrawal.
\newblock Optimal scheduling algorithm for distributed-memory machines.
\newblock {\em IEEE Transactions on Parallel and Distributed Systems},
  9:87--95, 1998.

\bibitem[DKR{\etalchar{+}}20]{davies2020scheduling}
Sami Davies, Janardhan Kulkarni, Thomas Rothvoss, Jakub Tarnawski, and Yihao
  Zhang.
\newblock Scheduling with communication delays via lp hierarchies and
  clustering.
\newblock In {\em 2020 IEEE 61st Annual Symposium on Foundations of Computer
  Science (FOCS)}, pages 822--833, 2020.

\bibitem[DKR{\etalchar{+}}21]{davies_etal.commdelayrelated_SODA.21}
Sami Davies, Janardhan Kulkarni, Thomas Rothvoss, Jakub Tarnawski, and Yihao
  Zhang.
\newblock Scheduling with communication delays via lp hierarchies and
  clustering ii: Weighted completion times on related machines.
\newblock In {\em Proceedings of the 2021 ACM-SIAM Symposium on Discrete
  Algorithms (SODA)}, pages 2958--2977, 2021.

\bibitem[DKR{\etalchar{+}}22]{davies_etal.nonuniform_SODA.22}
Sami Davies, Janardhan Kulkarni, Thomas Rothvoss, Sai Sandeep, Jakub Tarnawski,
  and Yihao Zhang.
\newblock On the hardness of scheduling with non-uniform communication delays.
\newblock In {\em Proceedings of the 2022 ACM-SIAM Symposium on Discrete
  Algorithms (SODA)}, pages 2958--2977, 2022.

\bibitem[GCL18]{gao2018spotlight}
Yuanxiang Gao, Li~Chen, and Baochun Li.
\newblock Spotlight: Optimizing device placement for training deep neural
  networks.
\newblock In Jennifer Dy and Andreas Krause, editors, {\em Proceedings of the
  35th International Conference on Machine Learning}, volume~80 of {\em
  Proceedings of Machine Learning Research}, pages 1676--1684. PMLR, 10--15 Jul
  2018.

\bibitem[Gra69]{graham:schedule}
Ronald~L. Graham.
\newblock Bounds on multiprocessing timing anomalies.
\newblock {\em SIAM Journal on Applied Mathematics}, 17:416–429, 1969.

\bibitem[HLV94]{hoogeveen+lv:schedule}
J.A. Hoogeveen, Jan~Karel Lenstra, and Bart Veltman.
\newblock Three, four, five, six, or the complexity of scheduling with
  communication delays.
\newblock {\em Operations Research Letters}, 16(3):129 -- 137, 1994.

\bibitem[HSGL21]{hafeez2021towards}
Ubaid~Ullah Hafeez, Xiao Sun, Anshul Gandhi, and Zhenhua Liu.
\newblock Towards optimal placement and scheduling of dnn operations with
  pesto.
\newblock In {\em Proceedings of the 22nd International Middleware Conference},
  pages 39--51, 2021.

\bibitem[HSSW97]{Hall_1997}
Leslie~A. Hall, Andreas~S. Schulz, David~B. Shmoys, and Joel Wein.
\newblock Scheduling to minimize average completion time: Off-line and on-line
  approximation algorithms.
\newblock {\em Mathematics of Operations Research}, 22(3):513–544, Aug 1997.

\bibitem[Jaf80]{Jaffe_1980}
Jeffrey~M. Jaffe.
\newblock Efficient scheduling of tasks without full use of processor
  resources.
\newblock {\em Theoretical Computer Science}, 12(1):1–17, Sep 1980.

\bibitem[Li17]{Li17}
Shi Li.
\newblock Scheduling to minimize total weighted completion time via
  time-indexed linear programming relaxations.
\newblock In {\em 2017 58th {IEEE} Annual Symposium on Foundations of Computer
  Science, {FOCS} 2017}, pages 283--294, 2017.

\bibitem[LK78]{LenstraK78}
Jan~Karel Lenstra and A.~H. G.~Rinnooy Kan.
\newblock Complexity of scheduling under precedence constraints.
\newblock {\em Operations Research}, 26(1):22--35, 1978.

\bibitem[LK14]{liang+k:cloud}
Guanfeng Liang and Ula\c{s}~C. Kozat.
\newblock Fast cloud: Pushing the envelope on delay performance of cloud
  storage with coding.
\newblock {\em IEEE/ACM Transactions on Networking}, 22(6):2012–2025, Dec
  2014.

\bibitem[LPS{\etalchar{+}}22]{Liu_etal.linear.22}
Quanquan~C. Liu, Manish Purohit, Zoya Svitkina, Erik Vee, and Joshua~R. Wang.
\newblock Scheduling with communication delay in near-linear time.
\newblock In {\em STACS}, 2022.

\bibitem[LR02]{LR02}
Renaud Lepere and Christophe Rapine.
\newblock An asymptotic $\mathcal{O} (\ln{ \rho} / \ln{\ln{\rho} })
  $-approximation algorithm for the scheduling problem with duplication on
  large communication delay graphs.
\newblock In {\em Annual Symposium on Theoretical Aspects of Computer Science},
  pages 154--165. Springer, 2002.

\bibitem[LST90]{lenstra1990approximation}
Jan~Karel Lenstra, David~B. Shmoys, and {\'E}va Tardos.
\newblock Approximation algorithms for scheduling unrelated parallel machines.
\newblock {\em Mathematical Programming}, 46(1-3):259--271, Jan 1990.

\bibitem[MGP{\etalchar{+}}18]{mirhoseini:hierPlace}
Azalia Mirhoseini, Anna Goldie, Hieu Pham, Benoit Steiner, Quoc~V. Le, and Jeff
  Dean.
\newblock Hierarchical planning for device placement.
\newblock In {\em International Conference on Learning Representations}, 2018.

\bibitem[MH97]{munier+h:schedule}
Alix Munier and Claire Hanen.
\newblock Using duplication for scheduling unitary tasks on m processors with
  unit communication delays.
\newblock {\em Theoretical Computer Science}, 178(1):119 -- 127, 1997.

\bibitem[MK97]{munier+k:schedule}
Alix Munier and Jean-Claude K\"onig.
\newblock A heuristic for a scheduling problem with communication delays.
\newblock {\em Operations Research}, 45(1):145--147, 1997.

\bibitem[MPL{\etalchar{+}}17]{mirhoseini2017device}
Azalia Mirhoseini, Hieu Pham, Quoc~V. Le, Benoit Steiner, Rasmus Larsen,
  Yuefeng Zhou, Naveen Kumar, Mohammad Norouzi, Samy Bengio, and Jeff Dean.
\newblock Device placement optimization with reinforcement learning.
\newblock In {\em Proceedings of the 34th International Conference on Machine
  Learning, {ICML} 2017}, pages 2430--2439, 2017.

\bibitem[MRS{\etalchar{+}}20]{maiti_etal.commdelaydup_FOCS.20}
Biswaroop Maiti, Rajmohan Rajaraman, David Stalfa, Zoya Svitkina, and Aravindan
  Vijayaraghavan.
\newblock Scheduling precedence-constrained jobs on related machines with
  communication delay.
\newblock In {\em 2020 IEEE 61st Annual Symposium on Foundations of Computer
  Science (FOCS)}, pages 834--845, 2020.

\bibitem[Mun99]{munier1999approximation}
Alix Munier.
\newblock Approximation algorithms for scheduling trees with general
  communication delays.
\newblock {\em Parallel Computing}, 25(1):41--48, 1999.

\bibitem[MYS{\etalchar{+}}22]{mahgoub+ysmebc:cloud}
Ashraf Mahgoub, Edgardo~Barsallo Yi, Karthick Shankar, Eshaan Minocha, Sameh
  Elnikety, Saurabh Bagchi, and Somali Chaterji.
\newblock Wisefuse: Workload characterization and dag transformation for
  serverless workflows.
\newblock {\em Proceedings of the ACM on Measurement and Analysis of Computing
  Systems}, 6(2), June 2022.

\bibitem[Pic91]{picouleau1991two}
Christophe Picouleau.
\newblock {\em Two new NP-complete scheduling problems with communication
  delays and unlimited number of processors}.
\newblock Inst. Blaise Pascal, Univ., 1991.

\bibitem[PLW96]{palis1996task}
Michael~A. Palis, Jing-Chiou Liou, and David S.~L. Wei.
\newblock Task clustering and scheduling for distributed memory parallel
  architectures.
\newblock {\em IEEE Transactions on Parallel and Distributed Systems},
  7(1):46--55, 1996.

\bibitem[PY90]{papadimitriou1990towards}
Christos~H. Papadimitriou and Mihalis Yannakakis.
\newblock Towards an architecture-independent analysis of parallel algorithms.
\newblock {\em SIAM journal on computing}, 19(2):322--328, 1990.

\bibitem[QS02]{Queyranne_2002}
Maurice Queyranne and Maxim Sviridenko.
\newblock Approximation algorithms for shop scheduling problems with minsum
  objective.
\newblock {\em Journal of Scheduling}, 5(4):287–305, 2002.

\bibitem[RGDMML19]{rico-gallego+dml:hpc}
Juan~A. Rico-Gallego, Juan~C. D\'{\i}az-Mart\'{\i}n, Ravi~Reddy Manumachu, and
  Alexey~L. Lastovetsky.
\newblock A survey of communication performance models for high-performance
  computing.
\newblock {\em ACM Computing Surveys}, 51(6), jan 2019.

\bibitem[RS87]{rayward1987uet}
Victor~J Rayward-Smith.
\newblock {UET} scheduling with unit interprocessor communication delays.
\newblock {\em Discrete Applied Mathematics}, 18(1):55--71, 1987.

\bibitem[Sve10]{Svensson_2010}
Ola Svensson.
\newblock Conditional hardness of precedence constrained scheduling on
  identical machines.
\newblock {\em Proceedings of the 42nd ACM symposium on Theory of computing -
  STOC ’10}, page 745–754, 2010.

\bibitem[SW99]{schuurman+w:survey}
Petra Schuurman and Gerhard~J. Woeginger.
\newblock Polynomial time approximation algorithms for machine scheduling: ten
  open problems.
\newblock {\em Journal of Scheduling}, 2(5):203--213, 1999.

\bibitem[VLL90]{veltman+ll:schedule}
Bart Veltman, B.~J. Lageweg, and Jan Lenstra.
\newblock Multiprocessor scheduling with communication delays.
\newblock {\em Parallel Computing}, 16:173--182, 1990.

\bibitem[VVEI18]{versluis2018analysis}
Laurens Versluis, Erwin Van~Eyk, and Alexandru Iosup.
\newblock An analysis of workflow formalisms for workflows with complex
  non-functional requirements.
\newblock In {\em Companion of the 2018 ACM/SPEC International Conference on
  Performance Engineering}, pages 107--112, 2018.

\bibitem[WLT{\etalchar{+}}16]{wu+lthzlzj:cloud}
Guangyuan Wu, Fangming Liu, Haowen Tang, Keke Huang, Qixia Zhang, Zhenhua Li,
  Ben~Y. Zhao, and Hai Jin.
\newblock On the performance of cloud storage applications with global
  measurement.
\newblock In {\em 2016 IEEE/ACM 24th International Symposium on Quality of
  Service (IWQoS)}, pages 1--10, 2016.

\end{thebibliography}

\end{document}